\tikzset{
    myarrow/.style={
        draw=black!100,
        fill=red!100,
        single arrow,
        minimum height=3.5ex,
        single arrow head extend=1ex
    }
}
\tikzset{
    mydoublearrow/.style={
        draw=black!100,
        fill=red!100,
        double arrow,
        minimum height=6ex,
        double arrow head extend=.8ex
    }
}
\tikzset{->-/.style={decoration={markings, mark=at position 0.5 with {\arrow{stealth}}},postaction={decorate}},}
\tikzset{GaugeNode/.style={circle,draw,inner sep=0pt,minimum size=10mm}}
\tikzset{FrameNode/.style={rectangle,draw,inner sep=0pt,minimum size=9mm}}
\tikzset{token/.style={circle,double,draw=black!70,fill=black!50,inner sep=0pt,minimum size=3mm}}
\newdimen\shadedBaseline\shadedBaseline=-4mm
\newcommand\ShadedTableau[2][\relax]{%
  \begin{tikzpicture}[scale=0.4,draw/.append style={thick,black},baseline=\shadedBaseline]
    \ifx\relax#1\relax%
    \else 
      \foreach\bx in {#1} { \filldraw[blue!30]\bx+(-.5,-.5)rectangle++(.5,.5); }
    \fi
    \tableauRow=0
    \foreach \Row in {#2} {
       \tableauCol=1
       \foreach\k in \Row {
          \draw(\the\tableauCol,\the\tableauRow)+(-.5,-.5)rectangle++(.5,.5);
          \draw(\the\tableauCol,\the\tableauRow)node{\k};
          \global\advance\tableauCol by 1
       }
       \global\advance\tableauRow by -1
    }
  \end{tikzpicture}%
}
\newdimen\shadedBaseline\shadedBaseline=-4mm
\newcommand\ShadedTableauS[2][\relax]{%
  \begin{tikzpicture}[scale=0.15,draw/.append style={thick,black},baseline=\shadedBaseline]
    \ifx\relax#1\relax%
    \else 
      \foreach\bx in {#1} { \filldraw[blue!30]\bx+(-.5,-.5)rectangle++(.5,.5); }
    \fi
    \tableauRow=0
    \foreach \Row in {#2} {
       \tableauCol=1
       \foreach\k in \Row {
          \draw(\the\tableauCol,\the\tableauRow)+(-.5,-.5)rectangle++(.5,.5);
          \draw(\the\tableauCol,\the\tableauRow)node{\k};
          \global\advance\tableauCol by 1
       }
       \global\advance\tableauRow by -1
    }
  \end{tikzpicture}%
}
\newdimen\shadedBaseline\shadedBaseline=-4mm
\newcommand\ShadedTableauRed[2][\relax]{%
  \begin{tikzpicture}[scale=0.4,draw/.append style={thick,black},baseline=\shadedBaseline]
    \ifx\relax#1\relax%
    \else 
      \foreach\bx in {#1} { \filldraw[red!30]\bx+(-.5,-.5)rectangle++(.5,.5); }
    \fi
    \tableauRow=0
    \foreach \Row in {#2} {
       \tableauCol=1
       \foreach\k in \Row {
          \draw(\the\tableauCol,\the\tableauRow)+(-.5,-.5)rectangle++(.5,.5);
          \draw(\the\tableauCol,\the\tableauRow)node{\k};
          \global\advance\tableauCol by 1
       }
       \global\advance\tableauRow by -1
    }
  \end{tikzpicture}%
}
\newcommand\ShadedTableauR[3][\relax]{%
  \begin{tikzpicture}[scale=0.4,draw/.append style={thick,black},baseline=\shadedBaseline]
    \ifx\relax#1\relax%
    \else 
      \foreach\bx in {#1} { \filldraw[blue!30]\bx+(-.5,-.5)rectangle++(.5,.5); }
    \fi
    \tableauRow=0
    \foreach \Row in {#3} {
       \tableauCol=1
       \foreach\k in \Row {
          \draw(\the\tableauCol,\the\tableauRow)+(-.5,-.5)rectangle++(.5,.5);
          \draw(\the\tableauCol,\the\tableauRow)node{\k};
          \global\advance\tableauCol by 1
       }
       \global\advance\tableauRow by -1
    }
    \ifx\relax#2\relax%
    \else 
      \foreach\bx in {#2} { \draw[pattern=north east lines,pattern color=black]\bx+(-.5,-.5)rectangle++(.5,.5); }
    \fi
    \tableauRow=0
    \foreach \Row in {#3} {
       \tableauCol=1
       \foreach\k in \Row {
          \draw(\the\tableauCol,\the\tableauRow)+(-.5,-.5)rectangle++(.5,.5);
          \draw(\the\tableauCol,\the\tableauRow)node{\k};
          \global\advance\tableauCol by 1
       }
       \global\advance\tableauRow by -1
    }
  \end{tikzpicture}%
}
\DeclareMathOperator{\Tr}{Tr}
\DeclareMathOperator{\rk}{rk}
\DeclareMathOperator{\ch}{ch}
\DeclareMathOperator{\Hilb}{Hilb}
\DeclareMathOperator{\PL}{PL}
\DeclareMathOperator{\Ind}{Ind}
\DeclareMathOperator{\Det}{Det}
\DeclareMathOperator{\Td}{Td}
\DeclareMathOperator{\vol}{vol}
\newcommand{\ie}{\textit{i.e. }}
\newcommand{\eg}{\textit{e.g. }}
\newcommand{\git}{/\!\!/}
\newcommand{\de}{{\text d}}
\newcommand{\pd}{{\partial}}
\newcommand{\iu}{{\text i}}
\newcommand{\eu}{{\text e}}
\newtheorem{theorem}{Theorem}
\newtheorem{proposition}{Proposition}
\newtheorem{conjecture}{Conjecture}
\newtheorem{example}{Example}
\DeclareMathOperator{\Hom}{Hom}
\DeclareMathOperator{\End}{End}
\title{Defects, nested instantons and \\ comet shaped quivers}
\date{}
\author{G.~Bonelli, N.~Fasola and A.~Tanzini}
\affiliation{S.I.S.S.A. - Scuola Internazionale di Studi Scientifici Avanzati\\ via Bonomea, 265 - 34136 Trieste ITALY}
\affiliation{I.N.F.N. -- Sezione di Trieste}
\affiliation{I.G.A.P. -- Institute for Geometry and Physics\\ via Beirut, 4 -34100 Trieste ITALY}
\abstract{We introduce and study a surface defect in four dimensional gauge theories supporting nested instantons with respect to the parabolic reduction of the gauge group at the defect.
This is engineered from a D3/D7-branes system on a non compact Calabi-Yau threefold $X$. For $X=T^2\times T^*{\mathcal C}_{g,k}$, the product of a two torus $T^2$ times the 
cotangent bundle over a Riemann surface ${\mathcal C}_{g,k}$ with marked points, we propose an effective theory in the limit of small volume of ${\mathcal C}_{g,k}$ given as a 
comet shaped quiver gauge theory on $T^2$, 
the tail of the comet being made of a flag quiver for each marked point
and 
the head describing the degrees of freedom due to the genus $g$. 
Mathematically, we obtain for a single D7-brane
conjectural  explicit formulae for the virtual equivariant elliptic genus of a certain bundle over the moduli space of the nested Hilbert scheme of points on the affine plane.
A connection with elliptic cohomology of character varieties and an elliptic version of modified Macdonald polynomials naturally arises.}
\begin{document}
\maketitle
\section{Introduction and discussion}

The study of defects can be used to characterise the behaviour of physical theories and the related mathematical structures. In this paper we are interested in {\it surface defects} in four-dimensional supersymmetric gauge theories,
namely real codimension two submanifolds were a specific reduction of the gauge connection takes place. This kind of defects has been widely investigated in many contexts from various different perspectives. 
The study of the r\^ole of defects in the classification of the phases of gauge theories was pioneered by 't Hooft \cite{thooft}. 
Surface defects were introduced by Kronheimer and Mrowka \cite{KM1,KM2} in the study of Donaldson invariants, while their r\^ole in the context of Geometric Langlands correspondence was emphasized in \cite{Gukov:2006jk}.
The correspondence with two-dimensional conformal field theories \cite{AGT} prompted a systematic analysis of surface defects and highlighted their relevance for quantum integrable systems \cite{Alday:2009fs,Nekrasov:2017rqy} and for 
the study of isomonodromic deformations and Painlev\'e equations \cite{Gamayun:2012ma,Bonelli:2016qwg,Bonelli:2019boe}. In this paper we introduce and study surface defects supporting {\it nested instantons} with respect to the parabolic reduction of the gauge group at the defect.
These defects are engineered from a D7/D3 brane system on a local compact complex surface $S$. The brane engineering naturally leads to a description of these defects and their effective supersymmetric field theories in terms of  moduli spaces of representations of 
quivers in the category of vector spaces, the objects being the branes and morphisms the open strings suspended among them.  
Supersymmetric partition functions of these systems provide conjectural formulae for topological invariants of these moduli spaces, more precisely, since these are generically not smooth, for {\it virtual} invariants of them.

The brane system we consider is D7/D3 on a local four-fold embedded in the ten dimensional IIB superstring supersymmetric background, the D3 branes wrapping the non-contractible cycle $S$.
The D3 branes effective theory is the  topologically twisted Vafa-Witten (VW) theory 
\cite{Vafa:1994tf}
with two extra chiral multiplets in the fundamental describing the D7/D3 open string sector.
The D7 branes gauge theory is related to (equivariant) Donaldson-Thomas theory \cite{Donaldson:1996kp} on the fourfold.
Actually, we consider these theories in a non-trivial $\Omega$-background corresponding to the 
equivariant parameters associated to rotations along the non-compact directions of the fourfold. This lead to a refinement of the above mentioned gauge theories. We focus on the case $S=T^2\times \mathcal{C}$, the last being a Riemann surface with punctures $\{p_i\}$.
Surface operators of this four-dimensional gauge theory are real codimension two defects located at $T^2\times\{p_i\}$.

The effective theory describing the dynamics of such surface defects is obtained in the limit of small area of $\mathcal{C}$ and turns out to be a quiver gauged linear sigma model which flows in the infrared to a non-linear sigma model of maps
from $T^2$ to the moduli space of {\it nested instantons}. This is a generalisation of the usual ADHM instanton moduli space, structured on  the decomposition of the gauge connection at the surface defect.
It is obtained from the usual ADHM instanton moduli space by implementing a suitable orbifold action which generates the fractional fluxes of the gauge field at the defect.
The partition function of the D7/D3 effective theory computes the equivariant (virtual) elliptic genus of this moduli space in presence of matter content dictated by the topology of $\mathcal{C}$, which, for genus $g$ amounts to 
$g$-hypermultiplets in the adjoint representation. Their contribution is encoded in a bundle ${\mathcal V}_g$ over the moduli space of nested instantons.
The general formula for the elliptic genus is \eqref{216} which, in the particular case $r=1$ and $k=1$, calculates 
the virtual elliptic genus of the bundle ${\mathcal V}_g$ over the nested Hilbert scheme of points on ${\mathbb C}^2$. The explicit combinatorial expression of \eqref{216}  is given by \eqref{ZT2-g-k-full} in terms of nested partitions. 

We also study the circle reduction of this system, which leads to a T-dual D6/D2 quantum mechanics. In this case, we find that the generating function of the defects, obtained by summing over all possible decompositions of the connection
at the puncture, or in other terms over all possible tails of the quiver, displays a very nice polynomial structure in the equivariant parameters.

The method we used to compute the partition function of the D-brane system is twofold. 
One, worked out in section 3.1,
makes use of  superlocalisation formulae \cite{Bruzzo:2002xf} directly leading to a sum over fixed points with weights computed from the character
of the torus action on the nested instanton moduli space.
An alternative derivation is performed in section 3.2, where
the $T^2$ partition function is evaluated via a higher dimensional contour integral {\it \`a la} \cite{Moore:1998et}.
This can
be also prescribed via Jeffrey-Kirwan residue method
\cite{Benini:2013nda,Benini:2013xpa}, as it was used in the study of D1/D7 BPS  bound state counting on $\mathbb{C}^3$  in  \cite{Benini:2018hjy}. We remark that although
the residue method is computationally more demanding,
 it has the advantage of allowing the study of wall-crossing among spaces 
with different stability conditions by changing the integration contour 
\cite{Bonelli:2013mma,Ashok:2018zxp}.

%

When one considers a single D7 brane, the nested instanton moduli space reduces to the nested Hilbert scheme of points on $\mathbb{C}^2$. Our brane construction provides a conjectural description of this space as the moduli space of representations of the quiver considered in Section 2.6.  Moreover, in this case the summation over the tails of the quiver gives rise to polynomials related to the modified Macdonald polynomials, and the whole partition function is related to the generating function 
introduced in \cite{hausel2011} to describe the cohomology of character varieties. 
The analog result for the full $T^2$ partition function gives rise to 
special combinations of elliptic functions which can be regarded 
as an elliptic lift of these polynomials. We display few examples in equations
\eqref{duepall},\eqref{trepall},\eqref{quattropall}. These formulae 
should encode the elliptic cohomology of character varieties and
can be viewed as an elliptic virtual 
refinement of the generating function of \cite{hausel2011}. 
We remark that the D6/D2 quantum mechanical system and its relation with \cite{hausel2011}  was studied in \cite{Chuang:2013wpa} via a different approach based on topological string amplitudes on orbifold Calabi-Yau. 

The relation with character varieties can be understood from the fact that Vafa-Witten theory on $S=T^2\times \mathcal{C}$ is known to reduce in the small $\mathcal{C}$ limit to a GLSM on $T^2$ with target space the Hitchin
moduli space over $\mathcal{C}$ \cite{Bershadsky:1995vm}. This in turn is homeomorphic \cite{Simpson} to the character variety of    $\mathcal{C}$, namely the moduli space of representations of the first fundamental group of  $\mathcal{C}\backslash \{p_i\}$  into $GL_n(\mathbb{C})$
with fixed semi-simple conjugacy classes at the punctures.

There are some open questions to be discussed about the above construction.
Actually, the 2d $(2,0)$ D3/D7 quiver gauge theory that we consider is
{\it anomalous}, the D3/D7 open string modes breaking $(2,2)$ to $(2,0)$ and generating an R-symmetry anomaly.
Indeed  instantons in the D7 brane gauge theory are sourced from D3 branes.
The mathematical counterpart is that Donaldson-Thomas (DT) theory on fourfolds has positive virtual dimension and requires the insertion of observables to produce the appropriate measure on the moduli space \cite{cao,Cao:2017swr}.
To cure this, we introduce new fields with opposite representations with respect to the gauge group and global symmetries. These are sources of the insertion of suitable observables which compensate the R-symmetry anomaly.
Actually, the extra fields we consider can be thought as arising from coupling of D3 branes to  $\overline{\rm D7}$-branes. It was recently conjectured \cite{Sen-slides}, that  
$\overline{\rm D7}$/D7 system undergoes tachyon condensation leaving behind 
D3-branes. This proposal is a generalisation of the known condensation \cite{Sen:1998sm} of $\overline{\rm D5}$/D5 into D3s.
Indeed in our calculations we find that, at special values of the equivariant parameters,
the contribution of the D3/$\overline{\rm D7}$ and D3/D7 modes to the elliptic genus cancels out, in line with the above expectations.
It would be extremely interesting to further analyse a possible application of our technique to the string field theoretic description of D-branes/anti D-branes annihilation.

The mathematical implication of all this is that DT theory on the local surface four-fold should reduce to VW theory on the complex surface $S$ itself,
the corresponding partition function providing conjectural formulae for VW invariants on $S$ in presence of surface defects. We aim to further investigate this reduction in the future and to analyse the elliptic genus
of the nested instanton moduli space and in particular of the nested Hilbert scheme of points on toric surfaces. This can be obtained via gluing the contributions of the local patches \cite{localising,Bershtein:2015xfa,Bershtein:2016mxz}.
Let underline that our computations concern a {\it refined} version of VW theory,
a refinement being given by the mass $m$ of the adjoint hypermultiplet.
Therefore, by studying the limit at $m\to\infty$, with appropriately rescaled gauge coupling, we reduce to pure twisted ${\mathcal N}=2$ gauge theory computing higher rank equivariant Donaldson invariants of $S$.
Moreover, while in this paper we considered $S=T^2\times {\mathcal C}$, non trivial elliptic fibrations or other product geometries can be studied. In this way our approach could be used to generalise the results on 
Donaldson invariants of \cite{Marino:1998bm,Lozano:1999us}.
The general modular properties of these generating functions are worth to be analysed
\cite{Vafa:1994tf,Manschot:2017xcr}.

The surface defects considered in this paper are directly related to Hitchin system with regular singularities. It would be obviously interesting to consider the case of irregular singularities, in particular the ones related to
Argyres-Douglas points of gauge theories \cite{Bonelli:2011aa,Bonelli:2016qwg}, and investigate their r\^ole and contribution to the above mentioned differential invariants
\cite{Moore:2017cmm}.

Moreover, the
relation of our results to representation theory and quantum integrable systems
should be explored, in particular investigating whether the cohomology of the
nested instanton moduli space hosts representations of suitable infinite dimensional Lie algebrae, generalising the results of \cite{nakajima1994,Schiffmann2013,maulik-okounkov}.
Also the
characterisation of the polynomials appearing in the quantum mechanical limit is to be worked out, by studying recurrence relations and/or difference equations
they satisfy. 
This would possibly open a window on the relation with quantum integrable systems. For example, in \cite{Bonelli:2014iza,Bonelli:2015kpa}, the relation between 
D1/D5 systems on ${\mathbb P}^1$ and quantum Intermediate Long Wave hydrodynamics was studied, finding that the mirror of the associated GLSM provides the Bethe ansatz equations of the latter. Analogous relations between the mirror of the 2d comet-shaped quiver gauge theories and suitable integrable systems are worth to be explored.
Finally, the 
F-theory uplift of our construction would help to study dualities of these defect gauge theories and to generalise them to other gauge groups.

The rest of the paper is structured in two main Sections. In the first one we provide the general brane set-up and a detailed derivation of the comet-shaped quiver from D-branes on orbifolds. We then discuss the reduction to quiver quantum mechanics and  
the relation to character varieties. In the second, we perform explicit computations of the relevant partition functions and the relation with modified Macdonald polynomials of the reduced quantum mechanical quiver theory.

{\bf Acknowledgments}

We thank 
U. Bruzzo,
M. Cirafici,
E. Diaconescu, 
O. Foda,
M.L. Frau, A. Lerda, C. Maccaferri, 
M. Manabe,
M. Marino, F. Rodriguez-Villegas and R. Thomas
for useful discussions. 
The work of G.B. is partially supported by INFN - ST\&FI.
The work of N.F. and A.T. is partially supported by INFN - GAST.
The work of A.T. is supported by PRIN project "Geometria delle variet\`a algebriche". The work of G.B. is supported by the PRIN project "Non-perturbative Aspects Of Gauge Theories And Strings".

\section{D-branes, geometry and quivers}\label{sec:branes}
\subsection{Preliminaries}

Let us start by discussing the geometric D-branes set-up.

We consider a Type IIB supersymmetric general background built as the total space of a rank three complex vector bundle $V^3_S$ on a complex surface $S$
\begin{equation} 
X_5 =  {\rm tot}\left(V^3_S\right)
\end{equation}
where supersymmetry requires the Calabi-Yau condition ${\rm det}V^3_S={\mathcal K}_S$, where ${\mathcal K}_S$ is 
the canonical bundle over $S$.
To place a D3-D7 system in such a background, we assume that
$V^3_S$ has the following reduced structure
$$
V^3_S={\mathcal K}_S\otimes {\rm det}^{-1}V^2_S\oplus V^2_S
$$
where the rank two bundle $V^2_S$ is otherwise unconstrained.

Let us therefore consider the theory of 
$N$ D3-branes wrapping the complex surface $S$
in the background of
$r$ D7-branes along the local surface fourfold ${\rm tot}\left(V^2_S\right)$.

The low-energy dynamics of the $N$ D3-branes can be obtained as usual by dimensional reduction of the 
${\mathcal N}=1$ $D=10$ supersymmetric Yang-Mills theory on $X_5$ down to their world-volume.
This produces a topologically twisted version of the ${\mathcal N}=4$ $D=4$ 
theory on $S$ 
\cite{Bershadsky:1995qy}
whose boson content is given by the gauge connection ${\mathcal A}$, a section $\Phi_{\mathcal L}$ of the line bundle ${\mathcal L}={\mathcal K}_S\otimes {\rm det}^{-1}V^2_S$ and a doublet $\Phi_{V^2}$ which is a section of $V^2_S$, the latter describing the transverse motion of the D3-branes in the ambient $X_5$.
All these fields are in the adjoint representation of the $U(N)$ gauge group.
The above set-up reduces to the Vafa-Witten topologically twisted ${\mathcal N}=4$ $D=4$
on $S$ if the rank two vector bundle $V^2_S={\mathbb C}^2$ is trivial and therefore
$X_5={\rm tot}\left({\mathcal K}_S\right)\oplus{\mathbb C}^2$.
In this case, the above construction indeed gives 
the gauge connection ${\mathcal A}$ on $S$, a complex $(2,0)$-form
$\Phi_{S}$ valued in the fiber of ${\mathcal K}_S$ describing the transverse D3-branes motion within the local surface $X_3={\rm tot}\left({\mathcal K}_S\right)$, 
while the motion along the remaining
${\mathbb C}^2$ transverse directions is described by two other complex scalars
$B_i$, with $i=1,2$. 

The effect of the additional $r$ background D7-branes on the D3-branes is kept into account by a further set
of two complex scalars $I$ and $J$ in the 
bifundamental $N\times \bar r$ and $r\times \bar N$ of the gauge symmetry group
$U(N)$ and flavour global $U(r)$ group. These are sections respectively of ${\mathcal O}_S$ and ${\rm det} V^2_S$ in general. This follows from the fact that these fields 
are in the positive chiral spinorial representation of the transverse $SO(4)$ and are therefore sections of  ${\mathcal S}_+ \sim \Lambda^{(even,0)}(V^2_S)$, for $S$ a Kahler surface. 

The continuous symmetries of this geometric set-up in the transverse directions to the D3-branes are the $\left({\mathbb C}^*\right)^3$-action on the ${\mathbb C}^3$ 
fiber of $V_S^3$ with respective weights $(\epsilon_1,\epsilon_2,m)$. 
These are the global symmetries of the 
gauge theory on $S$ which can be uses to define the relevant $\Omega$-background 
after turning on the relative background gauge fields. The parameter $m$ introduces a mass for the adjoint hypermultiplet of the four dimensional theory inducing the supersymmetry breaking from the ${\mathcal N}=4$ Vafa-Witten theory to its ${\mathcal N}=2^*$ refined version.

In the following, we will study the above general system in the case in which
the complex surface is in the product form $S=T^2\times {\mathcal C}$, where ${\mathcal C}$ is a Riemann surface and $V^2_S$ is trivial.
In this case, the canonical bundle over $S$ reduces to the holomorphic cotangent bundle  over ${\mathcal C}$
and 
$$X_5={\rm tot}\left(T^*{\mathcal C}\right)\times T^2\times {\mathbb C}^2\, .$$

In order to introduce surface defects in the gauge theory, we're going to 
generalise the above set-up to the case in which ${\mathcal C}$ is 
punctured at the points where the defects are located. 
More precisely, the parabolic reduction of the gauge bundle at the punctures 
is encoded in an orbibundle structure. The effective two dimensional field theory 
describing the dynamics of the defect is obtained from the above set-up
in the chamber of small ${\mathcal C}$ volume leading to a quiver gauged linear sigma model describing the relevant open string modes. In the IR this reduces to a non-linear sigma model of maps from $T^2$ to the moduli space of representation of the quiver above.

\subsection{D-branes on the orbicurve and defects}

Let us now generalise the above setup to the case in which ${\mathcal C}$
is an orbicurve, that is a Riemann surface with elliptic singular points.
This means that the local geometry at some marked points 
$\{P_\alpha\}$ of ${\mathcal C}$
is that of the ${\mathbb Z}_{s_\alpha}$ quotient of a disk $D$ acted by
$z_\alpha\to\omega_\alpha z_\alpha$ with $\omega_\alpha^{s_\alpha}=1$.

Placing D-branes on an orbicurve consists in excising a regular cylinder out of the 
total space of the corresponding regular vector bundle 
and prescribing new local transition functions 
defining the lift of the discrete group action
to the total space of the vector bundle. 
This operation extends the vector bundle to an orbibundle.

Let us therefore consider the geometry of the D-branes in the vicinity of a marked point $P$ of order $s$ with local coordinate $z$.
The action on the D-brane Chan-Paton factors induces a modification of the gauge symmetry due to D-branes fractionalisation \cite{Douglas:1996sw}. Let 
$\gamma_\ell$
be the number of D-branes in the $\ell^{th}$ sector, namely the one corresponding to the charge $\ell$ representation $z^\ell$ of ${\mathbb Z}_{s}$. 
This corresponds to
prescribe the new transition function at the excised disk as 
$$g_P=\bigoplus_{\ell=0}^{s-1} z^\ell {\bf 1}_{\gamma_\ell}$$
and, correspondingly the local behaviour of the gauge connection
as 
$${\mathcal A}_P= g_P^{-1} d g_P=\left(\frac{d\, z}{z}\right)\bigoplus_{\ell=0}^{s-1} \ell \,\,{\bf 1}_{\gamma_\ell}
=
\left(\frac{d\, \tilde z}{\tilde z}\right)\bigoplus_{\ell=0}^{s-1} \frac{\ell}{s} {\bf 1}_{\gamma_\ell}\, ,
$$
where $\tilde z=z^{s}$.
This finally induces the local prescription on the curvature ${\mathcal F}=d{\mathcal A}$ as
$${\mathcal F}_P/(2\pi)= \sqrt{-1}\delta(\tilde z)d{\tilde z}\wedge d\bar {\tilde z}\bigoplus_{\ell=0}^{s-1} \left(\frac{\ell}{s}\right) {\bf 1}_{\gamma_\ell}$$
which implements the realisation of the real co-dimension two defect in the four-dimensional gauge theory. Let us remark that from the algebraic geometry viewpoint this corresponds to study sheaves
on root stacks, which is a natural framework were fractional Chern classes appear \cite{Bruzzo:2009uc}.

One can better describe the resulting gauge theory structure of the local D-brane configuration from the viewpoint of the geometry of the covering disk with local coordinate $\tilde z=z^{s}$. 

\begin{figure}[H]
\centering
\begin{tikzpicture}[x=.3cm,y=.3cm]
\draw[line width=1] (10 ,0) arc (0:100:10 );
\draw[dashed,line width=1] ({10 *cos(100)},{10 *sin(100)}) arc (100:260:10 );
\draw[line width=1] ({10 *cos(260)},{10 *sin(260)}) arc (260:310:10 );
\draw[line width=1] ({10 *cos(310)},{10 *sin(310)}) arc (310:360:10 );
\draw[red] (9,0) arc (0:50:9);
\draw[red] (8.5,0) arc (0:50:8.5);
\draw[blue] (8,0) arc (0:100:8);
\draw[blue] (7.5,0) arc (0:100:7.5);
\draw[blue] (7,0) arc (0:100:7);
\draw[green] (3,0) arc (0:310:3);
\draw[green] (2.5,0) arc (0:310:2.5);
\draw (2 ,0) arc (0:360:2 );
\draw (1.5 ,0) arc (0:360:1.5 );
\foreach \s in {1,...,7}
{
	\draw[violet,dotted] ({7-\s/2},0) arc (0:260:{7-\s/2});
}
\draw[line width=1] (0,0) to (13 ,0);
\draw[red,dotted,line width=1] (0,0) to ({13 *cos(50)},{13 *sin(50)});
\draw[blue,dotted,line width=1] (0,0) to ({13 *cos(100)},{13 *sin(100)});
\draw[violet,dotted,line width=1] (0,0) to ({13 *cos(260)},{13 *sin(260)});
\draw[green,dotted,line width=1] (0,0) to ({13 *cos(310)},{13 *sin(310)});


\draw ({12*cos(25)},{12*sin(25)}) node[right] {$\textcolor{red}{\gamma_0}+\textcolor{blue}{\gamma_1}+\textcolor{violet}{\cdots}+\textcolor{green}{\gamma_{s-2}}+\gamma_{s-1}=n_0\quad \boldsymbol{(=n)}$};
\draw ({12.5*cos(60)},{12.5*sin(60)}) node[above,fill=white] {$\textcolor{blue}{\gamma_1}+\textcolor{violet}{\cdots}+\textcolor{green}{\gamma_{s-2}}+\gamma_{s-1}=n_1$};
\draw ({12*cos(290)},{12*sin(290)}) node[below] {$\textcolor{green}{\gamma_{s-2}}+\gamma_{s-1}=n_{s-2}$};
\draw ({12*cos(335)},{12*sin(335)}) node[right] {$\gamma_{s-1}=n_{s-1}$};
\end{tikzpicture}\caption{The ``brane cake'' describing the covering structure the D3 branes on the local orbifold disk.}\label{fig:cake}
\end{figure}
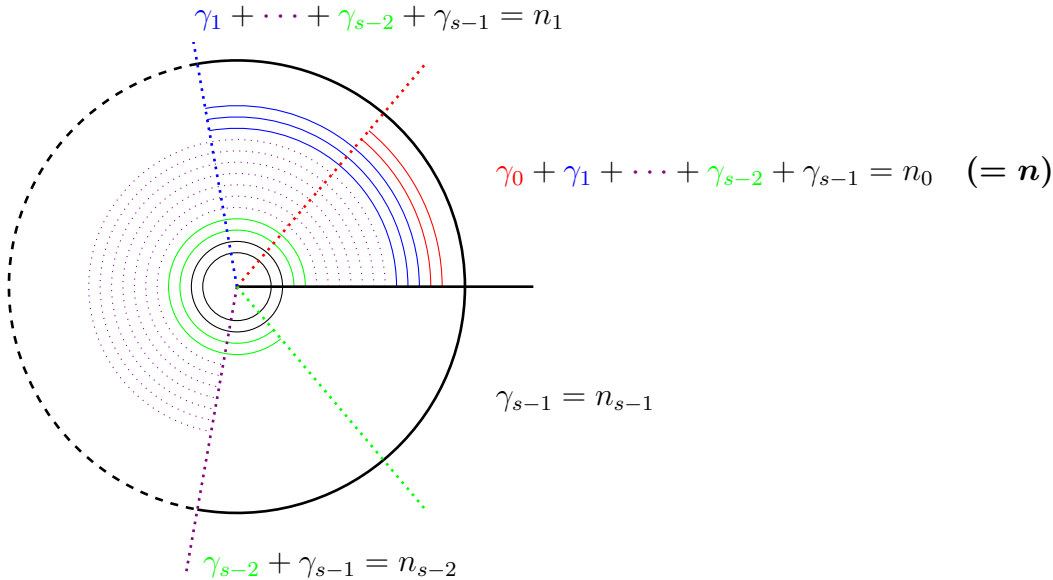

This is the $s$-covering of the quotient disk, that is given by the collection of 
$s$ consecutive Riemann sheets. 
The $\gamma_\ell$ D-branes in the $\ell^{th}$ sector and their images span $\ell$
Riemann sheets. As a consequence the $\ell^{th}$ Riemann sheet is spanned by an overall number of $n_\ell=\sum_{\ell'=\ell}^{s-1} \gamma_{\ell'}$ D-branes.
Let us notice that the outward of the quotient disk is joined to the rest of the Riemann surface by the first Riemann sheet which is consistently covered by all the 
$n_0=\sum_{\ell'=0}^{s-1}\gamma_{\ell'}=N$ D-branes. 

\subsection{Two dimensional quiver GLSM from the reduction to small ${\mathcal C}$ volume: bulk part}\label{sec:2d_bulk}

Let us consider now the reduction to small ${\mathcal C}$ volume of the system above.
This leaves behind a gauge theory on the leftover $T^2$ world volume whose spectrum can be computed by harmonic analysis. We denote by $g$ the genus of ${\mathcal C}$.

Let us first discuss the reduction on a regular Riemann surface and then the more general situation in which ${\mathcal C}$ is an orbicurve.

The complex scalars $I$ and $J$ get simple dimensional reduction and stay scalars in the bifundamental,
the gauge connection ${\mathcal A}$ on $S={\mathcal C}\times T^2$ leaves behind 
the gauge connection $A$ on $T^2$ and $g$ complex scalars in the adjoint, while other $g$ complex scalars in the adjoint arise from the reduction of the transverse field 
$\Phi_S$. These will be denoted as $B_3^{(i)}$ and $B_4^{(i)}$, where $i=1,\ldots, g$.

The other two complex scalar fields in the adjoint, namely $B_1$ and $B_2$, get simply dimensionally reduced.

This field content results in the quiver of figure \ref{fig:quiver_bulk}.

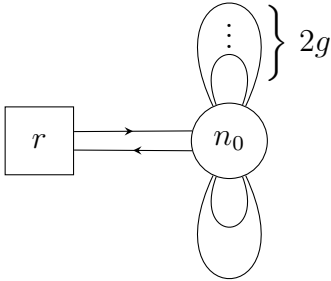
\begin{figure}[H]
\centering
\begin{tikzpicture}\vspace{-2mm}
\node[FrameNode](F0) at (10,0){$r$};
\node[GaugeNode](Gk0) at (2.5+10,0){$n_0$};
\node[](dots2) at (2.5+10,1.5){$\vdots$};
\node[] (g) at (3+10.4,1.3){$\bigg\}\ 2g$};
\draw[->-](F0.15) to (Gk0.165);
\draw[->-](Gk0.195) to (F0.345);
\draw[-](Gk0) to[out=65,in=115,looseness=12] (Gk0);
\draw[-](Gk0) to[out=70,in=110,looseness=7] (Gk0);
\draw[-](Gk0) to[out=65+180,in=115+180,looseness=12] (Gk0);
\draw[-](Gk0) to[out=70+180,in=110+180,looseness=7] (Gk0);
\end{tikzpicture}\vspace{-2mm}\caption{Quiver gauge theory arising from the compactification on $\mathcal C_{g,0}$.}\label{fig:quiver_bulk}
\end{figure}

The relations of this quiver can be read from the reduction of the F-term equations in the Appendix A \eqref{forse} and \eqref{si} by expanding in harmonic modes along the curve $\mathcal{C}$. More explicitely, the $\Phi_S$ field and the component
of the gauge connection $A_\mathcal{C}$ along $\mathcal{C}$  give rise to the $g$ hypers in the adjoint representation $(B_3^{(i)}, B_4^{(i)})$, where $i=1,\ldots, g$, obeying the BPS equations
\begin{eqnarray}\label{hypers}
&& [B_1,B_2] + IJ = 0\, , \, [B_3^i,B_4^j] = 0\\
&& [B_1,B_3^i] = 0 \, , \, [B_1,B_4^i] = 0 \, , \, [B_2,B_3^i] = 0 \, , \,  [B_2,B_4^i] =0 \nonumber \\
&& B_3^iI = 0 \, , \,  JB_3^i = 0 \, , \, B_4^iI = 0 \, , \,  JB_4^i = 0 \nonumber 
\end{eqnarray}
The above equations are equivalent to $g$ commuting copies of the ADHM equations for gauge theory with one adjoint hypermultiplet \cite{Bruzzo:2002xf}, as it can be shown by a simple squaring argument. 

In the general $\Omega$-background the supersymmetry of the D3-brane system reduced on $T^2$ is $(2,2)$ while the combined D3/D7-brane system reduced on $T^2$ has $(0,2)$ supersymmetry due to the presence of the chiral fields $I$ and $J$ and the above field content, augmented by the relevant fermions, form the corresponding  multiplets. 

Let us underline that this theory itself suffers of a $U(1)_R$-symmetry anomaly due to its chiral unbalanced field content. This can be immediately understood from the fact that the D3-branes profile produces an instanton background in the D7-brane gauge theory 
inducing chiral symmetry breaking. From the mathematical viewpoint it is known that the Donaldson-Thomas theory on fourfolds has positive virtual dimension which implies that one has to introduce observables matching the dimension counting.
We propose that the suitable set of observables is given by a compensating
sector of opposite charges -- given by $\bar I$, $\bar J$ and other fields associated to the $g$-hypers to be specified later -- which cancels the anomaly.
This sector may be interpreted as a background antiD7-brane system.

\subsection{Two dimensional GLSM of the defect: the nested instanton quiver}\label{sec:2d_GLSM}

When the curve ${\mathcal C}$ is extended to an orbicurve, at each orbifold point the gauge symmetry is reduced and further 2D degrees of freedom are present.
These correspond to the open strings stretching between the twisted D-branes and, from the gauge theory viewpoint, to the degrees of freedom defining the codimension two defect prescribed by the singular behaviour of the gauge curvature at the orbifold points. 

To obtain the effective low energy quiver description, we excise a disk around each puncture of ${\mathcal C}$ and discuss the 
the local behaviour of the D-branes system at the orbifold points
computing the associated low energy quiver gauge theory. 
We then glue back the disks to the bulk Riemann surface obtaining the full description of the gauge theory with defects reduced to two dimensions by the small ${\mathcal C}$-volume limit. This procedure is pictorially described in figure \ref{fig:disk_excision}.

\begin{figure}[H]
\centering
\includegraphics[page=1]{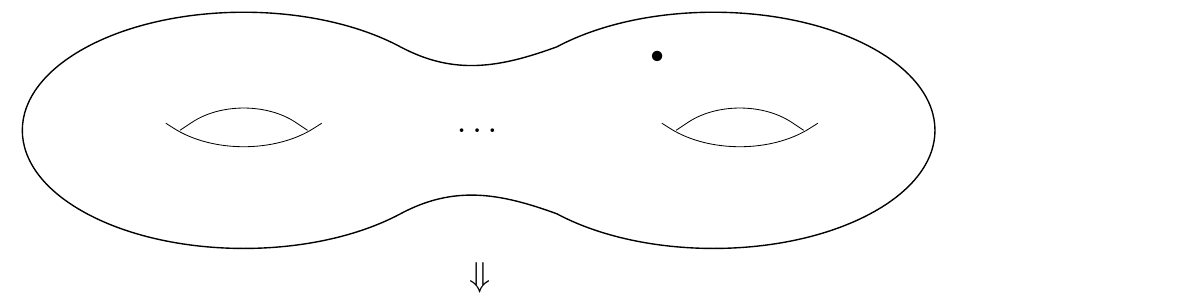}
\includegraphics[page=1]{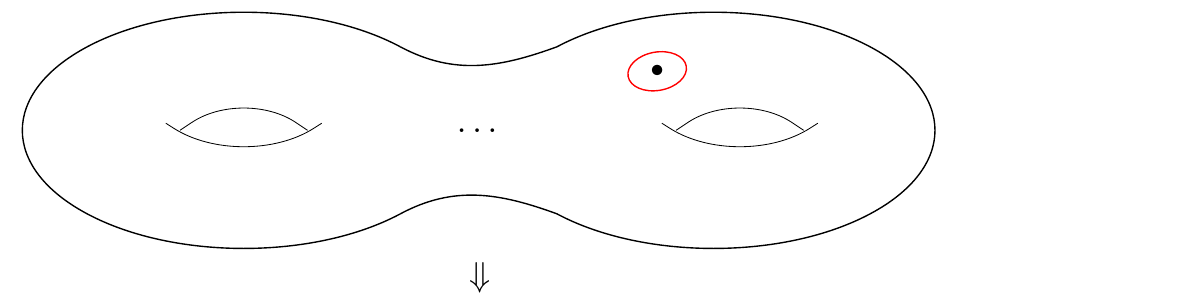}
\includegraphics[page=1]{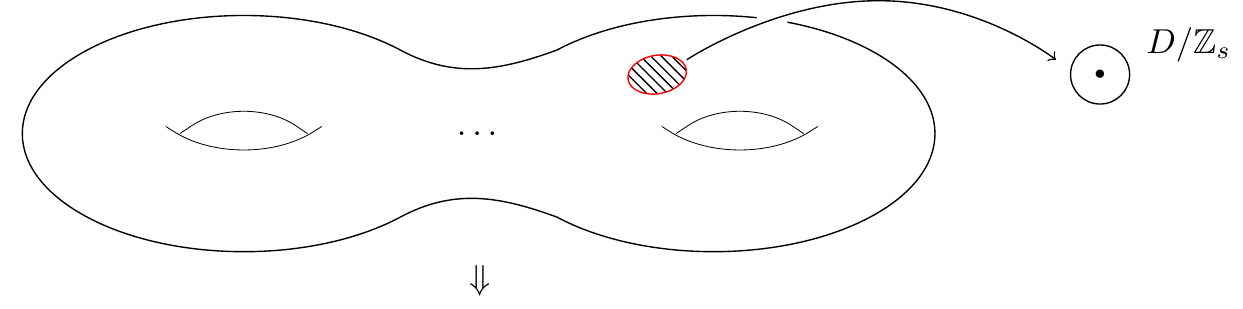}
\includegraphics[page=1]{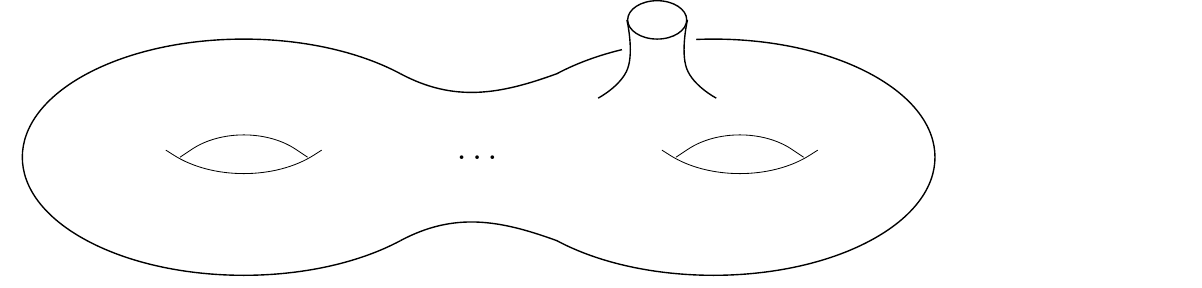}
\caption{Disk excision and gluing}\label{fig:disk_excision}
\end{figure}

The relevant open strings degrees of freedom can be inferred from the D-branes distribution as in the above figure \ref{fig:cake}. 
More precisely, see for example \cite{Fucito:2001ha,Tanzini:2002gp}, the Chan-Paton space of the D-brane system decomposes into irreducible representations $R_\ell$ of the local discrete group ${\mathbb Z}_s$ as
\begin{eqnarray}
{\mathbb V}&=&\sum_{\ell=0}^{s-1} {\mathbb V}_\ell\otimes R_\ell \label {deco3}\\
{\mathbb W}&=&\sum_{\ell=0}^{s-1} {\mathbb W}_\ell\otimes R_\ell \label {deco7}
\end{eqnarray}
where 
each of the D3 and D7 -brane charged sectors is denoted 
as 
\begin{equation}\label{vector37}
{\mathbb V}_\ell={\mathbb C}^{\gamma_\ell} \, , \, 
{\mathbb W}_\ell={\mathbb C}^{\beta_\ell}\, .
\end{equation}
As depicted in figure \ref{fig:cake}, the $\ell$-th Riemann sheet of the covering hosts a net number of $n_j\equiv\sum_{\ell=j}^{s-1}\gamma_{\ell}$ D3-branes and of 
$r_j\equiv\sum_{\ell=j}^{s-1}\beta_{\ell}$ D7-branes so that the open string degrees of freedom are represented as linear maps among the spaces
\begin{eqnarray}
{ V}_j&=&\sum_{\ell=j}^{s-1} {\mathbb V}_\ell \label {deco3'}\\
{W}_j&=&\sum_{\ell=j}^{s-1} {\mathbb W}_\ell \label {deco7'}
\end{eqnarray}

Let us now discuss the corresponding quiver gauge theory. 
This consists 
of a $(0,2)$ quiver gauge theory on $T^2$ with 
gauge group $\otimes_{j=0,\ldots, s-1} U(n_j)$, each node 
being coupled to two chiral multiplets in the adjoint 
$B_1^j,B_2^j\in {\rm End} V_j$
and each pair of successive nodes 
by a chiral in the bifundamental $F^j\in {\rm Hom}\left(V_j,V_{j+1}\right)$ for $j=0,\ldots, s-1$. 
The D3-D7 open strings modes are described by the linear maps $I^j\in{\rm Hom}\left(V_j,W_j\right)$ and $J^j\in{\rm Hom}\left(W_j,V_j\right)$.
Summarizing, the local D3-D7 system is effectively described by 
\begin{eqnarray}\label{campi37}
&&B_1^j,B_2^j\in {\rm End} V_j \, ,
F^j\in {\rm Hom}\left(V_j,V_{j+1}\right)\\
&&I^j\in{\rm Hom}\left(V_j,W_j\right)\, {\rm and} \,J^j\in{\rm Hom}\left(W_j,V_j\right)\nonumber
\end{eqnarray}
As is shown in the Appendix A these fields obey the relations
\begin{equation}\label{rels}
[B_1^j,B_2^j]+I^jJ^j=0\, , \quad B_1^jF^j-F^jB_1^{j+1}=0\, \quad B_2^jF^j-F^jB_2^{j+1}=0\, , \quad J^jF^j=0.
\end{equation}
Therefore, the resulting quiver describing the local D3-D7 system at the defect is 
given in figure \ref{fig:quiver_local}.

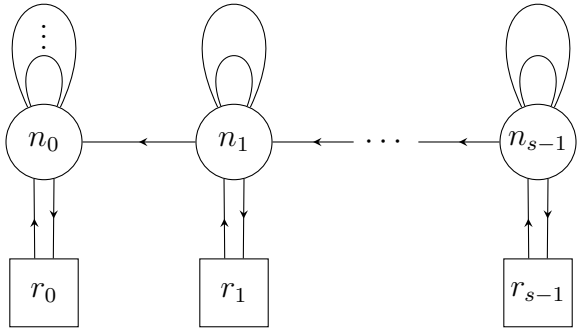
\begin{figure}[H]
\centering
\vspace{-5mm}
\begin{tikzpicture}
\node[FrameNode](F0) at (2.5+10,-2){$r_0$};
\node[FrameNode](F1) at (5+10,-2){$r_1$};
\node[FrameNode](Fn) at (9+10,-2){$r_{s-1}$};
\node[GaugeNode](Gk0) at (2.5+10,0){$n_0$};
\node[GaugeNode](Gk1) at (5+10,0){$n_1$};
\node[](label) at (7+10,0){$\cdots$};
\node[](dots2) at (2.5+10,1.5){$\vdots$};
\node[GaugeNode](Gkn) at (9+10,0){$n_{s-1}$};
\draw[->-](F0.15+90) to (Gk0.165+90);
\draw[->-](Gk0.195+90) to (F0.345+90);
\draw[->-](F1.15+90) to (Gk1.165+90);
\draw[->-](Gk1.195+90) to (F1.345+90);
\draw[->-](Fn.15+90) to (Gkn.165+90);
\draw[->-](Gkn.195+90) to (Fn.345+90);
\draw[->-](Gk1) to (Gk0);
\draw[->-](label) to (Gk1);
\draw[->-](Gkn) to (label);
\draw[-](Gk0) to[out=65,in=115,looseness=12] (Gk0);
\draw[-](Gk0) to[out=70,in=110,looseness=7] (Gk0);
\draw[-](Gk1) to[out=65,in=115,looseness=12] (Gk1);
\draw[-](Gk1) to[out=70,in=110,looseness=7] (Gk1);
\draw[-](Gkn) to[out=65,in=115,looseness=12] (Gkn);
\draw[-](Gkn) to[out=70,in=110,looseness=7] (Gkn);
\end{tikzpicture}\vspace{2mm}\caption{Quiver gauge theory arising from the compactification on $\mathcal C_{g,1}$.}\label{fig:quiver_local}
\end{figure}

The moduli space $\mathcal{N}_{r,\lambda,n,\mu}$ of its representations describes 
{\it nested instantons}. Indeed the $n$ D3 branes realise an $n$-instanton profile for the $U(r)$ D7 gauge fields, preserving the flag structure at the puncture.
The partitions $\lambda=(\lambda_1\ge\lambda_2\ge\ldots)$ of $r$ and $\mu=(\mu_1\ge\mu_2\ge\ldots)$ of $n$ describe  respectively the decomposition of the D7 and D3 Chan-Paton vector spaces into representations of the $\mathbb{Z}_s$ group.
More precisely, as shown in  fig. \ref{fig:quiver_local}, one gets the quiver of the flag manifold realised by the Chan Paton vector spaces of the D3 branes $V_{s-1}\subset V_{s-2}\subset\ldots\subset V_0$
with dimensions $n_j= n_0 - \sum_{l=1}^{j} \mu_l$ {\it framed} by the D7 branes vector spaces with dimensions $r_j=r_0 - \sum_{l=1}^{j} \lambda_l$.
The heights of the columns of each partition is obtained from an ordering of the data of the dimensions vector spaces $\beta_\ell$ and $\gamma_\ell$ of \eqref{vector37}. Indeed, these can be ordered
by using Weyl symmetry of D3 and D7 branes gauge groups such that $\gamma_0\ge\gamma_1\ge\ldots\ge\gamma_{s-1}$ and $\beta_0\ge\beta_1\ge\ldots\ge\beta_{s-1}$.

The moduli space of nested instantons has a natural projection to the standard ADHM instanton
moduli space $\mathcal{M}_{r,n}$
\begin{equation}
\pi : \mathcal{N}_{r,\lambda,n,\mu} \to \mathcal{M}_{r,n}
\end{equation}
 which is realised by setting all the open string twisted sectors to be empty, namely by setting to zero all the fields $F^j \, ,\, j=0,\ldots,s-1$ and $(I^j,J^j,B_1^j,B_2^j)$ for $ j=1,\ldots,s-1$.

\subsection{Relation to other quiver defect theories}

Some comments are in order regarding the quiver theory of the defect we obtain in our construction with respect to other quiver defect theories.
The quiver we study is derived from a Dp/Dp+4 system via an orbifold action which affects a {\it transverse} direction to both the brane types.
In this respect, it is different from the chain-saw quiver describing affine Laumon spaces \cite{Kanno:2011fw}, were the orbifold acts instead 
on the coordinates $B_1,B_2$ describing the motion of Dp branes {\it inside} the Dp+4. This induces a different quiver with a different set of relations.
A quiver which relates to the one in \cite{Kanno:2011fw}
can be obtained by considering a different specialization of the general geometric background 
for the D3/D7 system described in section 2.1. More precisely, 
one can consider $T^2\times X_6\times {\mathbb C}_{\epsilon_1}$, where
$X_6={\rm tot}\left[{\mathcal O}(p)\oplus{\mathcal O}(-p+2g-2)\right]_{{\mathcal C}_{g,k}}$ is the total space of a sum of two line bundles of the compensating degree on the orbicurve. 
In such a geometry we can consider the D3-branes along $T^2\times{\mathcal C}_{g,k}$ and the D7 say along $T^2\times Y_4\times {\mathbb C}_{\epsilon_1}$, where
$Y_4={\rm tot}\left[{\mathcal O}(p)\right]_{{\mathcal C}_{g,k}}$
and the fiber still hosts the torus action corresponding to the $\epsilon_2$-parameter of the Omega-background.
For $p>0$ in the vicinity of the orbifold points the geometry in the 
fiber direction is sensitive to the orbifold group. As a consequence, the 
corresponding modes in the open string sectors get twisted and the quiver changes by loosing an adjoint multiplet per node which gets a bifundamental,  as well as the flavoured fields $J_i$ will now point from the gauge node to the
nearby framing node. 
The resulting local quiver at the defect is then the chain-saw quiver. Correspondingly, the comet shaped quiver would in this case display tails given by chain-saw quivers.
This can be also obtained from a D1/D5 system with both D1 and D5 wrapping $\mathcal{C}_{g,k}$ via a double T-duality along transverse directions  to both. 

Since on the other hand both quivers are describing the parabolic reduction of the gauge connection on a surface defect, it is conceivable to expect that
a relation can be found between the associated partition functions  at least in some limit or suitable parametrisation. This could require non-trivial 
combinatorial identities on the partition functions themselves, similarly to what dicussed in \cite{Jeong:2018qpc} concerning the relation between orbifold and vortex-like defects.

Moreover, when decoupling the D7 branes by setting $I^j=0, J^j=0$, the description of the D3 branes at the defect lead to the quiver of fig. \ref{fig:quiver_d3local} which describes a flag manifold
with extra adjoint hypers at each node. 

\begin{figure}[H]
\centering
\vspace{-5mm}
\begin{tikzpicture}
\node[FrameNode](Gk0) at (2.5+10,0){$n_0$};
\node[GaugeNode](Gk1) at (5+10,0){$n_1$};
\node[](label) at (7+10,0){$\cdots$};
\node[GaugeNode](Gkn) at (9+10,0){$n_{s-1}$};
\draw[->-](Gk1) to (Gk0);
\draw[->-](label) to (Gk1);
\draw[->-](Gkn) to (label);
\draw[-](Gk1) to[out=65+180,in=115+180,looseness=8] (Gk1);
\draw[-](Gk1) to[out=65,in=115,looseness=8] (Gk1);
\draw[-](Gkn) to[out=65+180,in=115+180,looseness=8] (Gkn);
\draw[-](Gkn) to[out=65,in=115,looseness=8] (Gkn);
\end{tikzpicture}\caption{Quiver gauge theory for D3 branes at a single puncture.}\label{fig:quiver_d3local}
\end{figure}
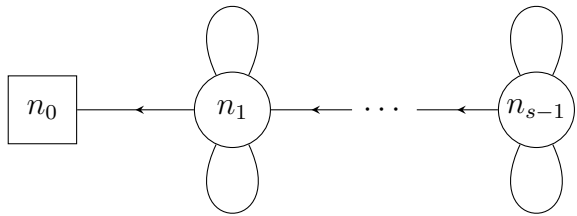

We notice that also the $TSU [N]$ quivers for defects studied  
\cite{Bruzzo:2010fk,Bonelli:2011fq,Bonelli:2011wx,Zenkevich:2017ylb,Cirafici:2018jor,Foda:2019klg} are based on flag manifold quivers but display a different field content. It should be possible to compare
the two kind of defect gauge theories in suitable limits by finding  an appropriate dictionary. 

\subsection{Nested Hilbert scheme of points}

The nested instanton moduli space is expected to reduce for a single D7-brane $r=r_0=1$ to the moduli space of the nested Hilbert scheme of points on $\mathbb{C}^2$, ${\rm Hilb}^{n,\mu}(\mathbb{C}^2)$. 
In this particular case the quiver described in the previous subsection reduces to the one of fig. \ref{fig:quiver_nhs} with relations
\begin{eqnarray}\label{rels-hilb}
&&[B_1^0,B_2^0]+I^0J^0=0\, , \quad [B_1^j,B_2^j] =0\  \, , \quad j\ge 1 \\
&& B_1^jF^j-F^jB_1^{j+1}=0\, \quad B_2^jF^j-F^jB_2^{j+1}=0\, , \quad J^0F^0=0.
\nonumber
\end{eqnarray}

\begin{figure}[H]
\centering\vspace{-5mm}
\begin{tikzpicture}
\node[FrameNode](F0) at (2.5+10,-2){$1$};
\node[GaugeNode](Gk0) at (2.5+10,0){$n_0$};
\node[GaugeNode](Gk1) at (5+10,0){$n_1$};
\node[](label) at (7+10,0){$\cdots$};
\node[GaugeNode](Gkn) at (9+10,0){$n_{s-1}$};
\draw[->-](F0.15+90) to (Gk0.165+90);
\draw[->-](Gk0.195+90) to (F0.345+90);
\draw[->-](Gk1) to (Gk0);
\draw[->-](label) to (Gk1);
\draw[->-](Gkn) to (label);
\draw[-](Gk0) to[out=65,in=115,looseness=9] (Gk0);
\draw[-](Gk0) to[out=70,in=110,looseness=7] (Gk0);
\draw[-](Gk1) to[out=65,in=115,looseness=9] (Gk1);
\draw[-](Gk1) to[out=65+180,in=115+180,looseness=9] (Gk1);
\draw[-](Gkn) to[out=65,in=115,looseness=9] (Gkn);
\draw[-](Gkn) to[out=65+180,in=115+180,looseness=9] (Gkn);
\end{tikzpicture}\caption{Nested Hilbert scheme quiver.}\label{fig:quiver_nhs}
\end{figure}
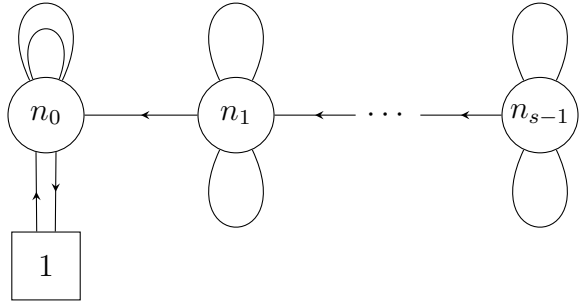

The moduli space of representations of this quiver is expected to provide an explicit description of ${\rm Hilb}^{n,\mu}(\mathbb{C}^2)$. This has been indeed proven for the particular case
of two-step nested Hilbert scheme $n_j=0$ for $j\ge2$ in \cite{flach_jardim}, were it is also shown that this variety is smooth for $n_1=1$. Indeed it is known that for $n_1>1$ the two-step nested Hilbert scheme
is singular. Moreover, nested Hilbert schemes with more than two steps are always singular, and {\it a fortiori} also the nested instanton moduli space. 
The D3/D7 partition functions we will evaluate via localisation will then compute {\it virtual} invariants
of these moduli spaces, since a perfect obstruction theory for them is expected to exist.

\subsection{Comet shaped quiver}\label{sec:comet}
Finally, the description of the D3/D7 system on the full geometry gives rise to the  {\it comet shaped quiver} of fig. \ref{fig:quiver_d3global}.

\begin{figure}[H]
\centering
\vspace{-5mm}
\begin{tikzpicture}
\node[FrameNode](F) at (2.5+10,-2){$r_0$};
\node[GaugeNode](Gk0) at (2.5+10,0){$n_0$};
\node[GaugeNode](Gk1) at (5+10,2){$n_1^{(1)}$};
\node[FrameNode](F1) at (6.5+10,1){$r_1^{(1)}$};
\node[GaugeNode](Gk12) at (5+10,-2){$n_1^{(k)}$};
\node[FrameNode](F12) at (6.5+10,-3){$r_1^{(k)}$};
\node[](label) at (7+10,2){$\cdots$};
\node[](label2) at (7+10,-2){$\cdots$};
\node[](dots1) at (11,0){$\cdots$};
\node[](vdots1) at (5+10,0){$\vdots$};
\node[](vdots2) at (9+10,0){$\vdots$};
\node[GaugeNode](Gkn) at (9+10,2){$n_{s-1}^{(1)}$};
\node[GaugeNode](Gkn2) at (9+10,-2){$n_{s-1}^{(k)}$};
\node[FrameNode](Fn) at (10.5+10,1){$r_{s-1}^{(1)}$};
\node[FrameNode](Fn2) at (10.5+10,-3){$r_{s-1}^{(k)}$};
\draw[->-](F.15+90) to (Gk0.165+90);
\draw[->-](Gk0.195+90) to (F.345+90);
\draw[->-](F1.125) to (Gk1.-20);
\draw[->-](Gk1.-40) to (F1.145);
\draw[->-](F12.125) to (Gk12.-20);
\draw[->-](Gk12.-40) to (F12.145);
\draw[->-](Gk1) to (Gk0);
\draw[->-](Gk12) to (Gk0);
\draw[->-](label) to (Gk1);
\draw[->-](label2) to (Gk12);
\draw[->-](Gkn) to (label);
\draw[->-](Gkn2) to (label2);
\draw[->-](Fn.125) to (Gkn.-20);
\draw[->-](Gkn.-40) to (Fn.145);
\draw[->-](Fn2.125) to (Gkn2.-20);
\draw[->-](Gkn2.-40) to (Fn2.145);
\draw[-](Gk0) to[out=60+90,in=120+90,looseness=12] (Gk0);
\draw[-](Gk0) to[out=70+90,in=110+90,looseness=7] (Gk0);
\draw[-](Gk1) to[out=65+180,in=115+180,looseness=8] (Gk1);
\draw[-](Gk1) to[out=65,in=115,looseness=8] (Gk1);
\draw[-](Gkn) to[out=65+180,in=115+180,looseness=8] (Gkn);
\draw[-](Gkn) to[out=65,in=115,looseness=8] (Gkn);
\draw[-](Gk12) to[out=65+180,in=115+180,looseness=8] (Gk12);
\draw[-](Gk12) to[out=65,in=115,looseness=8] (Gk12);
\draw[-](Gkn2) to[out=65+180,in=115+180,looseness=8] (Gkn2);
\draw[-](Gkn2) to[out=65,in=115,looseness=8] (Gkn2);
\end{tikzpicture}\caption{The comet-shaped quiver.}\label{fig:quiver_d3global}
\end{figure}
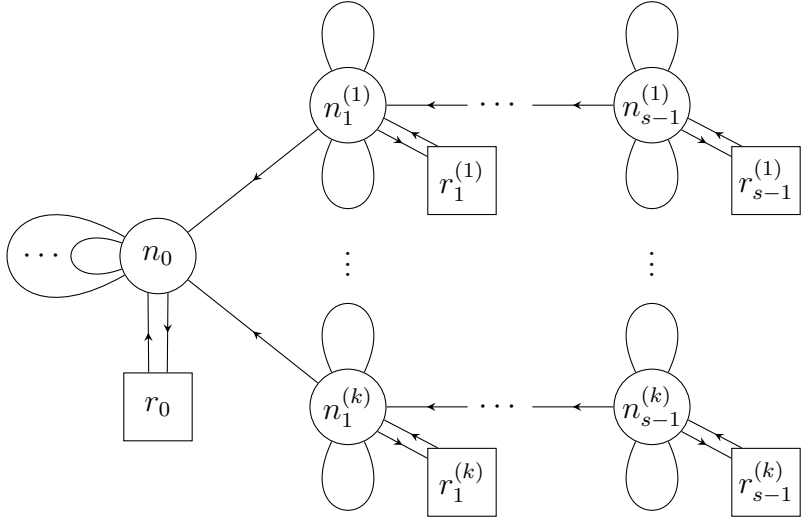

This is obtained by gluing the nested instanton moduli quivers describing the decompositions of the branes at the defects to the bulk quiver of figure \ref{fig:quiver_bulk}.
The number of tails in the comet quiver is equal to the number of punctures of the Riemann surface, while their length is related to the flag structure
due to the parabolic reduction of the connection at each puncture. All in all, the effective theory describing the D3-D7 system on $T^2$
reduces to a GLSM with target space the total space of the bundle 
\begin{equation}
\mathcal{V}_g \equiv \pi^*\left( \left(T^*\mathcal{M}_{r,n}\right)^{\oplus g} \otimes \left( \det \mathcal{T}\right)^{1-g}\right) 
\end{equation}
over the moduli space of nested instantons $\mathcal{N}_{r,\underline{\lambda},n,\underline{\mu}}$,
where the collection of partitions $\underline{\lambda} =(\lambda^1,\ldots,\lambda^k)$ and $\underline{\mu} =(\mu^1,\ldots,\mu^k)$ describe the decomposition of D7 and D3 branes
respectively under the cyclic groups $\mathbb{Z}_{s_i} \, , \, i=1,\ldots k$ acting at the punctures.
The physical interpretation of the above bundle is the following: the first factor is simply the contribution of the $g$ hypermultplets in the adjoint representation of the bulk theory described in subsection \ref{sec:2d_bulk}.
Regarding the second factor,
let us remark that the couplings of the D3/D7 brane system turns on a background line bundle describing the determinant bundle of
the Dirac zero modes in the instanton background. This is given by the determinant of the tautological bundle $\mathcal{T}$ over $\mathcal{M}_{r,n}$.
The power $(1-g)$ is due to the multiplicity of fermionic zero-modes on the Riemann surface $\mathcal{C}$. 
In the limit of degeneration of the $T^2$ to a circle this leads to a Chern-Simons
interaction term for the resulting D2/D6 system.  This term is essential in the comparison with results on character varieties and will be discussed
in detail in subsection \ref{sec:QM_reduction}, while in the next one we will briefly recall some basic definitions about character varieties that will be useful for the subsequent discussion.

%
%
%
%

\subsection{Character varieties}

Given a Riemann surface $\mathcal{C}$ of genus $g$ with  $k$ punctures $D=\sum_{i=1}^k p_i$, one defines the $GL_n(\mathbb{C})$ {\it character variety} as 
the moduli space of representations of the first fundamental group of  $\mathcal{C}\backslash D$  into $GL_n(\mathbb{C})$
\begin{equation}\label{char}
\mathcal{G}_{\sigma} = \left\{ \rho \in {\rm Hom} \left (\pi_1\left (\mathcal{C}\backslash D\right), GL_n(\mathbb{C})\right) | \rho (\gamma_i) \in C_i \right\} / / PGL_n(\mathbb{C})
\end{equation}
where $C_1,\ldots,C_k\subset GL_n(\mathbb{C})$ are semisimple conjugacy classes of type $\sigma^1,\ldots,\sigma^k$, namely the parts of the partition $\sigma^i$, 
$(\sigma^i_1\ge\sigma^i_2\ge\ldots)$, describe the multiplicities of the eigenvalues of any matrix in the  conjugacy class $C_i$.

When non empty \eqref{char} is a smooth projective variety of dimension 
$$
d_\sigma = n^2 (2g-2+k) - \sum_{i,j} \left(\sigma_{\, j}^i\right)^2 + 2
$$

To describe the cohomology of \eqref{char}, LHRV introduced the {\it k- puntures, genus $g$ Cauchy function}
\begin{equation}\label{cauchy}
\Omega (z,w) =\sum_{\sigma\in \mathcal{P}} \mathcal{H}_\sigma(z,w) \prod_{i=1}^k \tilde H^i_\sigma ({\bf x}_i; z^2,w^2) 
\end{equation}
where $\mathcal{P}$ is the set of partitions, $ \tilde H^i_\sigma ({\bf x}_i; z^2,w^2)$ are refined Macdonald polynomials  and 
\begin{equation}\label{mhook}
\mathcal H_\sigma(z,w)=\prod_{s\in\sigma}\frac{(z^{2a(s)+1}-w^{2l(s)+1})^{2g}}{(z^{2a(s)+2-w^{2l(s)}})(z^{2a(s)}-w^{2l(s)+2})}.
\end{equation}
where $a(s),l(s)$ are respectively the arm and leg length of the $s$ box of the Young diagram $\sigma$ representing the partition. 
Eq.\eqref{cauchy} turns out to be the generating function of the cohomology polynomials of $GL_n(\mathbb{C})$ character varieties, summed over $n$.

Let us now outline the connection with the brane construction described in the previous subsections. The dynamics of D3 branes on the local surface $S$ is refined Vafa-Witten theory.
When $S=T^2\times \mathcal{C}$, this reduces in the limit of small area of  $\mathcal{C}$ to a gauged linear sigma model  from $T^2$ to Hitchin's moduli space on $\mathcal{C}$ \cite{Bershadsky:1995vm}. 
On the other hand, in \cite{Simpson} it was proved that \eqref{char} is homeomorphic to the moduli space of Higgs bundles with parabolic reduction on the divisor $D=\sum_{i=1}^k p_i$. 
In presence of D7 branes, the non-perturbative effects on their dynamics are obtained by summing over the D3-branes partition functions. One then naturally obtains a generating function of the elliptic cohomology
of $GL_n(\mathbb{C})$ character varieties. Summarising the $T^2$ partition function of the D3-D7  comet shaped quiver reads
\begin{equation}\label{216}
Z_{T^2}=\sum_{n}\sum_{\underline{\mu}\in\mathcal P(n)^k}(\bold q^{\underline{\mu}})^{\bold r}
{\rm Ell}^{\textrm{vir}}\left(\mathcal N_{r,\underline{\lambda},n,\underline{\mu}}, \mathcal{V}_g
\right),
\end{equation}
with $(\bold q^{\underline{\mu}})^{\bold r}= \prod_{i=1}^k \prod_{\alpha=0}^{s_i-1} \left(q_{i,\, \alpha}^{|\mu^i_{\alpha}|}\right)^{r^i_{\alpha}}$ and 
\begin{equation}
{\rm Ell}^{\textrm{vir}}\left(\mathcal N_{r,\underline{\lambda},n,\underline{\mu}},\mathcal{V}_g
\right)=
{\it Ell}
(T^{\textrm{vir}}\mathcal N){\rm ch}\left(\mathcal{V}_g
\right)\cap\left[\mathcal N_{r,\underline{\lambda},n,\underline{\mu}}\right]^{\textrm{vir}}.
\end{equation}
For a single D7 brane $\bold r =r_0 =1$, the above formulae can be understood  as an elliptic virtual generalisation of the generating function introduced by LHRV. Indeed, we will show in the following that in the limit of degeneration of $T^2$ to a circle, one obtains LHRV formulae, or more precisely a virtual refinement of them.
%
%
%
%
%
%

\subsection{Reduction to quantum mechanics, Chern-Simons term and LHRV formulae}\label{sec:QM_reduction}

In this subsection we summarise the reduction of the D3/D7 system on $T^2$ to a quantum mechanical system in a T-dual picture. More precisely,
if the two torus factorises as $T^2=S^1\times S^1$ and one of the two circles is taken to be very small, our D-brane system can be T-dualised 
along the small circle and reduced to a corresponding D2/D6 system on ${\mathcal C}\times S^1$. This corresponds to the quantum mechanics of the comet shaped quiver
with a Chern-Simons coupling, given  by a phase factor $\eu^{i m\int CS(A,F)}=\eu^{\iu m\int\de x^\mu \mathcal A_\mu}$
so that the particle is coupled to an external vector potential. Let us briefly recall how this works in the standard ADHM case \cite{Tachikawa:2004ur} in order to then generalise it to the nested instanton moduli space.
The partition function is the equivariant index
\begin{equation}
Z_{S^1}=\sum_n q^{nr}\Ind\left(\mathcal M_{r,n},\mathcal L^{\otimes m}\right),
\end{equation}
where $\mathcal L$ is the determinant line bundle $\mathcal L=\Det\slashed D$, whose fiber on the space of connections $\mathcal A/\mathcal G$ is $(\det\ker\slashed D_A)^*\otimes(\det\ker\slashed D_A^\dagger)$. By making use the ADHM construction for the moduli space of ASD connections, the $n-$dimensional vector space $V_0$ is actually the space of fermionic zero-modes. In order to compute the Chern-Simons level, we make use of 
the Atiyah-Singer index theorem for a vector bundle $E\to M$
\begin{equation}
\Ind(M,E)=\Ind(\slashed D)=\int_M\hat A(TM)\wedge\ch E,
\end{equation}
which gives the index of the Dirac operator twisted by $E$, \textit{ie} $D:S\otimes E\to S\otimes E$, $S$ being the spin bundle over $M$. To compute the CS level  in the case at hand one has to consider  the geometric background $S^1\times T^*\mathcal C\times\mathbb C^2\times\mathbb R$.
Because of the twisting of the supersymmetric theory along  $\mathcal C$, the $\slashed D$ operator along $\mathcal C$ reduces to the  $\overline\pd$ operator and the roof genus $\hat A(TM)$ to the Todd class. Thus, when we take the effective theory obtained by shrinking the size of $\mathcal C$, $\Ind(\overline\pd)_{\mathcal C}$ gives the muliplicity of the fermionic zero modes, according to the decomposition $\Psi^{(0)}=\psi^{(0)}_{\mathcal C}\otimes \psi^{(0)}_{\mathbb C^2}$. The index theorem along $\mathcal C$ reads
\begin{equation}
\Ind(\overline\pd)_{\mathcal C}=\int_{\mathcal C}\Td(T\mathcal C)=1-g,
\end{equation}
which determines the level of the Chern-Simons interaction to be  $m=1-g$. 
Finally, the partition function is given by the following equivariant (virtual) index
\begin{equation}\label{S1_partitionfunction}
Z_{ S^1}=\sum_{n}\sum_{\mu\in\mathcal P(n)}(\bold q^{\mu})^r\Ind\left(\mathcal N_{r,\lambda,n,\mu},\Det(\slashed D)^{\otimes(1-g)}\right),
\end{equation}
where we use the notation $\bold q^\mu=q_0^{n_0(\mu)}\cdots q_{s-1}^{n_{s-1}(\mu)}$ and
\begin{equation}
\Ind\left(\mathcal N_{r,\lambda,n,\mu},\Det(\slashed D)^{\otimes(1-g)}\right)=\hat A(T^{\textrm{vir}}\mathcal N)\ch\left(\Det(\slashed D)^{\otimes(1-g)}\right)\cap\left[\mathcal N_{r,\lambda,n,\mu}\right]^{\textrm{vir}}.
\end{equation}
In the quiver representation of the nested instanton moduli space, the  $\slashed D$ operator on $\mathbb{C}^2$ appearing in the above equation is given by the pull-back of the tautological bundle $\mathcal{T}$ on the ADHM moduli space $\mathcal M_{r,n}$,
so that its determinant line bundle coincides with the one of $\mathcal {T}$, which will be used in the equivariant localisation formulae.

In the following section \ref{sec:partition_functions} we will show that the above partition function, when computed for the particular case of the nested Hilbert scheme of points on $\mathbb{C}^2$, gives a virtual generalization of LHRV formulae and reduces precisely to them when the 
nested Hilbert scheme is smooth.
Let us remark that the quantum mechanical system of the nested Hilbert scheme of points and its relation with LHRV formulae has been studied in \cite{Chuang:2013wpa} via a different approach based on topological string amplitudes on orbifold Calabi-Yaus. 
\section{Partition functions}\label{sec:partition_functions}

In this section we proceed to the evaluation of the partition function of the effective quiver gauge theories of the D3/D7-system discussed in the previous section in the limit of 
small volume of the wrapped curve ${\mathcal C}$.
This is performed by making use of supersymmetric localisation which is a
version of equivariant localisation formulae \cite{Pestun:2016zxk} for super-manifolds
which allows a generalisation to
supersymmetric path integrals in quantum field theories. 
The only configurations contributing to the latter
are the fixed loci of the supersymmetry transformations.
When these are isolated points, the path integral reduces to a sum over them
weighted by one-loop super-determinants of the tangent bundle $T$ at those points, 
that is
\begin{equation}\label{loc}
\sum_{x\in{\{FP\}}}
\frac{e^{-S(x)}}{{\rm Sdet} T_x}
\end{equation}
where $\{FP\}$ is the set of fixed points, $S(x)$ is the value of the action at $x\in\{FP\}$
and $T_x=T\vert_x$ is the restriction of $T$ at $x$.

In the following we will implement the above computational scheme by calculating 
the above data for the relevant quiver gauge theories on $T^2$. 
We will first focus on the contribution of a single defect on the sphere encoding the parabolic reduction
of the connection at a given point, which is described by a single legged quiver.
Then, we will consider the case of higher genus Riemann surface and combine all the contributions in the comet-shaped quiver theory partition function.

\subsection{Contribution of a single surface defect on the sphere}
\subsubsection{Field content and superpotential}

\begin{figure}[H]
\centering\vspace{-5mm}
\begin{tikzpicture}
\node[FrameNode](F0) at (2.5+10,-2){$r$};
\node[GaugeNode](Gk0) at (2.5+10,0){$n_0$};
\node[GaugeNode](Gk1) at (5+10,0){$n_1$};
\node[](label) at (7+10,0){$\cdots$};
\node[GaugeNode](Gkn) at (9+10,0){$n_{s-1}$};
\draw[->-](F0.15+90) to (Gk0.165+90);
\draw[->-](Gk0.195+90) to (F0.345+90);
\draw[->-](Gk1) to (Gk0);
\draw[->-](label) to (Gk1);
\draw[->-](Gkn) to (label);
\draw[-](Gk0) to[out=65,in=115,looseness=12] (Gk0);
\draw[-](Gk0) to[out=70,in=110,looseness=7] (Gk0);
\draw[-](Gk1) to[out=65,in=115,looseness=9] (Gk1);
\draw[-](Gk1) to[out=65+180,in=115+180,looseness=9] (Gk1);
\draw[-](Gkn) to[out=65,in=115,looseness=9] (Gkn);
\draw[-](Gkn) to[out=65+180,in=115+180,looseness=9] (Gkn);
\end{tikzpicture}\caption{Low energy GLSM quiver in the case of $g=0$, $k=1$.}\label{fig:quiver_sphere}
\end{figure}
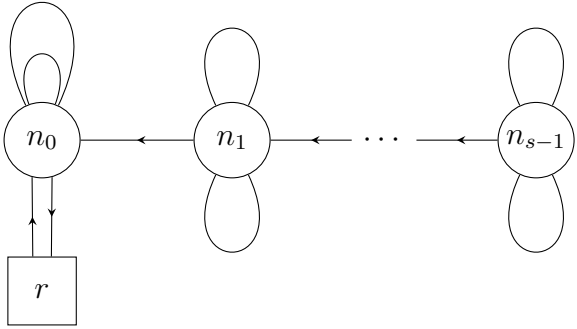

The matter content of the GLSM we are interested in is the one summarized in table \ref{table:ADHM_field_cont}, where $G=U(n_0)\times U(n_1)\times\cdots\times U(n_{s-1})$ and $\Box_i$ denotes the Young diagram corresponding to the fundamental representation of $U(n_i)$.
\begin{table}[H]
\centering
\begin{tabular}{|r|c|c|c|c|}\hline\hline
 & gauge $G$ & flavour $U(1)\times U(1)^2$ & twisted mass & $R-$charge\\ \hline\hline
$B_1^i$ & $\overline{\Box}_i\otimes\Box_i$ & $\mathbbm 1_{(1,0)}$ & $-\epsilon_1$ & $q$ \\
$B_2^i$ & $\overline{\Box}_i\otimes\Box_i$ & $\mathbbm 1_{(0,1)}$ & $-\epsilon_2$ & $q$ \\
$I$ & $\Box_0$ & $\overline\Box_{(0,0)}$ & $-a$ & $q+p$ \\
$J$ & $\overline{\Box}_0$ & $\Box_{(1,1)}$ & $a-\epsilon$ & $q-p$ \\
$F^i$ & $\overline{\Box}_i\otimes \Box_{i-1}$ & $\mathbbm 1_{(0,0)}$ & $0$ & $0$ \\
$\chi^i$ & $\overline{\Box}_i\otimes\Box_i$ & $\mathbbm 1_{(-1,-1)}$ & $\epsilon$ & $-2q$ \\
$\chi_i^{B_1}$ & $\Box_i\otimes\overline{\Box}_{i-1}$ & $\mathbbm 1_{(-1,0)}$ & $\epsilon_1$ & $-q$\\
$\chi_i^{B_2}$ & $\Box_i\otimes\overline{\Box}_{i-1}$ & $\mathbbm 1_{(0,-1)}$ & $\epsilon_2$ & $-q$\\
$\chi_{JF}$ & $\Box_1$ & $\Box_{(1,1)}$ & $\epsilon-a$ & $p-q$ \\ \hline\hline
\end{tabular}\caption{Field content for quiver \ref{fig:quiver_sphere}}\label{table:ADHM_field_cont}
\end{table}
The relations satisfied by the quiver GLSM are enforced by the superpotential $\mathcal W$ in \eqref{W_0}.
\begin{equation}\label{W_0}
\begin{split}
\mathcal W=\Tr_{0}\left[\chi_0([B_1^0,B_2^0]+IJ)\right]+\sum_{i=1}^N\Tr_{i}\left[\chi_i[B_1^i,B_2^i]+\chi_i^{B_1}(B_1^{i-1}F^i-F^iB_1^i)+\right.\\
\left.+\chi_i^{B_2}(B_2^{i-1}F^i-F^iB_2^i)\right]+\chi_{JF}JF^1.
\end{split}
\end{equation}
Let us notice that, as we already pointed out, the locus cut out by $\mathcal W$ through the $D-$term equations is overdetermined. Thus we still have to introduce $s-1$ additional chiral fields $Q_i,\ i=1,\dots,{s-1}$ taking care of the relations over the constraints. These additional fields will transform in the $\overline{\Box}_i\otimes\Box_{i-1}$ representation of $U(n_i)\times U(n_{i-1})$. We will assign them $R-$charge $2q$ and they will be charged under the $U(1)^2$ flavour symmetry with charge $(1,1)$. The relations over the constraints induced by these chirals is
\begin{equation}\label{relations_over_relations_1+}
\begin{split}
0=&[B_1^{i-1},B_2^{i-1}]F^i+B_2^{i-1}(B_1^{i-1}F^i-F^iB_1^i)-(B_1^{i-1}F^i-F^iB_1^i)B_2^i+\\
&+(B_2^{i-1}F^i-F^iB_2^i)B_1^i-B_1^{i-1}(B_2^{i-1}F^i-F^iB_2^i)-F^i[B_1^i,B_2^i],
\end{split}
\end{equation}
when $i>1$, while
\begin{equation}\label{relations_over_relations_1}
\begin{split}
0=&([B_1^{0},B_2^{0}]+IJ)F^1+B_2^{0}(B_1^{0}F^1-F^1B_1^1)-(B_1^{0}F^1-F^1B_1^1)B_2^1+\\
&+(B_2^{0}F^1-F^1B_2^1)B_1^1-B_1^{0}(B_2^{0}F^1-F^1B_2^1)-I(JF^1)-F^1[B_1^1,B_2^1]\, 
\end{split}
\end{equation}
covers the remaining case $i=1$.

The chiral supersymmetry transformations of the above fields are
\begin{eqnarray}
QI=\mu_I \, ,\quad Q\mu_I=D_AI+ \phi^0 I-Ia\\
QJ=\mu_J \, ,\quad Q\mu_J=D_AJ - J\phi^0 + a J-\epsilon J\\
QB^i_l=M^i_l \, , \quad QM^i_l = D_AB^i_l + [\phi^i,B^i_l] -\epsilon_lB^i_l\\
Q\psi_F^i=F^i\, , \quad QF^i=D_A\psi_F^i-\phi^i\psi_F^i+\psi_F^i\phi^{i+1}\\
Q\chi_i=h_i\, , \quad Qh_i=D_A\chi_i +[\phi^i, h_i] +\epsilon h_i\\
Q\chi_{JF}=h_{JF}\, , \quad Qh_{JF}=D_A\chi_{JF} + [\phi^0,\chi_{JF}] + (\epsilon-a)\chi_{JF}\\
Q\chi_i^{B_l}=h_i^{B_l}\, , \quad Qh_i^{B_l}= D_A\chi_i^{B_l}+\phi^i\chi_i^{B_l}-\chi_i^{B_l}\phi^{i+1}+\epsilon_l\chi_i^{B_l}\\
Q\chi_{Q_i}=h_{Q_i}\, , \quad Qh_{Q_i}= D_A\chi_{Q_i}+\phi^{i-1}\chi_{Q_i}-\chi_{Q_i}\phi^i
+\epsilon\chi_{Q_i}\\
Q\bar A=\eta \, , \quad Q\eta=F_A\, , \quad QA=0
\end{eqnarray}
where $(A,\bar A)$ is the connection on $T^2$ in holomorphic coordinates and $F_A$ its curvature two-form, $\epsilon_l$, $l=1,2$ are the equivariant weights of the $U(1)^2$ rotation group
acting on $\mathbb{C}^2$ and $\epsilon=\epsilon_1+\epsilon_2$. Moreover $\phi^i, \, i=0,\ldots,s-1$ are the zero modes of the $A$-connection implementing global $U(n_i)$ gauge transformations
of the  $i^{th}$-node.
The fixed points of the above supersymmetry transformation 
impose that the connection $(A,\bar A)$ is flat. Then by a standard squaring argument one can show that the other fields must be constant so that the supersymmetry fixed locus reduces to the fixed locus 
of the $U(1)^{(r+2)}$-torus action on the nested instanton moduli space, where $U(1)^r$ is the Cartan torus of the $U(r)$ gauge group with equivariant parameters $a_b, \, b=1,\ldots,r$.

\subsubsection{Anomaly and observables}

As we already discussed at the end of subsection \ref{sec:2d_bulk}, the $(0,2)$ D3/D7-branes theory displays a $U(1)_R$ anomaly whose compensation can be obtained via the insertion of suitable observables. To this end we introduce a sector of additional degrees of freedom $\bar I$ and $\bar J$ with opposite gauge global charges w.r.t. $I$ and $J$ which, once integrated out, produces the insertion of the observables. These will be properly taken into account in the following computations.

\subsubsection{Fixed points}\label{sec:fixed}

The characterization of the fixed locus of the torus action on the moduli space of nested instantons $\mathcal N(r,n_0,\dots,n_{s-1})\simeq\mathcal N_{r,[r^1],n,\mu(n)}$ is most easily understood by describing it as the moduli space of (suitably defined) stable representations of the quiver in figure \ref{fig:quiver_sphere}. In this setting we associate to the quiver \ref{fig:quiver_sphere} the vector spaces $W$ and $V_i$, in addition to the space
$$
\mathbb X=\End(V_0)^{\oplus 2}\oplus\Hom(V_0,W)\oplus\Hom(W,V_0)\oplus\left[\bigoplus_{i=1}^{s-1}\left(\End(V_i)^{\oplus 2}\oplus\Hom(V_i,V_i-1)\right)\right]
$$
of the morphisms of the quiver corresponding to the matter fields $B_{1,2}^i$, $F^i$, $I$ and $J$. In this language, the quiver in figure \ref{fig:quiver_sphere} would be represented graphically as the one in figure \ref{fig:quiver_sphere_math}.
\begin{figure}[H]
\centering
\begin{tikzcd}
V_{s-1} \arrow[out=70,in=110,loop,swap,"B_1^{s-1}"] \arrow[out=250,in=290,loop,swap,"B_2^{s-1}"] \arrow[r,"F^{s-1}"] & \cdots \arrow[r,"F^2"] & V_1 \arrow[out=70,in=110,loop,swap,"B_1^1"] \arrow[out=250,in=290,loop,swap,"B_2^1"] \arrow[r,"F^1"] & V_0 \arrow[out=70,in=110,loop,swap,"B_1^0"] \arrow[out=250,in=290,loop,swap,"B_2^0"] \arrow[r,shift left=.5ex,"J"] & W \arrow[l,shift left=.5ex,"I"]
\end{tikzcd}\caption{General representation of quiver \ref{fig:quiver_sphere}.}\label{fig:quiver_sphere_math}
\end{figure}
On $\mathbb X$ we have a natural action of $\mathcal G=GL(V_0)\times\cdots\times GL(V_{s-1})$, which preserves the subscheme of those points satisfying the relations \eqref{rels-hilb}. Then, given a framed representation $(W,V_0,\dots,V_{s-1},X)$, $X\in\mathbb X_0$ of the quiver \ref{fig:quiver_sphere_math}, one can prove that there is a suitable definition of stability such that, in a particular chamber of the parameters at play, semi-stability is equivalent to stability (also as a GIT quotient), so that it makes sense to talk about the moduli space of stable framed representations of the quiver \ref{fig:quiver_sphere_math} without any further specification. This space will be denoted by $\mathcal N(r,n_0,\dots,n_{s-1}):=\mathbb X_0\git_\chi\mathcal G$, for some suitable choice of an algebraic character $\chi$ of $\mathcal G$.

By means of this construction one can show that there is a sum decomposition $V_0=V_i\oplus\tilde V_i$ and $V_i=V_{i+1}\oplus\hat V_{i+1}$, such that $\tilde V_i=\hat V_i\oplus\tilde V_{i-1}$. This splitting also induces the following block matrix decomposition of the morphisms $B_{1,2}^0$, $I$ and $J$ in \eqref{split_morph},
\begin{equation}\label{split_morph}
B_1^0=\begin{pmatrix}
B_1^i & B_1^{'i}\\
0 & \tilde B_1^i
\end{pmatrix},\qquad
B_2^0=\begin{pmatrix}
B_2^i & B_2^{'i}\\
0 & \tilde B_2^i
\end{pmatrix},\qquad
I=\begin{pmatrix}
I^{'i}\\ \tilde I^i
\end{pmatrix},\qquad
J=\begin{pmatrix}
0 & \tilde J^i
\end{pmatrix}.
\end{equation}
such that $(W,\tilde V_i,\tilde B_1^i,\tilde B_2^i,\tilde I^i,\tilde J^i)$ is a stable ADHM datum.

Once an equivariant action of a torus $T\curvearrowright\mathcal N(r,n_0,\dots,n_{s-1})$ is introduced in the natural way suggested by the SUSY construction of the quiver \eqref{fig:quiver_sphere}, the previous observations makes it possible to characterize the $T-$fixed locus of $\mathcal N(r,n_0,\dots,n_{s-1})$ in terms of those of some moduli spaces of stable ADHM data. This is all summarized in the following proposition 
\begin{proposition}\label{prop:fixed_locus}
The $T-$fixed locus of $\mathcal N(r,n_0,\dots,n_{s-1})$ is described by $s-$tuples of nested coloured partitions $\boldsymbol\mu_1\subseteq\cdots\subseteq\boldsymbol\mu_{s-1}\subseteq\boldsymbol\mu_0$, with $|\boldsymbol\mu_0|=n_0$ and $|\boldsymbol\mu_{i>0}|=n_0-n_i$.
\end{proposition}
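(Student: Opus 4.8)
The plan is to set up a bijection between the $T$-fixed points of $\mathcal N(r,n_0,\dots,n_{s-1})$ and the claimed $s$-tuples of nested coloured partitions, by using the block decomposition \eqref{split_morph} to reduce everything to the well-understood ADHM moduli space $\mathcal M_{r,n_0}$. Here $T=(\mathbb C^*)^2\times(\mathbb C^*)^r$: the first factor rescales $B_{1,2}^\bullet$ (equivalently the two coordinates of $\mathbb C^2$) with weights $\epsilon_1,\epsilon_2$, the second acts on the framing $W$ with weights $a_1,\dots,a_r$, and we take the parameters generic. First I would run the standard rigidity argument: a point $x\in\mathcal N(r,n_0,\dots,n_{s-1})$ is $T$-fixed iff for every $t\in T$ there is $g(t)\in\mathcal G=\prod_i GL(V_i)$ with $t\cdot x=g(t)\cdot x$; since in the chamber defining $\mathcal N$ semistability equals stability the stable locus has trivial stabilizers, so $g$ is a well-defined map $T\to\mathcal G$, which may then be taken to be a group homomorphism. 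This promotes each $V_i$ to a $T$-module and makes all the morphisms $B_{1,2}^i,F^i,I,J$ weight-homogeneous, with weights read off from the twisted masses in Table \ref{table:ADHM_field_cont}.

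\textbf{From a fixed point to a nested tuple.} The sub-datum $(W,V_0,B_1^0,B_2^0,I,J)$ is a $T$-fixed point of $\mathcal M_{r,n_0}$, hence by Nakajima's classification it is labelled by an $r$-tuple of partitions $\boldsymbol\mu_0=(\mu_0^{(1)},\dots,\mu_0^{(r)})$ with $|\boldsymbol\mu_0|=n_0$; concretely $V_0$ is the sum of one-dimensional $T$-weight spaces indexed by the boxes of $\boldsymbol\mu_0$, with $B_{1,2}^0$ the shift operators along the two axes, $I$ hitting the corner boxes and $J=0$, and the canonical splitting $V_0=\bigoplus_b\mathbb C[B_1^0,B_2^0]\!\cdot\!I(e_b)$ into $(\mathbb C^*)^r$-eigenspaces realises the colouring. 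For generic $\epsilon_1,\epsilon_2,a_b$ the $T$-module $V_0$ is multiplicity-free, so every $T$-submodule is a coordinate subspace, and if it is moreover $B_{1,2}^0$-invariant it is the span of the boxes of $\boldsymbol\mu_0$ lying in a monomial ideal $J_0\supseteq I_0$, equivalently of $\boldsymbol\mu_0\setminus\boldsymbol\nu$ for a (coloured) sub-Young-diagram $\boldsymbol\nu\subseteq\boldsymbol\mu_0$. Feeding this into \eqref{split_morph}: each $V_i\subseteq V_0$ is such a submodule of dimension $n_i$, hence equals the span of $\boldsymbol\mu_0\setminus\boldsymbol\mu_i$ for a unique sub-tuple $\boldsymbol\mu_i\subseteq\boldsymbol\mu_0$ with $|\boldsymbol\mu_i|=n_0-n_i$; the chain of inclusions $V_{s-1}\subseteq\dots\subseteq V_1\subseteq V_0$ forced by the intertwining relations in \eqref{rels-hilb} translates exactly into $\boldsymbol\mu_1\subseteq\dots\subseteq\boldsymbol\mu_{s-1}\subseteq\boldsymbol\mu_0$, and a weight count shows that each $F^i$ is then forced to be the canonical inclusion $V_i\hookrightarrow V_{i-1}$, so no residual moduli survive.

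\textbf{Converse and main obstacle.} Conversely, from any nested tuple $\boldsymbol\mu_1\subseteq\dots\subseteq\boldsymbol\mu_{s-1}\subseteq\boldsymbol\mu_0$ with these sizes one builds a point of $\mathcal N(r,n_0,\dots,n_{s-1})$: take the fixed ADHM datum attached to $\boldsymbol\mu_0$ on $V_0$, let $V_i\subseteq V_0$ be the monomial submodule spanned by $\boldsymbol\mu_0\setminus\boldsymbol\mu_i$, let each $F^i$ be the inclusion and $J=0$; the relations \eqref{rels-hilb} are then immediate, and one verifies that this representation lies in the stable chamber, where stability of the whole datum reduces to stability of the ADHM datum on $V_0$. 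Together with the previous step this yields the asserted bijection. I expect the genuine difficulty to be not any single computation but the bookkeeping of two points: (i) making the rigidity/linearization argument precise for the $(0,2)$ quiver, carefully tracking the weight shifts of Table \ref{table:ADHM_field_cont} so that the superpotential relations become honest equations of $T$-modules; and (ii) matching stability on both sides, i.e.\ knowing that the GIT chamber used to define $\mathcal N(r,n_0,\dots,n_{s-1})$ is precisely the one in which a nested flag of monomial submodules of a stable ADHM datum is automatically stable, so that the construction above really lands in $\mathcal N$ and the correspondence is a bijection rather than a mere surjection. Once these are in place, the classification of $B_{1,2}$-invariant submodules of the fixed $V_0$, the identification of the $F^i$ with inclusions, and the size and nesting constraints are routine.
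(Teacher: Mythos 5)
Your argument is correct and follows essentially the same route as the paper's (very brief) justification: both reduce the claim to Nakajima's classification of $T$-fixed ADHM data via the splitting $V_0=V_i\oplus\tilde V_i$ that stability forces through the injectivity of the $F$'s. The only cosmetic difference is that the paper reads off $\boldsymbol\mu_i$ as the fixed point of the induced stable ADHM datum on the quotient $\tilde V_i$ (whence $|\boldsymbol\mu_i|=n_0-n_i$ directly), whereas you read it off as the complementary sub-diagram of the $B$-invariant monomial $T$-submodule $V_i\subseteq V_0$ — two equivalent packagings of the same combinatorics, with your rigidity and multiplicity-freeness steps making explicit what the paper leaves implicit.
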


\begin{example}
As an example, consider the moduli space $\mathcal N(2,3,2,1)$. Its fixed point locus will be described by the following couples of nested partitions,
\begin{displaymath}
\left[\mathcal N(2,3,2,1)\right]^T\longleftrightarrow\left\{
\begin{aligned}
&\left(1^1,2^1,3^1;\emptyset\right),\left(1^1,2^1,2^1;1^1\right),\left(1^1,2^1;1^1,1^1,1^1\right),\left(1^1,1^1,2^1;1^1,1^1\right),\\
&\left(1^1;1^1,2^1,2^1\right),\left(1^1,1^1,1^1;1^1,2^1\right),\left(1^1,1^1;1^1,1^1,2^1\right),\left(\emptyset,1^1,2^1,3^1\right),\\
&\left(1^1,2^1,2^11^1;\emptyset\right),\left(\emptyset;1^1,2^1,2^11^1\right),\left(1^1,1^2,2^11^1;\emptyset\right),\left(\emptyset;1^1,1^2,2^11^1\right),\\
&\left(1^1,1^2,1^3;\emptyset\right),\left(1^1,1^2,1^2;1^1\right),\left(1^1,1^2;1^1,1^1\right),\left(1^1,1^1,1^2;1^1,1^1\right),\\
&\left(1^1;1^1,1^2,1^2\right),\left(1^1,1^1,1^1;1^1,1^2\right),\left(1^1,1^1;1^1,1^1,1^2\right),\left(\emptyset;1^11^21^3\right),
\end{aligned}\right.
\end{displaymath}
where each term on the r.h.s. has to be interpreted as a couple of nested partitions, \eg
\begin{displaymath}
\left(1^2,2^11^1,3^12^1;\emptyset\right)\longleftrightarrow\vcenter{\hbox{\ShadedTableauR[(1,0),(1,-1),(2,-1)]{(1,0),(1,-1)}{{\ ,\ },{\ ,\ ,\ }}}}\oplus\emptyset\longleftrightarrow\left(\ShadedTableau[]{\ ,\ }\hookrightarrow\ShadedTableau[]{{\ },{\ ,\ }}\hookrightarrow\ShadedTableau[]{{\ ,\ },{\ ,\ ,\ }}\right)\oplus\emptyset.
\end{displaymath}

The notation we use for a partition $\mu\in\mathcal P$ is descriptive of its corresponding Young diagram in the following sense: $[\mu_1^{i_1}\cdots\mu_j^{i_j}\cdots]$ denotes the partition
\begin{displaymath}
\mathcal P\ni[\mu_1^{i_1}\cdots\mu_j^{i_j}\cdots]=(\underbrace{\mu_1,\dots,\mu_1}_{i_1},\dots,\underbrace{\mu_j,\dots,\mu_j}_{i_j},\dots),
\end{displaymath}
or, in other words, $i_j$ counts the number of rows of length $\mu_j$ stacked one over the other.
\end{example}

\subsubsection{Character computation}

The super determinant weighting the contribution of each fixed point can be computed 
from the character decomposition of the torus action on the (virtual) tangent space:
\begin{equation}\label{tan_rep_long}
\begin{split}
T_Z^{\textrm{vir}}\mathcal N(r,n_0,\ldots,n_{s-1})&=\End(V_0)\otimes(Q-1-\Lambda^2Q)+\Hom(W,V_0)+\Hom(V_0,W)\otimes\Lambda^2Q\\
&\quad-\Hom(V_1,W)\otimes\Lambda^2Q+\\
&\quad+\sum_{\ell=1}^{s-1}\left[\End(V_{\ell})-\Hom(V_{\ell},V_{\ell-1})\right]\otimes(Q-1-\Lambda^2Q)
\end{split}
\end{equation}
where the first line accounts for the standard ADHM quiver $(B^0_1,B^0_2, I,J)$ and their constraints, 
the second line for the constraint $JF^1=0$ and
the third line for the maps in the tail, their constraints and the relations among them.

By decomposing the vector spaces $V_i$ in terms of characters of the torus action $T\curvearrowright\mathcal N_{r,[r^1],n,\mu}$ we can then study the character decomposition of the virtual tangent space to the moduli space of nested instantons
and obtain
\begin{equation}\label{tan_rep_long_char}
\begin{split}
T_Z^{\textrm{vir}}\mathcal N_{r,[r^1],n,\mu}=&T_{\tilde Z}\mathcal M_{r,n_0}+\sum_{a,b=1}^r\sum_{i=1}^{M_0^{(a)}}\sum_{j=1}^{N_0^{(b)}}R_bR_a^{-1}\left(T_1^{i-\mu_{1,j}^{(b)}}-T_1^{i}\right)\left(T_2^{-j+\mu_{1,i}^{(a)'}+1}+\right.\\ & \left.-T_2^{-j+\mu_{0,i}^{(a)'}+1}\right)-\sum_{i=1}^{M_0^{(a)}}\sum_{j=1}^{\mu_{0,i}^{(a)'}-\mu_{1,i}^{(a)'}}T_1^iT_2^{j+\mu_{1,i}^{(a)'}}+\\
&+\sum_{k=2}^{s-1}\left[\sum_{a,b=1}^r\sum_{i=1}^{M_0^{(a)}}\sum_{j=1}^{N_0^{(b)}}R_bR_a^{-1}\left(T_1^{i-\mu_{k,j}^{(b)}}-T_1^{i-\mu_{k-1,j}^{(b)}}\right)\right.\\
&\left.\left(T_2^{-j+\mu_{k,i}^{(a)'}+1}-T_2^{-j+\mu_{0,i}^{(a)'}+1}\right)\right]+(s-1)(T_1T_2),
\end{split}
\end{equation}
where the fixed point $Z$ is to be identified with a choice of a sequence of coloured nested partitions $\boldsymbol\mu_1\subseteq\boldsymbol\mu_{N-1}\subseteq\cdots\subseteq\boldsymbol\mu_{s-1}\subseteq\boldsymbol\mu_0$, as in proposition \ref{prop:fixed_locus}, $\tilde Z\leftrightarrow\boldsymbol\mu_0$ and the last term, namely $(s-1)(T_1T_2)$, has been added in order to take into account the over-counting in the relations $[B_1^{i},B_2^{i}]=0$ due to the commutator being automatically traceless.

\subsubsection{Determinants}\label{sec:determinants}
Having the character decomposition of the virtual tangent space to the moduli space of nested instantons enables us to easily compute the $2d$ $\mathcal N=(0,2)$ partition functions of the low energy GLSM of subsection \ref{sec:2d_GLSM} in terms of the eigenvalues of the torus action, which we will do in the particular case of $r=1$ for the sake of simplicity. The partition function we want to compute on the sphere $\mathcal C_0=S^2$ with $1$ marked point will take the form
\begin{equation}\label{ZT2-0}
\mathcal Z^{\rm ell}_1(S^2;q_0,\dots,q_{s-1})=\sum_{\mu_1\subseteq\cdots\subseteq\mu_0}q_0^{|\mu_0|}q_1^{|\mu_0\setminus\mu_1|}\cdots q_{s-1}^{|\mu_0\setminus\mu_{s-1}|}Z_{(\mu_0,\mu_1,\dots,\mu_{s-1})}^{\rm ell},
\end{equation}
with $|\mu_i\setminus\mu_j|=|\mu_i|-|\mu_j|$ denoting the number of boxes in the skew Young diagram $Y_{\mu_i\setminus\mu_j}$, while $Z^{\rm ell}_{(\mu_0,\dots,\mu_{s-1})}$ is the contribution at a fixed instanton profile.

In particular, once we fix an instanton configuration by choosing a sequence of nested partitions $\mu_1\subseteq\cdots\subseteq\mu_{s-1}\subseteq\mu_0$ we can write the torus partition function as
\begin{equation}
Z_{(\mu_0,\mu_1,\dots,\mu_{s-1})}^{\rm ell}=\mathcal L_{\mu_0}^{\rm ell}\mathcal N_{\mu_0}^{\rm ell}\overline{\mathcal N}_{\mu_0}^{\rm ell}\mathcal T_{\mu_0,\mu_1}^{\rm ell}\overline{\mathcal T}_{\mu_0,\mu_1}^{\rm ell}\mathcal W_{\mu_0,\dots,\mu_{s-1}}^{\rm ell},
\end{equation}
where
\begin{align}
\mathcal L^{\rm ell}_{\mu_0}&=\prod_{s\in Y_{\mu_0}}\exp\left[-\vol(T^2)\left(\phi(s)-\xi\right)\right], \label{CS-T2}\\
\mathcal N^{\rm ell}_{\mu_0}&=\prod_{s\in Y_{\mu_0}}\frac{1}{\theta_1(\tau|E(s))\theta_1(\tau|E(s)-\epsilon)},\\
\overline{\mathcal N}_{\mu_0}^{\rm ell}&=\prod_{s\in Y_{\mu_0\setminus\Box}}\theta_1(\tau|\phi(s)-\tilde a)\theta_1(\tau|\phi(s)-\tilde a+\epsilon),\\
\mathcal T^{\rm ell}_{\mu_0,\mu_1}&=\prod_{i=1}^{M_0}\prod_{j=1}^{\mu_{0,i}-\mu_{1,i}}\theta_1(\tau|\epsilon_1i+\epsilon_2(j+\mu_{1,i}')),\\
\overline{\mathcal T}^{\rm ell}_{\mu_0,\mu_1}&=\prod_{s\in Y_{\mu_0\setminus\mu_1}}\frac{1}{\theta_1(\tau|\phi(s)-\tilde a+\epsilon)},\\ \nonumber
\mathcal W_{\mu_0,\dots,\mu_{s-1}}&=\prod_{k=0}^{s-2}\left[\prod_{i=1}^{M_0}\prod_{j=1}^{N_0}\frac{\theta_1(\tau|\epsilon_1(i+\mu_{k,j})+\epsilon_2(\mu_{k+1,i}'-j+1))}{\theta_1(\tau|\epsilon_1(i-\mu_{k+1,j})+\epsilon_2(\mu_{k+1,i}'-j+1))}\right.\\
&\left.\qquad\quad\prod_{i=1}^{M_0}\prod_{j=1}^{N_0}\frac{\theta_1(\tau|\epsilon_1(i-\mu_{k+1,j})+\epsilon_2(\mu_{0,i}'-j+1))}{\theta_1(\tau|\epsilon_1(i-\mu_{k,j})+\epsilon_2(\mu_{0,i}'-j+1))}\right],\label{WT2}
\end{align}
and for any box $s$ in a Young diagram $Y_\mu$ we defined $\phi(s)$ to be the quantity \eqref{phi_s}
\begin{equation}\label{phi_s}
\phi(s)=a+(i-1)\epsilon_1+(j-1)\epsilon_2,
\end{equation}
and
\begin{equation}\label{E_s}
E(s)=-\epsilon_1a(s)+\epsilon_2(l(s)+1),
\end{equation}
with $a(s)$ and $l(s)$ being respectively the arm and leg length of $s$ in $Y_\mu$.

Notice that $\mathcal N_{\mu_0}^{\rm ell}$ is the elliptic analogue of the Nekrasov partition function, while $\overline{\mathcal N}_{\mu_0}^{\rm ell}$ is its $\overline{\mbox{ADHM}}$ analogue due to the $\overline{\rm D7}$ coupling. Moreover the contributions from the functions $\mathcal T^{\rm ell}_{\mu_0,\mu_1}$, $\overline{\mathcal T}_{\mu_0,\mu_1}^{\rm ell}$ and $\mathcal W_{\mu_0,\dots,\mu_{s-1}}$ altogether encodes the contribution of the surface defect insertion. Finally, $\mathcal L^{\rm ell}_{\mu_0}$ encodes the CS-like term we discussed in section \ref{sec:comet} and \ref{sec:QM_reduction}. This is interpreted as a CS-term contribution when the limit to QM is taken, and a $5d$ partition function on $\mathbb R^4\times S^1$ is retrieved. In any case, it comes from the coupling to a background connection on the determinant line bundle $\Det\slashed D$ encoding fermionic zero modes. This background connection is mirrored by the presence of $\xi$ in \eqref{CS-T2}, which is intended to be later specialized to $\xi\to a$.

Because of the previous observations it is instructive to perform the summation over all the sequences of $s$ nested partitions in two steps. First we sum over all the smaller partitions $\mu_1\subseteq\cdots\subseteq\mu_{s-1}\subseteq\mu_0$ at fixed $\mu_0\in\mathcal P$. It will prove useful for what we will do later to define the rational function $P^{\rm ell}_{\mu_0}$ as in \eqref{Pell}.
\begin{equation}\label{Pell}
P^{\rm ell}_{\mu_0}=\sum_{\mu_1\subseteq\cdots\subseteq\mu_{s-1}}\mathcal T_{\mu_0,\mu_1}^{\rm ell}\overline{\mathcal T}_{\mu_0,\mu_1}^{\rm ell}\mathcal W_{\mu_0,\dots,\mu_{s-1}}^{\rm ell}q_1^{|\mu_0\setminus\mu_1|}\cdots q_{s-1}^{|\mu_0\setminus\mu_{s-1}|}.
\end{equation}

Finally, by summing also over the $\mu_0$ partitions we can rewrite the full partition function as in \eqref{ZT2-0-full},
\begin{equation}\label{ZT2-0-full}
\mathcal Z^{\rm ell}_1(S^2;q_0,\dots,q_{s-1})=\sum_{\mu_0}q_0^{|\mu_0|}\mathcal Y^{\rm ell}_{\mu_0}P^{\rm ell}_{\mu_0},
\end{equation}
where we defined
\begin{equation}\label{Yell}
\mathcal Y^{\rm ell}_{\mu_0}=\mathcal L_{\mu_0}^{\rm ell}\mathcal N_{\mu_0}^{\rm ell}\overline{\mathcal N}_{\mu_0}^{\rm ell},
\end{equation}
and 
$P^{\rm ell}_{\mu_0}$ are particular elliptic functions which can be regarded as an elliptic virtual uplift of modified Macdonald polynomials. The first few examples are listed in
\eqref{duepall},\eqref{trepall},\eqref{quattropall}.
As a useful remark, we want to point out that by taking the limit $q_{i>0}\to 0$ we can effectively switch off the tail of the quiver, since $P^{\rm ell}_{\mu_0}\xrightarrow{q_{i>0}\to 0}1$, and we recover the partition function on the sphere with one puncture of trivial holonomy, $\mathcal Z_0^{\rm ell}(S^2;q_0)$.

\subsection{An alternative derivation: contour integral formulae}
\label{appendix:integral}
In this section we will be explicitly computing the partition functions of the low energy theory coming from the D3/D7 system described in subsection \ref{sec:2d_GLSM} by reducing the supersymmetric path integral to a contour integral 
via supersymmetric localization \cite{Benini:2013nda,Benini:2013xpa}.
The model we are interested in gives rise to a $2d$ $\mathcal N=(0,2)$ GLSM on $T^2$. The mechanism of supersymmetry breaking from the maximal amount to $\mathcal N=(0,2)$ in the reduction to the low energy theory leaves us with a matter content comprised of chiral fields corresponding to the morphisms in the representation theory of quiver \ref{fig:quiver_nhs} in the category of vector spaces, and Fermi fields implementing the Lagrange multipliers in the superpotential. Let us first study the partition function for the quiver GLSM of figure \ref{fig:quiver_sphere}, having fixed the numerical type of the quiver to $(1,n_0,\dots,n_{s-1})$. In this case the localization formula is given by 
\begin{equation}\label{loc_ZT2_(0,2)}
Z_{T^2}=\frac{1}{(2\pi i)^N} \oint_C Z_{T^2,1-\textrm{loop}}(\tau,z,\mathsf x)
\end{equation}
where C is a real $N$-dimensional cycle in the moduli space of flat connections on $T^2$,
$\mathsf x$ denotes the collection of the coordinates we are integrating over and
\begin{equation}
\begin{split}
\hat Z_{T^2,1-\textrm{loop}}(\tau,z,\mathsf x)=\tilde{\mathcal Z}\left(\prod_{i\neq j}^{n_0}\frac{\theta_1(\tau|u_{ij}^0)\theta_1(\tau|u_{ij}^0-zq+\epsilon)}{\theta_1(\tau|u_{ij}^0+zq/2-\epsilon_1)\theta_1(\tau|u_{ij}^0+zq/2-\epsilon_2)}\right.\\
\left.\prod_{i=1}^{n_0}\frac{1}{\theta_1(\tau|u_i^0+z(q+p)/2-a)\theta_1(\tau|u_i^0-z(q-p)/2-a+\epsilon)}\right)\\
\prod_{k=1}^{s-1}\left(\prod_{i\neq j}^{n_k}\frac{\theta_1(\tau|u_{ij}^k)\theta_1(\tau|u_{ij}^k-zq+\epsilon)}{\theta_1(\tau|u_{ij}^k+zq/2-\epsilon_1)\theta_1(\tau|u_{ij}^k+zq/2-\epsilon_2)}\right.\\
\left.\prod_{i=1}^{n_k}\prod_{j=1}^{n_{k-1}}\frac{\theta_1(\tau|u_i^k-u_j^{k-1}-zq/2+\epsilon_1)\theta_1(\tau|u_i^k-u_j^{k-1}-zq/2+\epsilon_2)}{\theta_1(\tau|u_{j}^{k-1}-u_i^{k})\theta_1(\tau|u_{j}^{k-1}-u_i^{k}+zq-\epsilon)}\right)\\
\prod_{i=1}^{n_1}\theta_1(\tau|u_i^1+z(p-q)/2-a+\epsilon),
\end{split}
\end{equation}
with
\begin{equation}
\begin{split}
\tilde{\mathcal Z}=\prod_{i=0}^{s-1}\left[\frac{1}{n_i!}\left(\frac{2\pi\eta^2(\tau)\theta_1(\tau|-zq+\epsilon)}{\theta_1(\tau|zq/2-\epsilon_1)\theta_1(\tau|zq/2-\epsilon_2)}\right)\right]\frac{(\eta^2(\tau))^{n_0}}{(\iu\eta(\tau))^{n_1}}.
\end{split}
\end{equation}
As was already pointed out in subsection \ref{sec:2d_bulk}, the coupling of the D3-branes to the D7-branes makes the theory anomalous. This chiral anomaly is encoded in the contributions dependent on the fields coupled to the framing, namely $I$ and $J$, which break a chiral half of the original $\mathcal N=(2,2)$ supersymmetry. From the point of view of the localization formula this is most easily made manifest by studying the transformation properties of the integrand under shifts along the generators of the torus. Let us the recall that the Jacobi $\theta_1(\tau|z)$ function is defined in terms of the exponentiated modular parameter $q=\eu^{2\pi\iu\tau}$, $\Im\tau\ge 0$, and $y=\eu^{2\pi\iu z}$ as
$$ \theta_1(\tau|z)=q^{1/8}y^{-1/2}(q,q)_\infty\theta(\tau|z), $$
where $\theta(\tau|z)=(y,p)_\infty (py^{-1},p)_\infty$ and $(a,q)_\infty=\prod_{k=0}^\infty(1-aq^k)$ is the $q-$Pochhammer symbol. By this definition it is easy to see that the Jacobi function $\theta_1(\tau|z)$ is odd in $z$, \ie $\theta_1(\tau|-z)=-\theta_1(\tau|z)$, and that it is quasi-periodic  under shifts $z\to z+a+b\tau$, $a,b\in\mathbb Z$:
$$ \theta_1(\tau|z+a+b\tau)=(-1)^{a+b}\eu^{-2\pi\iu bz}\eu^{-\iu\pi b^2\tau}\theta_1(\tau|z),\qquad\forall a,b\in\mathbb Z. $$
The anomaly then comes from the fact the integrand is unbalanced in terms of the theta functions, exactly due to the presence of $I$ and $J$. The part of the $1-$loop determinant coming from adjoint and bifundamental fields does not contribute to the gauge anomaly, as it comes from an $\mathcal N=(2,2)$ multiplet. As we already explained in subsection \ref{sec:2d_bulk}, we take care of this anomaly by introducing extra Fermi fields $\overline I$ and $\overline J$, which we think can be interpreted as accounting for interactions with $\overline{\rm D7}-$branes. In this way we get that the $T^2$ partition function is corrected by the presence of the $\overline{\rm D7}$ as
\begin{equation}\label{integrand_sphere}
\begin{split}
\hat Z_{T^2,1-\textrm{loop}}^{D3/D7/\overline{D7}}(\tau,z,\mathsf x)=\hat{\mathcal Z}\left(\prod_{i\neq j}^{n_0}\frac{\theta_1(\tau|u_{ij}^0)\theta_1(\tau|u_{ij}^0-zq+\epsilon)}{\theta_1(\tau|u_{ij}^0+zq/2-\epsilon_1)\theta_1(\tau|u_{ij}^0+zq/2-\epsilon_2)}\right.\\
\left.\prod_{i=1}^{n_0}\frac{\theta_1(\tau|u_i^0+zR_{\overline I}/2-\overline a)\theta_1(\tau|u_i^0+zR_{\overline J}/2-\overline a+\epsilon)}{\theta_1(\tau|u_i^0+z(q+p)/2-a)\theta_1(\tau|u_i^0-z(q-p)/2-a+\epsilon)}\right)\\
\prod_{k=1}^{s-1}\left(\prod_{i\neq j}^{n_k}\frac{\theta_1(\tau|u_{ij}^k)\theta_1(\tau|u_{ij}^k-zq+\epsilon)}{\theta_1(\tau|u_{ij}^k+zq/2-\epsilon_1)\theta_1(\tau|u_{ij}^k+zq/2-\epsilon_2)}\right.\\
\left.\prod_{i=1}^{n_k}\prod_{j=1}^{n_{k-1}}\frac{\theta_1(\tau|u_i^k-u_j^{k-1}-zq/2+\epsilon_1)\theta_1(\tau|u_i^k-u_j^{k-1}-zq/2+\epsilon_2)}{\theta_1(\tau|u_{j}^{k-1}-u_i^{k})\theta_1(\tau|u_{j}^{k-1}-u_i^{k}+zq-\epsilon)}\right)\\
\prod_{i=1}^{n_1}\frac{\theta_1(\tau|u_i^1+z(p-q)/2-a+\epsilon)}{\theta_1(\tau|u_i^1+zR_{\overline{J}F}/2-\overline a+\epsilon)},
\end{split}
\end{equation}
with
\begin{equation}
\begin{split}
\hat{\mathcal Z}=(-1)^{n_1}\prod_{i=0}^{s-1}\left[\frac{1}{n_i!}\left(\frac{2\pi\eta^2(\tau)\theta_1(\tau|-zq+\epsilon)}{\theta_1(\tau|zq/2-\epsilon_1)\theta_1(\tau|zq/2-\epsilon_2)}\right)^{n_i}\right].
\end{split}
\end{equation}
Two observations are due here.
\begin{enumerate}
\item An appropriate choice of the $R-$charges of $\overline I$ and $\overline J$ makes it possible to overcome completely the anomaly issue in the integration variables and in the $U(1)_R$ fugacity. However, asking for the double periodicity of the integrand forces us also to impose a constraint on the twisted masses $a$ and $\overline a$, namely $a-\overline a\in\mathbb Z$. This condition is responsible for the fact that introducing the extra fields needed to cure the anomaly doesn't change the fixed point structure of the localization computation. The procedure we adopted has one additional beneficial side-effect. In fact, even though the theory involving the $\overline{\rm D7}$ branes is different from the one we started with, however it is still an interesting quantity, as it should compute a generating functions for insertions of observables, as it was proposed in the D8/$\overline{\rm D8}$ case by Nekrasov in \cite{Nekrasov:2017cih}.
\item As for the second remark, it is interesting to study the QM limit ($\tau\to\iu\infty$) of the partition function at hand. In fact when we shrink one $S^1$ in $T^2$ to a point, we can decouple the contribution of the $\overline{D7}$ branes by taking very large values of $\overline a$ and by then rescaling the relevant gauge coupling. By doing this we recover the $5d$ partition function one can independently compute on $\mathbb R^4\times S^1$, apart from an overall normalization factor. This will give us the equivariant Euler number of the nested Hilbert scheme of points on $\mathbb C^2$, possibly twisted by a power of the determinant line bundle of the Dirac $\slashed D$ operator.
\end{enumerate}

Now, in order to explicitly compute the partition function we need to remember that the Jacobi $\theta_1(\tau|z)$ function does not have any pole, however it has simple zeros on the lattice $z\in\mathbb Z+\tau\mathbb Z$. Moreover it is simple to verify that $\theta(\tau|z)^{-1}$ has residue in $z=\alpha+\beta\tau$ given by the following formula
\begin{equation}
\frac{1}{2\pi\iu}\oint_{z=\alpha+\beta\tau}\frac{1}{\theta_1(\tau|z)}=\frac{(-1)^{\alpha+\beta}\eu^{\iu\beta^2\tau}}{2\pi\eta^3(\tau)}.
\end{equation}
In general a careful analysis of singularities would be needed in order to understand which poles are giving a non-vanishing contribution to the computation of the partition function on $T^2$.  In our particular case the poles contributing to the residue computation will be classified in terms of nested partitions $\mu_1\subseteq\cdots\subseteq\mu_{s-1}\subseteq\mu_0$. 
In principle this result could be obtained via the systematic approach of Jeffrey-Kirwan. Here we follow an alternative procedure
by giving a suitable imaginary part to the twisted masses (for example through the $R-$charges via a redefinition of the relevant parameters, as in \cite{Bonelli:2013rja}) and by closing the integration contour in the lower-half plane. In this particular setting we take care of redefining $a,\,\epsilon_i$ in such a way that $\Im a,\,\Im\epsilon_i<0$ and $\Im a>\Im\epsilon$. By the requirement on the Cartan parameters of the D7-branes, namely $a-\overline a\in\mathbb Z$, we also have $\Im\overline a=\Im a<0$. It is sufficient to study the pole structure of the first two integrations (namely $\{u_j^0\}$ and $\{u_j^1\}$) in \eqref{integrand_sphere}, whose poles and zeros are schematically shown in table \ref{table:integrand_sphere_poles}.
\begin{table}[H]
\centering
\begin{tabular}{|c|c|}\hline\hline
Poles & Zeros \\ \hline\hline
$u_j^0=a$ & $u_j^0=\overline a$\\
$u_j^0=a-\epsilon$ & $u_j^0=\overline a-\epsilon$\\
$u_{ij}^0=\epsilon_1$ & $u_{ij}^0=0$\\
$u_{ij}^0=\epsilon_2$ & $u_{ij}^0=-\epsilon$\\
\hline\hline
$u_j^1=u_i^0$ & $u_j^1=u_i^0-\epsilon_1$\\
$u_j^1=u_i^0-\epsilon$ & $u_j^1=u_i^0-\epsilon_2$\\
$u_j^1=\overline a-\epsilon$ & $u_j^1=a-\epsilon$\\
$u_{ij}^1=\epsilon_1$ & $u_{ij}^1=0$\\
$u_{ij}^1=\epsilon_2$ & $u_{ij}^1=-\epsilon$\\
\hline\hline
\end{tabular}\caption{Poles and zeros for $\{u_j^0\}$ and $\{u_j^1\}$ in \eqref{integrand_sphere}.}\label{table:integrand_sphere_poles}
\end{table}
The integration over the $\{u_j^0\}$ is standard, as it is has the same pole structure of the standard Nekrasov partition function \cite{Bonelli:2013rja,Nekrasov:2002qd}, and the poles contributing to the residue computation will be described by partitions $\mu_0$. Each box in $\mu_0$ will then encode the position of a pole for the first $n_0$ integrations. As for the integrations over the $\{u_j^1\}$ variables we first point out that the 1-loop determinant due to the $\overline{\rm D7}-$brane, as $a-\overline a\in\mathbb Z$ and the corresponding pole falls out of the integration contour. In the same way also poles of the 1-loop determinant of $Q_i$ give a vanishing contributions, because of one out of two different reasons: either the singularity falls out of the integration contour or its contribution is annihilated by a zero coming from the determinants of $\chi_1^{B_i}$. Any pole that might fall outside the Young diagram associated to $\mu_0$ must also be excluded from the computations, because of the flag structure of the quiver in figure \ref{fig:quiver_sphere}. These considerations leads us to the classification of poles of the $\{u_j^1\}$ integrations in terms of partitions as follows: by choosing the order of the integration to be $u_1^1,u_2^1,\dots,u_{n_1}^1$ poles are chosen by successively picking outer corners of $Y_{\mu_0}$ so that the complement in $Y_{\mu_0}$ is still a Young diagram corresponding to a partition $\mu_1$, with $|\mu_1|=n_0-n_1$. The procedure we just described is depicted in figure \ref{fig:poles_picking}.

\begin{figure}[H]
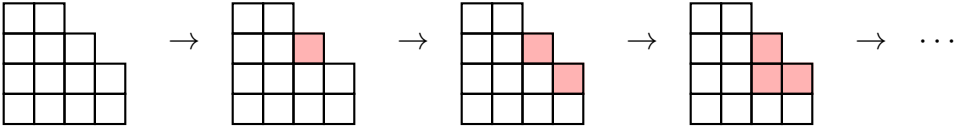

\centering
\ShadedTableauRed{{\ ,\ },{\ ,\ ,\ },{\ ,\ ,\ ,\ },{\ ,\ ,\ ,\ }}
\quad$\to$\quad
\ShadedTableauRed[(3,-1)]{{\ ,\ },{\ ,\ ,\ },{\ ,\ ,\ ,\ },{\ ,\ ,\ ,\ }}
\quad$\to$\quad
\ShadedTableauRed[(3,-1),(4,-2)]{{\ ,\ },{\ ,\ ,\ },{\ ,\ ,\ ,\ },{\ ,\ ,\ ,\ }}
\quad$\to$\quad
\ShadedTableauRed[(3,-1),(4,-2),(3,-2)]{{\ ,\ },{\ ,\ ,\ },{\ ,\ ,\ ,\ },{\ ,\ ,\ ,\ }}
\quad$\to$\quad$\cdots$
\caption{Procedure for picking poles of $\{u_j^1\}$ from $Y_{\mu_0}$.}\label{fig:poles_picking}
\end{figure}
Any successive integration is done in the same way, and the poles contributing to the integration are classified by sequences of nested partitions, as we discussed in \ref{sec:fixed}.

Boxes in the skew partitions $\mu_0\setminus\mu_j$ will denote positions for poles in the $j-$th integration, according to the following rule: a box of $Y_{\mu_0\setminus\mu_k}$ located at position $(i,j)$ inside $Y_{\mu_0}$ (this is required by the nesting phenomenon) corresponds to the coordinate $u_l^{(k)}=a+(i-1)\epsilon_1+(j-1)\epsilon_2$. One thing to be pointed out is that the assignment of a certain Young diagram configuration do in fact specify a particular pole only up to Weyl permutations of the coordinates: because of this we choose a particular ordering of the coordinates and neglect the counting factor $(n_0!\cdots n_{s-1}!)^{-1}$ in $\hat{\mathcal Z}$.

The partition function $Z_{T^2}^{D3/D7/\overline{D7}}$ will then take the following form
\begin{equation}
\begin{split}
Z_{T^2}^{D3/D7/\overline{D7}}=\hat{\mathcal Z}_{\textrm{res}}\sum_{\mu_1\subseteq\cdots\mu_{s-1}\subseteq\mu_0}\left(Z_{\mu_0}(\epsilon_1,\epsilon_2,\overline a)Z^{JF}_{\mu_1,\mu_0}(\epsilon_1,\epsilon_2,\overline a)\prod_{i=0}^{s-2}Z_{\mu_{i+1},\mu_i}(\epsilon_1,\epsilon_2,\overline a)\right),
\end{split}
\end{equation}
with
\begin{align}
\hat{\mathcal Z}_{\textrm{res}}&=(-1)^{n_1}\prod_{i=0}^{s-1}\left[\frac{1}{n_i!}\left(\frac{\theta_1(\tau|-zq+\epsilon)}{\theta_1(\tau|zq/2-\epsilon_1)\theta_1(\tau|zq/2-\epsilon_2)\eta(\tau)}\right)^{n_i}\right]\\
Z_{\mu_0}(\epsilon_1,\epsilon_2,\overline a)&=\prod_{s\in\mu_0\setminus\Box}\frac{\theta_1(\tau|\phi(s)-\overline a)\theta_1(\tau|\phi(s)-\overline a+\epsilon)}{\theta_1(\tau|\phi(s))\theta_1(\tau|\phi(s)+\epsilon)}\prod_{\substack{s\neq s'\\ s,s'\in\mu_0}}\left(\frac{\theta_1(\tau|\phi(s)-\phi(s'))}{\theta_1(\tau|\phi(s)-\phi(s')-\epsilon_1)}\right.\nonumber\\
&\qquad\qquad\left.\frac{\theta_1(\tau|\phi(s)-\phi(s')+\epsilon)}{\theta_1(\tau|\phi(s)-\phi(s')-\epsilon_2)}\right)\\
Z_{\mu_{k+1},\mu_k}(\epsilon_1,\epsilon_2,\overline{a})&=\prod_{\substack{s\neq s'\\ s,s'\in\mu_0\setminus\mu_{k+1}}}\frac{\theta_1(\tau|\phi(s)-\phi(s'))\theta_1(\tau|\phi(s)-\phi(s')+\epsilon)}{\theta_1(\tau|\phi(s)-\phi(s')-\epsilon_1)\theta_1(\tau|\phi(s)-\phi(s')-\epsilon_2)}\cdot\nonumber\\
&\quad\ 	\prod_{\substack{s\in\mu_0\setminus\mu_{k+1}\\ s'\in\mu_0\setminus\mu_k}}\frac{\theta_1(\tau|\phi(s)-\phi(s')+\epsilon_1)\theta_1(\tau|\phi(s)-\phi(s')+\epsilon_2)}{\theta_1(\tau|\phi(s')-\phi(s))\theta_1(\tau|\phi(s')-\phi(s)-\epsilon)}\\
Z^{JF}_{\mu_1,\mu_0}(\epsilon_1,\epsilon_2,\overline{a})&=\prod_{\substack{s\in\mu_0\setminus\mu_1\\ s'\in\mu_0}}\frac{\theta_1(\tau|\phi(s)+\epsilon)}{\theta_1(\tau|\phi(s)-\overline a+\epsilon)}
\end{align}
These formulae are to be compared with the contribution of a quiver with fixed numerical type to $\mathcal Z_1^{\rm ell}(S^2;q_0,\dots,q_{s-1})$, in particular the contribution at each fixed point will be the same as $Z_{(\mu_0,\dots,\mu_{s-1})}^{\rm ell}$, which was defined in section \ref{sec:determinants}, but in principle one could use the same technique in order to compute partition functions in the more general case of a genus $g$ Riemann surface $\mathcal C_g$.

\subsection{General Riemann Surfaces}\label{determinants-g}
When we switch from the genus $0$ case to a generic Riemann surface $\mathcal C_g$ with $1$ puncture, we are effectively turning on a matter bundle corresponding to the contribution of $g$ adjoint hypermultiplets, and the quiver in figure \ref{fig:quiver_sphere} describing the GLSM we are studying gets modified into quiver \ref{fig:quiver_riemann_g}.

\begin{figure}[H]
\centering\vspace{-5mm}
\begin{tikzpicture}
\node[FrameNode](F0) at (2.5+10,-2){$r$};
\node[GaugeNode](Gk0) at (2.5+10,0){$n_0$};
\node[GaugeNode](Gk1) at (5+10,0){$n_1$};
\node[](label) at (7+10,0){$\cdots$};
\node[](dots2) at (2.5+10,1.5){$\vdots$};
\node[](dots2) at (1.3+10,1.3){$2g+2\ \bigg\{$};
\node[GaugeNode](Gkn) at (9+10,0){$n_{s-1}$};
\draw[->-](F0.15+90) to (Gk0.165+90);
\draw[->-](Gk0.195+90) to (F0.345+90);
\draw[->-](Gk1) to (Gk0);
\draw[->-](label) to (Gk1);
\draw[->-](Gkn) to (label);
\draw[-](Gk0) to[out=65,in=115,looseness=12] (Gk0);
\draw[-](Gk0) to[out=70,in=110,looseness=7] (Gk0);
\draw[-](Gk1) to[out=65,in=115,looseness=9] (Gk1);
\draw[-](Gk1) to[out=65+180,in=115+180,looseness=9] (Gk1);
\draw[-](Gkn) to[out=65,in=115,looseness=9] (Gkn);
\draw[-](Gkn) to[out=65+180,in=115+180,looseness=9] (Gkn);
\end{tikzpicture}\caption{Low energy GLSM quiver for a general $\mathcal C_{g,1}$.}\label{fig:quiver_riemann_g}
\end{figure}
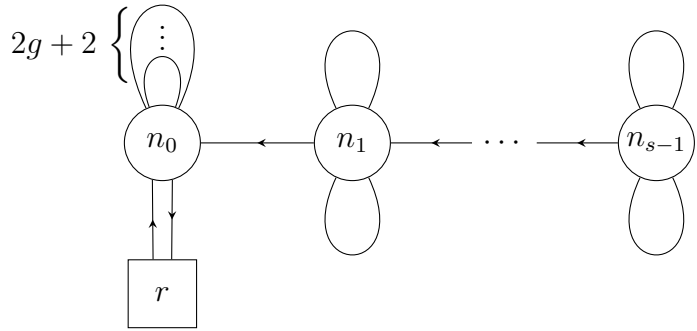

This GLSM encodes the ADHM construction of $\mathcal N_{r,[r^1],n,\mu}$ with additional $g$ hypermultiplets in the adjoint representation, all of them with twisted mass $m$, which reproduces an $\mathcal N=(0,2)^*$ theory.\footnote{Strictly speaking we are dealing with an $\mathcal N=(0,2)^*$ theory only in the case in which $\mathcal C$ is a $g=1$ Riemann surface. In the same spirit we might want to point out that the $5d$ partition function to which the elliptic index is reduced in the QM limit is not really computing the equivariant virtual $\chi_{y}-$genus of the vector bundle $\mathcal V_g$, but rather the equivariant virtual Euler characteristic of an antisymmetric power of $\mathcal V_g$. This also means that the torus partition function is not, strictly speaking, the (equivariant virtual) elliptic genus as it is defined in \cite{fantechi2010}, as it is instead an elliptic generalization of the virtual Euler characteristic.} In the same spirit as in \cite{Bruzzo:2002xf}, from the point of view of the matter fields this consists in introducing $2g$ adjoint chirals and $2g$ fundamental chirals, with appropriate relations dictated by the brane system. As it was the case for the theory without any adjoint hypermultiplet, each of the fundamentals we introduce makes the theory anomalous by breaking a chiral half of the supersymmetry, and this phenomenon can be cured by insertion of observables, encoded in $\overline{\rm D7}$ contributions. The additional field content to \ref{table:ADHM_field_cont} is summarized in table \ref{table:ADHM_field_cont_addition},
\begin{table}[H]
\centering
\begin{tabular}{|r|c|c|c|c|}\hline\hline
 & gauge $G$ & flavour $U(1)\times U(1)^2$ & twisted mass & $R-$charge\\ \hline\hline
$B_3^i$ & $\overline{\Box}_0\otimes\Box_0$ & $\mathbbm 1_{(0,0)}$ & $m$ & $-q+t$ \\
$B_4^i$ & $\overline{\Box}_0\otimes\Box_0$ & $\mathbbm 1_{(-1,-1)}$ & $\epsilon-m$ & $-q-t$ \\
$K_i$ & $\overline\Box_0$ & $\overline\Box_{(0,0)}$ & $a-m$ & $-p-t$ \\
$L_i$ & ${\Box}_0$ & $\Box_{(1,1)}$ & $-a+\epsilon-m$ & $p-t$ \\
$\chi_0^{(3),i}$ & $\overline{\Box}_0\otimes\Box_0$ & $\mathbbm 1_{(0,1)}$ & $\epsilon_1-m$ & $-t$\\
$\chi_0^{(4),i}$ & $\overline{\Box}_0\otimes\Box_0$ & $\mathbbm 1_{(-1,0)}$ & $m-\epsilon_2$  & $t$\\ \hline\hline
\end{tabular}\caption{Hypermultiplet additional fields for quiver \ref{fig:quiver_riemann_g}}\label{table:ADHM_field_cont_addition}
\end{table}
while the ADHM relation on the $n_0$ node must be modified
\begin{equation}
[B_1^0,B_2^0]+\sum_{i=1}^g[B_3^{i\dag},B_4^{i\dag}]+IJ=0
\end{equation}
and the relations \eqref{additional_rels_i}-\eqref{additional_rels_f} must be enforced through $\chi_0^{(3,4),i}$, $K_i$ and $L_i$.
\begin{align}\label{additional_rels_i}
\mathcal E_{3,i}^{\rm adj}&=[B_1,B_3^i]-[B_2^\dag,B_4^{i\dag}]\\
\mathcal E_{4,i}^{\rm adj}&=[B_1,B_4^i]-[B_2^\dag,B_3^{i\dag}]\\
\mathcal E_{K_i}^{\rm fun}&=B_3^iI-B_4^{i\dag} J^\dag\\
\mathcal E_{L_i^\dag}^{\rm fun}&=B_4^iI+B_3^{i\dag} J^\dag \label{additional_rels_f}
\end{align}

The partition function for a general genus $g$ riemann surface $\mathcal C_g$ with one puncture will now read (we take the $r=1$ case for the sake of simplicity)
\begin{equation}\label{ZT2-0-g}
\mathcal Z^{\rm ell}_1(\mathcal C_g;q_0,\dots,q_{s-1})=\sum_{\mu_1\subseteq\cdots\subseteq\mu_0}q_0^{|\mu_0|}q_1^{|\mu_0\setminus\mu_1|}\cdots q_{s-1}^{|\mu_0\setminus\mu_{s-1}|}Z_{(\mu_0,\mu_1,\dots,\mu_{s-1})}^{{\rm ell},g},
\end{equation}
with
\begin{equation}
Z_{(\mu_0,\mu_1,\dots,\mu_{s-1})}^{{\rm ell},g}=\mathcal L_{\mu_0}^{\rm ell}\mathcal N_{\mu_0}^{\rm ell}\overline{\mathcal N}_{\mu_0}^{\rm ell}\mathcal E_{g,\mu_0}^{\rm ell}\overline{\mathcal E}_{g,\mu_0}^{\rm ell}\mathcal T_{\mu_0,\mu_1}^{\rm ell}\overline{\mathcal T}_{\mu_0,\mu_1}^{\rm ell}\mathcal W_{\mu_0,\dots,\mu_{s-1}}^{\rm ell},
\end{equation}
where we defined
\begin{align}
\mathcal E_{g,\mu_0}^{\rm ell}&=\prod_{s\in Y_{\mu_0}}\theta_1^g(\tau|E(s)-m)\theta_1^g(\tau|E(s)-\epsilon+m),\\
\overline{\mathcal E}_{g,\mu_0}^{\rm ell}&=\prod_{s\in\mu_0\setminus\Box}\frac{1}{\theta_1^g(\tau|\phi(s)-\tilde a-m)\theta_1^g(\tau|\phi(s)-\tilde a+\epsilon-m)}.
\end{align}
We remark that by setting $g=0$ we readily recover the function $Z_{(\mu_0,\dots,\mu_{s-1})}^{\rm ell}$ which is needed in order to compute the partition function $\mathcal Z_1^{\rm ell}(S^2;q_0,\dots,q_{s-1})$.

In the same way as we did in section \ref{sec:determinants}, we can compute the full partition function $\mathcal Z_1^{\rm ell}(\mathcal C_g;q_0,\dots,q_{s-1})$ by first summing over the nested partitions $\mu_1\subseteq\cdots\subseteq\mu_{s-1}$ and use the definition \eqref{Pell} of $P^{\rm ell}$ in order to get
\begin{equation}\label{ZT2-g-full}
\mathcal Z^{\rm ell}_1(\mathcal C_g;q_0,\dots,q_{s-1})=\sum_{\mu_0}q_0^{|\mu_0|}\mathcal Y^{\rm ell}_{g,\mu_0}P^{\rm ell}_{\mu_0},
\end{equation}
with the following definition of $\mathcal Y^{\rm ell}_{g,\mu_0}$
\begin{equation}\label{Yell-g}
\mathcal Y^{\rm ell}_{g,\mu_0}=\mathcal L_{\mu_0}^{\rm ell}\mathcal N_{\mu_0}^{\rm ell}\overline{\mathcal N}_{\mu_0}^{\rm ell}\mathcal E^{\rm ell}_{g,\mu_0}\overline{\mathcal E}^{\rm ell}_{g,\mu_0}.
\end{equation}
Again we remark that $P_{\mu_0}^{\rm ell}\xrightarrow{q_{i>0}\to 0}1$ so that $\mathcal Z_{1}^{\rm ell}(\mathcal C_g;q_0,\dots,q_{s-1})\xrightarrow{q_{i>0}\to 0}\mathcal Z_1^{\rm ell}(\mathcal C_g;q_0)$.

\subsubsection{Comet shaped quiver}
Finally, we are interested in computing the partition function on a Riemann surface $\mathcal C_g$ with $k$ punctures of generic holonomy, whose low energy GLSM is in general described by the quiver in figure \ref{fig:quiver_comet}.
\begin{figure}[H]
\centering
\vspace{-5mm}
\begin{tikzpicture}
\node[FrameNode](F) at (2.5+10,-2){$r$};
\node[GaugeNode](Gk0) at (2.5+10,0){$n_0$};
\node[GaugeNode](Gk1) at (5+10,2){$n_1^{(1)}$};
\node[GaugeNode](Gk12) at (5+10,-2){$n_1^{(k)}$};
\node[](label) at (7+10,2){$\cdots$};
\node[](label2) at (7+10,-2){$\cdots$};
\node[](dots1) at (11,0){$\cdots$};
\node[](g) at (11.2,1.05){$2g+2$};
\node[rotate=90](bracket) at (11.2,.75){$\Bigg\}$};
\node[](vdots1) at (5+10,0){$\vdots$};
\node[](vdots2) at (9+10,0){$\vdots$};
\node[GaugeNode](Gkn) at (9+10,2){$n_{s-1}^{(1)}$};
\node[GaugeNode](Gkn2) at (9+10,-2){$n_{s-1}^{(k)}$};
\draw[->-](F.15+90) to (Gk0.165+90);
\draw[->-](Gk0.195+90) to (F.345+90);
\draw[->-](Gk1) to (Gk0);
\draw[->-](Gk12) to (Gk0);
\draw[->-](label) to (Gk1);
\draw[->-](label2) to (Gk12);
\draw[->-](Gkn) to (label);
\draw[->-](Gkn2) to (label2);
\draw[-](Gk0) to[out=60+90,in=120+90,looseness=12] (Gk0);
\draw[-](Gk0) to[out=70+90,in=110+90,looseness=7] (Gk0);
\draw[-](Gk1) to[out=65+180,in=115+180,looseness=8] (Gk1);
\draw[-](Gk1) to[out=65,in=115,looseness=8] (Gk1);
\draw[-](Gkn) to[out=65+180,in=115+180,looseness=8] (Gkn);
\draw[-](Gkn) to[out=65,in=115,looseness=8] (Gkn);
\draw[-](Gk12) to[out=65+180,in=115+180,looseness=8] (Gk12);
\draw[-](Gk12) to[out=65,in=115,looseness=8] (Gk12);
\draw[-](Gkn2) to[out=65+180,in=115+180,looseness=8] (Gkn2);
\draw[-](Gkn2) to[out=65,in=115,looseness=8] (Gkn2);
\end{tikzpicture}\caption{Comet-shaped quiver.}\label{fig:quiver_comet}
\end{figure}

We will start from the case of $\mathcal C_0=S^2$, which will take the form \eqref{ZT2-0-k}
\begin{equation}\label{ZT2-0-k}
\begin{split}
\mathcal Z_k^{\rm ell}(S^2;q_0,\{q_1^i,\dots,q_{s-1}^i\})=\sum_{\mu_0}q_0^{|\mu_0|}\sum_{\{\mu_1^i\subseteq\cdots\subseteq\mu_{s-1}^i\}_{i=1}^k}\prod_{j=1}^k\left(q_1^{|\mu_0\setminus\mu_1^j|}\cdots\right.\\
\left.\cdots q_{s-1}^{|\mu_0\setminus\mu_{s-1}^j|}\right)Z_{(\mu_0,\{\mu^i_1,\dots,\mu^i_{s-1}\})}^{\rm ell}.
\end{split}
\end{equation}
In this case the virtual tangent space to $\mathcal N_{r,[r^1],n,\mu}$ in \eqref{tan_rep_long} will be modified to be of the form \eqref{tan_rep_comet}.
\begin{equation}\label{tan_rep_comet}
\begin{split}
T_Z^{\textrm{vir}}\mathcal N(r,\{n_0^i,\ldots,n_{s-1}^i\})&=\End(V_0)\otimes(Q-1-\Lambda^2Q)+\Hom(W,V_0)+\Hom(V_0,W)\otimes\Lambda^2Q\\
&\quad-\sum_{i=1}^k\Hom(V_1^{(k)},W)\otimes\Lambda^2Q+\\
&\quad+\sum_{i=1}^k\sum_{\ell=1}^{s-1}\left[\End(V_{\ell}^{(k)})-\Hom(V_{\ell}^{(k)},V_{\ell-1}^{(k)})\right]\otimes(Q-1-\Lambda^2Q).
\end{split}
\end{equation}

By a simple generalization of the computations leading to \eqref{tan_rep_long_char} it is possible to see that $\mathcal Z_k^{\rm ell}(S^2;q_0,\{q_1^i,\dots,q_{s-1}^i\})$ takes a form similar to \eqref{ZT2-0-full}, as is shown in \eqref{ZT2-0-k-full}
\begin{equation}\label{ZT2-0-k-full}
\mathcal Z_k^{\rm ell}(S^2;q_0,\{q_1^i,\dots,q_{s-1}^i\})=\sum_{\mu_0}q_0^{|\mu_0|}\mathcal Y^{\rm ell}_{\mu_0}\prod_{i=1}^kP^{{\rm ell},i}_{\mu_0},
\end{equation}
with
\begin{equation}\label{Pell-i}
P^{{\rm ell},i}_{\mu_0}=\sum_{\mu_1^i\subseteq\cdots\subseteq\mu_{s-1}^i}\mathcal T_{\mu_0,\mu_1^i}^{\rm ell}\overline{\mathcal T}_{\mu_0,\mu_1^i}^{\rm ell}\mathcal W_{\mu_0,\dots,\mu_{s-1}^i}^{\rm ell}\left(q_1^i\right)^{|\mu_0\setminus\mu_1^i|}\cdots \left(q_{s-1}^i\right)^{|\mu_0\setminus\mu_{s-1}^i|},
\end{equation}
and the functions $\mathcal T_{\mu_0,\mu_1^i}^{\rm ell}$, $\overline{\mathcal T}_{\mu_0,\mu_1^i}^{\rm ell}$ and $\mathcal W_{\mu_0,\dots,\mu_{s-1}^i}$ take the same form as in equations \eqref{CS-T2}-\eqref{WT2}.

By a completely analogous procedure we can get that partition function of the low energy theory relative to a general Riemann surface of genus $g$, possibly $g=0$. By using the results of section \ref{determinants-g}, we easily see that
\begin{equation}\label{ZT2-g-k}
\begin{split}
\mathcal Z_k^{\rm ell}(\mathcal C_g;q_0,\{q_1^i,\dots,q_{s-1}^i\})=\sum_{\mu_0}q_0^{|\mu_0|}\sum_{\{\mu_1^i\subseteq\cdots\subseteq\mu_{s-1}^i\}_{i=1}^k}\prod_{j=1}^k\left(q_1^{|\mu_0\setminus\mu_1^j|}\cdots\right.\\
\left.\cdots q_{s-1}^{|\mu_0\setminus\mu_{s-1}^j|}\right)Z_{(\mu_0,\{\mu^i_1,\dots,\mu^i_{s-1}\})}^{{\rm ell},g}.
\end{split}
\end{equation}
By turning on the matter bundle described in section \ref{determinants-g} on the moduli space of nested instantons $\mathcal N(r,n_0,\{n_1^i,\dots,n_{s-1}^i\})$, whose virtual tangent space is given in equation \eqref{tan_rep_comet} as an element of the representation ring of the torus $R(T)$, the suspersymmetric localization theorem (or equivalently the equivariant one) gives us \eqref{ZT2-g-k-full},
\begin{equation}\label{ZT2-g-k-full}
\mathcal Z_k^{\rm ell}(\mathcal C_g;q_0,\{q_1^i,\dots,q_{s-1}^i\})=\sum_{\mu_0}q_0^{|\mu_0|}\mathcal Y^{\rm ell}_{g,\mu_0}\prod_{i=1}^kP^{{\rm ell},i}_{\mu_0},
\end{equation}
where $P^{{\rm ell},i}_{\mu_0}$ is defined in \eqref{Pell-i} and $\mathcal Y_{g,\mu_0}^{\rm ell}$ is the same one as in equation \eqref{Yell-g}.

A couple of final remarks are due here. First of all we notice that we can switch off any number of the contributions of the tails of the comet shaped quiver \ref{fig:quiver_comet} by taking the limit to $0$ of the respective instanton counting parameters. Then, given any $k'<k$ we have that
\begin{equation}
\mathcal Z_k^{\rm ell}(\mathcal C_g;q_0,\{q_1^i,\dots,q_{s-1}^i\})\xrightarrow[\substack{i=1,\dots,s-1\\ j=k'+1,\dots,k}]{q_i^j\to 0}\mathcal Z_{k'}^{\rm ell}(\mathcal C_g;q_0,\{q_1^i,\dots,q_{s-1}^i\}).
\end{equation}

Moreover, we expect our partitions functions to be computing the equivariant elliptic cohomology of the moduli spaces of stable representations of quivers \ref{fig:quiver_sphere}-\ref{fig:quiver_comet}, as in \cite{Rimanyi:2019ubu}.

\subsection{Limit to supersymmetric quantum mechanics}\label{sec:partition_functions_QM}
We now want to study a particular dimensional reduction of the $2d$ $\mathcal N=(0,2)$ system we studied on $T^2$ in the previous subsections. By reducing on a circle we get the Witten index of an $\mathcal N=2$ SQM. This dimensional reduction can be obtained from the elliptic case we just studied by taking the limit $\eu^{2\pi\iu\tau}\to 0$. In this scaling limit we can use the fact that $\theta_1(\tau|z)\to 2q^{1/8}\sin(\pi z)$ as $q=\eu^{2\pi\iu\tau}\to 0$. In the resulting theory on $S^1$ we can decouple the $\overline{\rm D7}$ branes by taking very large values of the Cartan parameter $\overline a$ and then rescaling the gauge coupling. As we already anticipated, we will see how the results we obtain by this procedure compute particular equivariant virtual invariants of the bundle $\mathcal V_g$ over the moduli space of nested instantons $\mathcal N_{r,[r^1],n,\mu}$, which is described by the stable representations of the quiver in figure \ref{fig:quiver_riemann_g}. A bit of care is required in order to take the correct scaling limit, and in particular one has to require that $q\to 0$ while $\vol(T^2)\to\beta=r_{S^1}$. Moreover, one should take into account that in the $S^1$ theory twisted masses are also rescaled by $\beta$, so that the result may be expressed in terms of $q_1=\eu^{\beta\epsilon_1/2}$, $q_2=\eu^{\beta\epsilon_2/2}$ and $y=\eu^{-\beta m}$.

The geometric interpretation of the Witten index of the quiver gauge theories described in the previous section is the equivariant (virtual) Euler characteristic of a given bundle over the moduli space of nested instantons. Then, computing the Witten index geometrically amounts to studying the stable representations in the category of vector spaces of the quiver \ref{fig:quiver_local} under suitable stability conditions. This procedure has the advantage of letting us compute the weight decomposition of the virtual tangent space $T^{\rm vir}_Z\mathcal N_{r,[r^1],n,\mu}$ at the fixed points $Z$ in the representation ring of the torus. The way in which this is done is very briefly described in section \ref{sec:fixed}. As it is shown in subsection \ref{sec:fixed}, the fixed locus of the torus action consists only of isolated points, which are characterized in terms of $s-$tuples of nested colored partitions $\boldsymbol\mu_1\subseteq\cdots\boldsymbol\mu_{s-1}\subseteq\boldsymbol\mu_0$, such that $|\boldsymbol\mu_0|=n_0=n$, while $|\boldsymbol\mu_j|=n_0-n_j$.

Once the fixed point locus has been completely characterized and a weight decomposition of the virtual tangent space is at hand, one can in full generality define an $s-$parameter family of partition functions on $\mathcal N_{r,[r^1],n,\mu}$, with parameters $\bold p=(p_0,p_1,\dots,p_{s-1})\in\mathbb Z^{s}$. In terms of the quiver vector spaces $(W,V_0,\dots,V_{s-1})$ one can introduce $(s+1)-$tautological bundles $\mathcal W$ and $\mathcal V_i$, $i=0,\dots,s-1$, with $\mathcal W=\mathcal O_{\mathcal N_{r,[r^1],n,\mu}}$. We can then define $\mathcal L_i=\det\mathcal V_i$, $\mathcal L_{\bold p}=\bigotimes_i\mathcal L_i^{\otimes p_i}$ and compute the virtual Euler characteristic of the bundle $S\otimes\mathcal L_{\bold p}$ over $\mathcal N_{r,[r^1],n,\mu}$, with $S$ an arbitrary irreducible representation of $T$. The generating function of the virtual Euler characteristics of the moduli space of nested instantons in \eqref{gen_euler} will then reproduce the QM partition function \ref{S1_partitionfunction}, when $\bold p=(1-g,0,\dots,0)$.\footnote{It is interesting to compare the role of this line bundle $\mathcal L$ to the way in which the Cern-Simons term was introduced in section \ref{sec:QM_reduction}. In particular it turns out that the vector space $V_0$ can be recognized to be the space of fermionic zero modes, \cite{Kim:2002qm,Tachikawa:2004ur,Nakajima:2003uh}, so that the identification of $\mathcal L_{(1,0,\dots,0)}=\det\mathcal V_0$ with $\Det\slashed D$ is in fact quite natural.}
\begin{equation}\label{gen_euler}
Z_{\bold p}^{\textrm{vir}}(q_1,q_2,\bold x)=\sum_{\bold n\in\mathbb Z^{s}_{\ge 0}}\ch_{T}\chi^{\textrm{vir}}_{T}\left(\mathcal N_{r,[r^1],n,\mu},\mathcal L_{\bold p}\right)\prod_{i=1}^{s}x_i^{n_i}.
\end{equation}
In the following we use the notation $\ch_T$ to denote the $T-$equivariant Chern character of a vector bundle, which has a very convenient representation in the representation ring $R(T)$. The usual Chern character is defined as follows: if $E$ is rank $r$ vector bundle over $X$, with Chern roots $x_1,\dots,x_r$, then one defines
\begin{equation}
\ch(E)=\sum_{i=1}^r\eu^{x_i},
\end{equation}
which can be equivariantly extended to a ring homomorphism $\ch_G:K^i_G(X)\to H^i_G(\hat X,\mathbb C)$, where $\hat X=\{(x,g)\in X\times G|xg=x\}=\coprod_gX^g$ and $H^i_G(\hat X,\mathbb C)\simeq\left[\bigoplus_gH^i(X^g,\mathbb C)\right]^G$. The effect of $\ch_G$ can be concretely characterized as follows: if $E$ is a $G-$equivariant vector bundle on $X$, for each $x\in X^g$, we can compute the eigenvalues (supposed to be distinct) $\lambda_1,\dots,\lambda_r$ of the $G-$action, and the corresponding eigenspaces $E_x^1,\dots,E_x^r$, so that $E|_{X^g}$ can be represented as the direct sum of vector bundles
\begin{equation}
E_{X^g}=E^1\oplus\cdots\oplus E^r.
\end{equation}
Finally one defines $\ch_g(E)=\sum_i\lambda_i\ch(E^i)$, so that
\begin{equation}
\ch_G(E)=\bigoplus_{g\in G}\ch_g(E)\in\left[\bigoplus_{g\in G}H^{\rm ev}(X^g,\mathbb C)\right]^G.
\end{equation}

The Chern character, and also the equivariant Chern character, satisfies some important properties which we will use extensively in the following:
\begin{equation}
\ch(E\oplus F)=\ch E+\ch F,\qquad\ch(E\otimes F)=\ch E\ch F.
\end{equation}

If we restrict to the case $\bold p=(p_0,0,\dots,0)$, the fiber of $\mathcal L_{\bold p}$ at a fixed point $Z\leftrightarrow\boldsymbol\mu_1\subseteq\cdots\subseteq\boldsymbol\mu_{s-1}\subseteq\boldsymbol\mu_0$ will be given by \eqref{line_bdl_fix},
\begin{equation}\label{line_bdl_fix}
\mathcal L_Z=\mathcal L_{\boldsymbol\mu_0}=\left(\prod_{\alpha=1}^r\prod_{i=1}^{M_0^{(a)}}\prod_{j=1}^{\mu_{0,i}^{(a)\prime}}T_{a_\alpha}T_1^{-i+1}T_2^{-j+1}\right)^{p_0},
\end{equation}
where $(T_1,T_2,T_{a_1},\dots,T_{a_r})$ denote the fundamental characters of $T\times\left(\mathbb C^*\right)^r\curvearrowright\mathcal N_{r,[r^1],n,\mu}$ and $M_0^{(a)}=\mu_{0,1}^{(a)\prime}$, $N_0^{(a)}=\mu_{0,1}^{(a)}$.

Then supersymmetric localization (equiv. equivariant localization) can be exploited in order to compute partition functions (equiv. virtual equivariant Euler characteristics). If we start from the case $g=0$ we get
\begin{equation}
\begin{split}
\ch_{T}\left[\chi_{T}^{\textrm{vir}}\left(\mathcal N(r,\bold n),\mathcal L_{\bold p}\right)\right]&=\sum_{Z\in\mathcal N_{r,[r^1],n,\mu}^{T}}\frac{\ch_{T}\mathcal L_Z}{\Lambda_{-1}\left[T_Z^{\textrm{vir}}\mathcal N_{r,[r^1],n,\mu}^\vee\right]}\\
&=\sum_{\boldsymbol\mu_1\subseteq\cdots\subseteq\boldsymbol\mu_{s-1}\subseteq\boldsymbol\mu_0}\left(\frac{\mathcal L_{\boldsymbol\mu_0}(q_1,q_2)}{\Lambda_{-1}\left[T_{\tilde Z}\mathcal M_{r,n_0}^\vee\right]}\mathcal T_{\boldsymbol\mu_0,\boldsymbol\mu_1}(q_1,q_2)\cdot\right.\\
&\qquad\qquad\qquad\qquad\qquad\qquad\qquad\cdot\mathcal W_{\boldsymbol\mu_{0},\dots,\boldsymbol\mu_{s-1}}(q_1,q_2)\Bigg),
\end{split}
\end{equation}
with $\Lambda_t(E)=\sum_{i\ge 0}t^i\Lambda^i E$ for any (equivariant) vector bundle $E$ on $\mathcal N_{r,[r^1],n,\mu}$, $\mathcal L_{\boldsymbol\mu_0}(q_1,q_2)$, $\mathcal W_{\boldsymbol\mu_0,\dots,\boldsymbol\mu_{s-1}}(q_1,q_2)$ and $\mathcal T_{\boldsymbol\mu_{0},\boldsymbol\mu_1}(q_1,q_2)$ given by equations \eqref{part_f_defs_i}-\eqref{part_f_defs_f},
\begin{align}\label{part_f_defs_i}
\mathcal L_{\boldsymbol\mu_0}(q_1,q_2)&=\left(\prod_{a=1}^r\prod_{i=1}^{M_0^{(a)}}\prod_{j=1}^{\mu_{0,i}^{(a)\prime}}\rho_a q_1^{i-1}q_2^{j-1}\right)^{p_0}\\
\mathcal T_{\boldsymbol\mu_0,\boldsymbol\mu_1}(q_1,q_2)&=\prod_{a=1}^r\prod_{i=1}^{M_0^{(a)}}\prod_{j=1}^{\mu_{0,i}^{(a)}-\mu_{1,i}^{(a)}}\left(1-\rho_a q_1^{-i}q_2^{-j-\mu_{1,i}^{(a)\prime}}\right)\\ \nonumber
\mathcal W_{\boldsymbol\mu_{0},\dots,\boldsymbol\mu_{s-1}}(q_1,q_2)&=\prod_{k=0}^{s-2}\prod_{a,b=1}^r\prod_{i=1}^{M_0^{(a)}}\prod_{j=1}^{N_0^{(b)}}\left(\frac{(1-\rho_a\rho_b^{-1}q_1^{\mu_{k,j}^{(b)}-i}q_2^{j-\mu_{k+1,i}^{(a)\prime}-1})}{(1-\rho_a\rho_b^{-1}q_1^{\mu_{k+1,j}^{(b)}-i}q_2^{j-\mu_{k+1,i}^{(a)\prime}-1})}\right.\\ \label{part_f_defs_f}
&\qquad\qquad\qquad\qquad\qquad\left.\frac{(1-\rho_a\rho_b^{-1}q_1^{\mu_{k+1,j}^{(b)}-i}q_2^{j-\mu_{0,i}^{(a)\prime}-1})}{(1-\rho_a\rho_b^{-1}q_1^{\mu_{k,j}^{(b)}-i}q_2^{j-\mu_{0,i}^{(a)\prime}-1})}\right)
\end{align}
with $\rho_i=\ch T_{a_i}$ and similarly $q_i=\ch T_i$.

The generalization to the case of a general Riemann surface $\mathcal C_g$ of genus $g$ is immediate, as it only amounts to computing the ``virtual Hirzebruch $\chi_y-$genus'' of the bundle $\pi^*\mathcal V_g\to\mathcal N_{r,[r^1],n,\mu}$. This is obviously the same as turning on a matter bundle relative to additional $g$ adjoint hypermultiplets, whose twisted mass $m$ is naturally identified with $y$ in the Hirzebruch genus by exponentiation.
\begin{equation}\label{chiy_tail}
\begin{split}
\ch_{T}\chi_y^{T,{\rm vir}}\left(\pi^*\mathcal V_g,\mathcal N_{r,[r^1],n,\mu}\right)=\sum_{\boldsymbol\mu_1\subseteq\cdots\subseteq\boldsymbol\mu_{s-1}\subseteq\boldsymbol\mu_0}\left(\frac{\ch_{T}(\mathcal L_{\boldsymbol\mu_0})\ch_{T}\Lambda_{y}[(T_{\boldsymbol\mu_0}\mathcal M_{r,n_0}^\vee)^{\oplus g}]}{\ch_{T}\Lambda_{-1}[T_{\boldsymbol\mu_0}\mathcal M_{r,n_0}^\vee]}\cdot\right.\\
\left.\cdot\mathcal T_{\boldsymbol\mu_1,\boldsymbol\mu_0}(q_1,q_2)\prod_{i=0}^{s-2}\mathcal W_{\boldsymbol\mu_{i+1},\boldsymbol\mu_i}^{\boldsymbol\mu_0}(q_1,q_2)\right).
\end{split}
\end{equation}
Surprisingly enough, explicit computations suggest that the partition function of each choice of numerical type for the nested instantons quiver should consists of a usual Nekrasov partition function multiplied by a polynomial in the torus characters. This observation is summarized in the following conjecture.
\begin{conjecture}
The function $\sum_{\boldsymbol\mu_{i>0}}\mathcal T_{\boldsymbol\mu_0,\boldsymbol\mu_1}(q_1,q_2)\mathcal W_{\boldsymbol\mu_0,\dots,\boldsymbol\mu_{s-1}}(q_1,q_2)$ is a polynomial in $q=q_1^{-1}$ and $t=q_2^{-1}$ with rational coefficients in the $\{\rho_i\}_{1\le i\le r}$, while it is a polynomial with integer coefficients when $r=1$.
\end{conjecture}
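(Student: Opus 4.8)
The plan is to identify the sum with a \emph{proper} virtual Euler characteristic and to deduce polynomiality from properness. Fix the numerical type $(1,n_0,\dots,n_{s-1})$ and write $P_{\boldsymbol\mu_0}$ for the sum $\sum_{\boldsymbol\mu_1\subseteq\cdots\subseteq\boldsymbol\mu_{s-1}\subseteq\boldsymbol\mu_0}\mathcal T_{\boldsymbol\mu_0,\boldsymbol\mu_1}(q_1,q_2)\,\mathcal W_{\boldsymbol\mu_0,\dots,\boldsymbol\mu_{s-1}}(q_1,q_2)$. Let $\tilde Z\leftrightarrow\boldsymbol\mu_0$ be the corresponding $T$-fixed ADHM datum of $\mathcal M_{r,n_0}$ and let $\mathcal F_{\boldsymbol\mu_0}:=\pi^{-1}(\tilde Z)$ be the fibre of the projection $\pi:\mathcal N(r,n_0,\dots,n_{s-1})\to\mathcal M_{r,n_0}$. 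Over this fibre the ADHM maps $(B_1^0,B_2^0,I,J)$ are pinned to the monomial datum $\tilde Z$, so the relations \eqref{rels-hilb} say precisely that $\mathcal F_{\boldsymbol\mu_0}$ parametrises flags $V_{s-1}\subseteq\cdots\subseteq V_1\subseteq V_0$ of $\tilde B_1,\tilde B_2$-invariant subspaces with $\dim V_j=n_j$ and $V_1\subseteq\ker\tilde J$. Hence $\mathcal F_{\boldsymbol\mu_0}$ is a closed subscheme of the flag variety $\mathrm{Fl}(n_{s-1},\dots,n_1;V_0)$, in particular \emph{proper}, and by Proposition~\ref{prop:fixed_locus} its $T$-fixed locus is the finite set of nested coloured partitions $\boldsymbol\mu_1\subseteq\cdots\subseteq\boldsymbol\mu_{s-1}\subseteq\boldsymbol\mu_0$ indexing the sum.

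Next I would endow $\mathcal F_{\boldsymbol\mu_0}$ (equivalently $\mathcal N(r,n_0,\dots,n_{s-1})$, then restrict to the fibre) with a $T$-equivariant perfect obstruction theory whose virtual tangent complex is the two-term complex of $\Hom$/$\Ext$ spaces of the nested-instanton quiver with relations, and whose character at a fixed point $Z$ reproduces the ``tail'' part of \eqref{tan_rep_long}, i.e.\ the tail part of \eqref{tan_rep_long_char} (including the $(s-1)(T_1T_2)$ correction that accounts for the tracelessness of the interior commutators). Granting this, $K$-theoretic virtual localisation \cite{fantechi2010} gives
\[
\chi^{\mathrm{vir}}_T\!\left(\mathcal F_{\boldsymbol\mu_0},\mathcal O^{\mathrm{vir}}_{\mathcal F_{\boldsymbol\mu_0}}\right)=\sum_{\boldsymbol\mu_1\subseteq\cdots\subseteq\boldsymbol\mu_{s-1}}\frac{1}{\Lambda_{-1}\bigl[(T^{\mathrm{vir}}_Z\mathcal F_{\boldsymbol\mu_0})^\vee\bigr]}=P_{\boldsymbol\mu_0},
\]
the last equality being the routine rearrangement of $\mathcal T_{\boldsymbol\mu_0,\boldsymbol\mu_1}\mathcal W_{\boldsymbol\mu_0,\dots,\boldsymbol\mu_{s-1}}$ into numerator and denominator $\Lambda_{-1}$-factors. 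Since $\mathcal F_{\boldsymbol\mu_0}$ is proper, $R\Gamma$ is a bounded complex of finite-dimensional $T\times(\mathbb C^*)^r$-representations, so the left-hand side lies in $R\!\left(T\times(\mathbb C^*)^r\right)=\mathbb Z[q_1^{\pm},q_2^{\pm},\rho_1^{\pm},\dots,\rho_r^{\pm}]$: all poles in the summands cancel, which proves polynomiality, and for $r=1$ (where $R(T)=\mathbb Z[q_1^{\pm},q_2^{\pm}]$) it is automatically a Laurent polynomial in $q,t$ with \emph{integer} coefficients. To upgrade ``Laurent polynomial'' to ``polynomial'' in $q=q_1^{-1},t=q_2^{-1}$ with the asserted sign, I would check that the $T$-weights of $H^\bullet(\mathcal F_{\boldsymbol\mu_0},\mathcal O^{\mathrm{vir}})$ lie in the cone $\mathbb Z_{\le0}\epsilon_1+\mathbb Z_{\le0}\epsilon_2$; this follows from a $T$-invariant affine paving of $\mathcal F_{\boldsymbol\mu_0}$ by Bia\l ynicki-Birula cells attached to the nested partitions, whose tangent weights are forced to be of that sign by the nesting inequalities.

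\paragraph{Main obstacle and fallback.}
The crux is the second step: producing the $T$-equivariant perfect obstruction theory on the \emph{singular} spaces $\mathcal F_{\boldsymbol\mu_0}$ (equivalently $\mathcal N_{r,[r^1],n,\mu}$) and checking its fixed-point fibres match \eqref{tan_rep_long_char} termwise — one must show \eqref{rels-hilb} cuts out the morphism space with the expected conormal complex, that the resulting complex has global perfect amplitude in $[-1,0]$, and control the Weyl/tracelessness corrections. For the two-step case ($n_j=0$, $j\ge2$) much of this is available from \cite{flach_jardim}, but in general the paper itself only asserts such a theory ``is expected to exist''. If this input is unavailable, a more elementary fallback proves polynomiality by a direct residue computation: the candidate poles of $\mathcal W_{\boldsymbol\mu_0,\dots,\boldsymbol\mu_{s-1}}$ lie on the hyperplanes $\rho_a\rho_b^{-1}q_1^{\alpha}q_2^{\beta}=1$, and for each such hyperplane the residues sum to zero over all nested sequences, the cancellation being organised as a telescoping over the height of the offending row/column, treated node by node from $k=s-2$ down to $k=0$. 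This recovers polynomiality in $q,t$ and rationality in the $\rho_i$, but the sharper $r=1$ integrality really seems to require the geometric interpretation above.
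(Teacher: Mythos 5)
First, a point of calibration: the paper does not prove this statement. It is stated as a \emph{conjecture}, supported only by the explicit low-order computations displayed later (the examples for $\mathcal N(1,n_0,n_1)$ with $n_0+n_1\le 6$, where the sums visibly collapse to modified Macdonald polynomials or their virtual deformations). So there is no ``paper's proof'' to compare against, and your proposal should be judged as an independent attempt.

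As such, your strategy is geometrically sensible and correctly identifies the right mechanism for polynomiality: the inner sum over $\boldsymbol\mu_{i>0}$ ranges over the $T$-fixed points of the fibre $\pi^{-1}(\tilde Z)$, which is closed in a flag variety of $V_0$ and hence proper, so a $K$-theoretic virtual localisation formula would express $P_{\boldsymbol\mu_0}$ as $\chi^{\rm vir}$ of a proper scheme and force all poles to cancel, with integrality for $r=1$ coming from $R(T)=\mathbb Z[q_1^\pm,q_2^\pm]$. However, this is a program rather than a proof, and the gap you flag yourself is genuinely the whole content of the problem: (i) the paper only asserts that a perfect obstruction theory on $\mathcal N_{r,\underline\lambda,n,\underline\mu}$ ``is expected to exist'' beyond the two-step case of von Flach--Jardim, and even granting one on the total space, extracting a perfect obstruction theory on the fibre (or a $\pi$-relative one compatible with proper pushforward) whose localised class reproduces $\mathcal T_{\boldsymbol\mu_0,\boldsymbol\mu_1}\mathcal W_{\boldsymbol\mu_0,\dots,\boldsymbol\mu_{s-1}}$ \emph{termwise}, including the $(s-1)(T_1T_2)$ tracelessness correction in \eqref{tan_rep_long_char}, is not automatic and is nowhere established; (ii) your upgrade from ``Laurent polynomial'' to ``polynomial in $q=q_1^{-1}$, $t=q_2^{-1}$'' via a Bia\l ynicki-Birula paving does not go through as stated for the \emph{virtual} structure sheaf --- the weights of $H^\bullet(\mathcal F_{\boldsymbol\mu_0},\mathcal O^{\rm vir})$ are controlled by the obstruction bundle as well as the tangent cells, so the cone condition needs a separate argument; and (iii) the residue/telescoping fallback is only sketched and, as you note, cannot by itself yield the $r=1$ integrality. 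So the proposal is a plausible and genuinely different route from the paper's (which offers none), but it does not close the conjecture; making step (i) precise would in itself be a publishable contribution.
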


\subsection{Comparison to LHRV formulae}
The Nekrasov partition function on $\mathbb R^4\times S^1$ is known to have the following form
\begin{equation}\label{Nek_R4S1}
Z_{k,N}^{\mathbb R^4\times S^1}=\sum_{\bold Y_k}\prod_{\lambda,\tilde\lambda=1}^N\prod_{s\in Y_\lambda}\frac{\sinh\left[\frac{\beta}{2}(E(s)-m))\right]\sinh\left[\frac{\beta}{2}(E(s)-\epsilon+m)\right]}{\sinh\left[\frac{\beta}{2}E(s)\right]\sinh\left[\frac{\beta}{2}(E(s)-\epsilon)\right]},
\end{equation}
where $E(s)=a_{\lambda\tilde\lambda}-\epsilon_1h(s)+\epsilon_2(v(s)+1)$, and given two Young diagrams $Y_\lambda,Y_{\tilde\lambda}\in\bold Y_k$ the quantities $h(s)$ and $v(s)$ are defined to be $h(s)=\nu_{i_\lambda}-j_\lambda$ and $v(s)=\tilde\nu'_{j_\lambda}-i_\lambda$. We will be interested in the specialization of the Nekrasov partition function to the case $N=1$, so that $h(s)$ and $v(s)$ will become respectively the arm length $a(s)$ and leg length $l(s)$ for the box $s$ in the Young tableaux classifying a given pole configuration.
Now, following the conventions of \cite{HLRV}, let $\bold x_1=\{x_{1,1},x_{1,2},\dots\}$ and $\bold x_k=\{x_{k,1},x_{k,2},\dots\}$ be $k$ infinite sets of variables and let moreover $\Lambda(\bold x_1),\dots,\Lambda(\bold x_k)$ be the corresponding rings of symmetric functions. Given a partition $\lambda$, $\tilde H_\lambda(\bold x;q,t)\in\Lambda(\bold x)\otimes_{\mathbb Z}\mathbb Q[q,t]$ will denote the modified Macdonald symmetric function. The $k-$point genus $g$ Cauchy function $\Omega(z,w)$, with coefficients in $\mathbb Q[z,w]\otimes_{\mathbb Z}\Lambda(\bold x_1,\dots,\bold x_k)$, is defined as follows
\begin{equation}
\Omega(z,w)=\sum_{\lambda\in\mathcal P}\mathcal H_\lambda(z,w)\prod_{i=1}^k\tilde H_\lambda(\bold x_i;z^2,w^2),
\end{equation}
with
\begin{equation}
\mathcal H_\lambda(z,w)=\prod_{s\in\lambda}\frac{(z^{2a(s)+1}-w^{2l(s)+1})^{2g}}{(z^{2a(s)+2-w^{2l(s)}})(z^{2a(s)}-w^{2l(s)+2})}.
\end{equation}
The modified Macdonald polynomials $\tilde H_\lambda(\bold x;q,t)$ are defined as
\begin{equation}
\tilde H_\lambda(\bold x;q,t)=\sum_{\mu}\tilde K_{\mu\lambda}(q,t)s_\mu(\bold x),
\end{equation}
where $s_\lambda(\bold x)$ are the usual Schur functions, while $\tilde K_{\lambda\mu}(q,t)$ denotes the modified Kostka polynomials, which are expressed in terms of the usual Kostka polynomials as
\begin{equation}
\tilde K_{\lambda\mu}(q,t)=t^{n(\mu)}K_{\lambda\mu}(q,t^{-1}),
\end{equation}
with $n(\mu)=\sum_{i=1}^{l(\mu)}\mu_i(i-1)$, and $K_{\lambda\mu}(q,t)$ can be interpreted as being a deformation of the Kostka coefficients $K_{\lambda\mu}$ appearing in the expansion of the Schur polynomials in terms of the monomial symmetric functions:
\begin{equation}
s_\lambda(\bold x)=\sum_{\mu}K_{\lambda\mu}m_\mu(\bold x).
\end{equation}

The modified Macdonald polynomials by themselves can be viewed as a $q-$deformation of the standard Hall-Littlewood polynomials, and are related in a non trivial way to the Macdonald polynomials $P_\mu(\bold x;q,t)$, which are eigenfunctions of the trigonometric Ruijsenaars-Schneider Hamiltonian \cite{Koroteev:2015dja,Garsia4313}:
\begin{equation}
\tilde H_\lambda[X;q,t]=t^{n(\lambda)}J_\lambda\left[\frac{X}{1-1/t};q,1/t\right],
\end{equation}
where $X$ denotes the plethystic substitution $X=x_1+x_2+x_3+\cdots$, the square brackets are to be intended as a plethystic insertion and
\begin{equation}
J_\lambda(\bold x;q,t)=\prod_{s\in\lambda}\left(1-q^{a_\lambda(s)}t^{l_\lambda(s)+1}\right)P_\lambda(\bold x;q,t).
\end{equation}
The modified Macdonald polynomials are also eigenfunctions of a linear operator $\Delta$, \cite{haiman2002}, which acts on a symmetric function $f$ as
\begin{equation}
\Delta f=\left.f\left[X+\frac{(1-q)(1-t)}{z}\right]\Omega[-zX]\right|_{z^0},
\end{equation}
where $\Omega[X]=\sum_{n=0}^\infty h_n(X)$.

We will think to $\Omega(z,w)$ as being a function associated to a genus $g$ Riemann surface with $k$ punctures. Moreover, if we are give $\boldsymbol\mu=(\mu^1,\dots,\mu^k)\in\mathcal P^k$ we can define the following function
\begin{equation}
\mathbb H_{\boldsymbol\mu}(z,w)=(z^2-1)(1-w^2)\langle\PL\Omega(z,w),h_{\boldsymbol\mu}\rangle,
\end{equation}
where $h_{\boldsymbol\mu}=h_{\mu^1}(\bold x_1)\cdots h_{\mu^k}(\bold x_k)\in\Lambda(\bold x_1,\dots,\bold x_k)$ are the complete symmetric functions, and $\langle\cdot,\cdot\rangle$ is an extension of the Hall pairing. The interest in $\mathbb H_{\boldsymbol\mu}(z,w)$ lays in the fact that it encodes information both about $GL_n(\mathbb C)$ character varieties $\mathcal M_{\boldsymbol\mu}$ of $k-$punctured genus $g$ Riemann surfaces with generic semisimple conjugacy classes of type $\boldsymbol\mu$ at the punctures and about comet-shaped quivers $\mathcal Q_{\boldsymbol\mu}$ with $g$ loops and $k$ tails of length defined by $\boldsymbol\mu$. It is in fact conjectured that through the knowledge of $\mathbb H_{\boldsymbol\mu}(z,w)$ we can get the mixed Hodge polynomial and the $E-$polynomial (and thus the Euler characteristic) of both these character varieties and quiver varieties.

If we now study the particular case of comet-shaped quivers with $k=1$, $l(\mu)=1$ and $g=1$, whose corresponding quiver is the Jordan quiver, we can specialize $\bold x=(T,0,\dots)$ for some variable $T$ and $\tilde H_\lambda(T,0,\dots;z,w)=T^{|\lambda|}$, so that
\begin{equation}\label{Omega_Jordan}
\Omega(z,w)=\sum_{k}\sum_{|\lambda|=k}\prod\frac{\left(z^{2a(s)+1}-w^{2l(s)+1}\right)^2}{\left(z^{2a(s)+2}-w^{2l(s)}\right)\left(z^{2a(s)}-w^{2l(s)+2}\right)}T^{|\lambda|}.
\end{equation}
If we now compare \eqref{Omega_Jordan} to \eqref{Nek_R4S1} in the case $N=1$, with $m=\epsilon/2$, we can immediately see how closely $\Omega(z,w)$ resembles to $\sum_{k}Z_{k,1}^{\mathbb R^4\times S^1}q^k$ as long as we make the identifications $z^2=\eu^{\beta\epsilon_1}$, $w^2=\eu^{\beta\epsilon_2}$ and $T=q$, $q$ being the instanton counting parameter. 

If we next take $g$ to be arbitrary, but still take $k=1$ and $l(\mu)=1$ a generalization of our previous observations is straightforward. In fact, as we already pointed out in the previous sections, adding loops to the Jordan quiver has the net effect of introducing $2g+2$ matter fields $B_1$, $B_2$, $B_3^{(i)}$, $B_4^{(i)}$ (with $i=1,\dots,g$) transforming in the adjoint representation of the gauge group $U(k)$. The role played by each of the $B_3^{(i)}$, $B_4^{(i)}$ fields is analogous to the one of $B_3$ and $B_4$ in the ADHM linear sigma model with adjoint matter. Since all of these fields do not contribute with poles to the residue computation of the localization formula, if we choose their twisted masses and $R-$charges to be the same as the ones for $B_3$ and $B_4$ their net effect will be that of introducing a $g-$th power to the numerator of \eqref{Nek_R4S1} (which really is the meaning of turning on a matter bundle for $g$ adjoint hypermultiplets twisted by their mass $m$).

Actually one needs to turn on a Chern-Simons coupling in order to exactly reproduce $\Omega(z,w)$ starting from a gauge theory. In fact we can rewrite \eqref{Omega_Jordan} as
\begin{equation}
\begin{split}
\Omega(z,w)=&\sum_k\sum_{|\lambda|=k}\prod_{s\in\lambda}\left[(-1)^{g-1}\frac{(z^{2a(s)+1}w^{2l(s)+1})^g}{z^{2a(s)+2}w^{2l(s)+2}}\cdot\right.\\
&\left.\cdot\frac{(1-z^{-2a(s)-1}w^{2l(s)+1})^g(1-z^{2a(s)+1}w^{-2l(s)+1})^g}{(1-z^{-2a(s)-2}w^{2l(s)})(1-z^{2a(s)}w^{-2l(s)-2})}T^{|\lambda|}\right]
\end{split}
\end{equation}
and we can easily see that
\begin{displaymath}
\begin{split}
\prod_{s\in\lambda}\frac{(z^{2a(s)+1}w^{2l(s)+1})^g}{z^{2a(s)+2}w^{2l(s)+2}}&=\frac{1}{(zw)^{|\lambda|}}\prod_{s\in\lambda}(z^{2a(s)+2}w^{2l(s)+2})^{g-1}\\
&=\frac{1}{(zw)^{|\lambda|}}\left(z^{2\sum_s(a(s)+1)}w^{2\sum_s(l(s)+1)}\right)^{g-1}\\
&=\frac{1}{(zw)^{|\lambda|}}\left(z^{2\sum_s i(s)}w^{2\sum_s j(s)}\right)^{g-1}\\
&=\frac{(zw)^{|\lambda|(2g-2)}}{\eu^{a(g-1)|\lambda|}(zw)^{|\lambda|}}\prod_{s\in\lambda}\left(\eu^az^{2(i(s)-1)}w^{2(j(s)-1)}\right)^{g-1},
\end{split}
\end{displaymath}
which, apart from a harmless overall normalization, is the contribution of a Chern-Simons interaction at level $1-g$, \cite{Tachikawa:2004ur}. Thus we conclude that the partition function for the $5d$ $\mathcal N=1^*$ ADHM quiver theory with $g$ adjoint hypermultiplets and a Chern-Simons term at level $1-g$ reproduces the Cauchy function \eqref{Omega_Jordang} when resummed over all the instanton sectors (see also \cite{Chuang:2012dv}).
\begin{equation}\label{Omega_Jordang}
\Omega(z,w)=\sum_{k}\sum_{|\lambda|=k}\prod\frac{\left(z^{2a(s)+1}-w^{2l(s)+1}\right)^{2g}}{\left(z^{2a(s)+2}-w^{2l(s)}\right)\left(z^{2a(s)}-w^{2l(s)+2}\right)}T^{|\lambda|}.
\end{equation}

As it was shown in \cite{bruzzo2011,Chuang:2013wpa}, one interesting thing to point out in equation \eqref{Omega_Jordang} is that it computes a generating function for a geometric index. It is actually known that the moduli space of stable representations for the ADHM data \eqref{ADHM_data} is isomorphic to the Hilbert scheme of $\dim(V)=n$ points in $\mathbb C^2$ when $\dim(W)=1$.
\begin{equation}\label{ADHM_data}
\begin{tikzcd}
V \arrow[out=70,in=110,loop,swap,"B_1"] \arrow[out=250,in=290,loop,swap,"B_2"] \arrow[r,shift left=.5ex,"J"] & W \arrow[l,shift left=.5ex,"I"]
\end{tikzcd},\qquad [B_1,B_2]+IJ=0
\end{equation}
Then $\Omega_\lambda(q_1,q_2)$ such that $\Omega(z,w)=\sum_k\Omega_\lambda(z^2,w^2)T^{|\lambda|}$ is computing the Hirzebruch $\chi_y-$genus of a vector bundle over $(\mathbb C^2)^{[n]}$. In particular we have \cite{Chuang:2010ii,Chuang:2013wpa}
\begin{equation}
\begin{split}
\sum_{\lambda\in\mathcal P(n)}\Omega_\lambda(q_1,q_2,y)&=\ch_{ T}\chi_y\left[\left(T^\vee(\mathbb C^2)^{[n]}\right)^{\oplus g}\otimes\left(\det\mathcal T\right)^{1-g},(\mathbb C^2)^{[n]}\right]\\
&=\sum_{\lambda\in\mathcal P(n)}\frac{\ch_{ T}\left(\det\mathcal T\right)^{1-g}\ch_{ T}\Lambda_{y}\left[(T_\lambda^\vee(\mathbb C^2)^{[n]})^{\oplus g}\right]}{\ch_{ T}\Lambda_{-1}\left[T^\vee_\lambda(\mathbb C^2)^{[n]}\right]},
\end{split}
\end{equation}
where $\det\mathcal T$ denotes the determinant line bundle on $(\mathbb C^2)^{[n]}$ and $y=\eu^{-m}$.

It was proved in \cite{Chuang:2013wpa} that a similar result holds true also for the genus $g$ Cauchy function relative to punctured Riemann surfaces with non-trivial holonomy around the punctures. In the case of a single puncture (assumed to be generic) of type $\mu$, the Cauchy function at fixed $|\lambda|=n$ computes the residual equivariant Hirzebruch genus of a vector bundle over a nested Hilbert scheme of $n$ points $\mathcal N_{1,[1^1],n,\boldsymbol\mu}$ on $\mathbb C^2$:
\begin{equation}\label{char_vecbun_nested}
\sum_{\lambda\in\mathcal P(n)}\mathcal H_\lambda(z,w)\tilde H_\lambda(\bold x;z^2,w^2)=\ch_{ T}\chi_y\left[\pi^*\mathcal V_{g},\mathcal N_{1,[1^1],n,\boldsymbol\mu}\right],
\end{equation}
where $\pi:\mathcal N_{1,[1^1],n,\boldsymbol\mu}\to(\mathbb C^2)^{[n]}$ is the natural projection of the nested Hilbert scheme to the underlying Hilbert scheme of $n$ points on $\mathbb C^2$, and $\mathcal V_g=\left(T^\vee(\mathbb C^2)^{[n]}\right)^{\oplus g}\otimes(\det\mathcal T)^{1-g}$. Moreover the rhs of \eqref{char_vecbun_nested} can be computed in terms only of characters of vector bundles over $(\mathbb C^2)^{[n]}=\Hilb^n(\mathbb C^2)$ due to a result by Haiman, \cite{Chuang:2013wpa,haiman}, and we have that
\begin{equation}
\ch_{T}\chi_y\left[\pi^*\mathcal V_{g},\mathcal N_{1,[1^1],n,\boldsymbol\mu}\right]=\sum_{\lambda\in\mathcal P(n)}\frac{\ch_{ T}\left(\det\mathcal T\right)^{1-g}\ch_{ T}\Lambda_{y}\left[(T_\lambda^\vee(\mathbb C^2)^{[n]})^{\oplus g}\right]}{\ch_{ T}\Lambda_{-1}\left[T^\vee_\lambda(\mathbb C^2)^{[n]}\right]}\ch_{ T}(\mathcal P^\gamma_\mu),
\end{equation}
where $\mathcal P^\gamma$ is a vector bundle over $(\mathbb C^2)^{[n]}$ whose fibers over closed points $[I]\in(\mathbb C^2)^{[n]}$ are isomorphic to permutation representations of $\mathcal S_n$.

By virtue of what we showed in subsection
\ref{sec:partition_functions_QM}, we expect our results to give a virtual refinement of the formulae found in \cite{Chuang:2013wpa,HLRV}. For the sake of simplicity, let us start from studying the case of a quiver consisting of only two gauge nodes and $r=1$, corresponding to a complex curve $\mathcal C$ of genus $g=0$. We already computed in subsection \ref{sec:partition_functions_QM} the partition function relative to any generic quiver of the type shown in figure \ref{fig:quiver_local}, with $(r_0,r_1,\dots,r_{s-1})=(r,0,\dots,0)$.
We will then be computing the generating function
\begin{equation}
Z^{(p_0,p_1)}_{\textrm{vir}}=\sum_{\bold n\in\mathbb Z^2_{\ge 0}}Z^{(p_0,p_1)}_{\bold n}\prod_{i=0}^1x_i^{n_i}=\sum_{\bold n\in\mathbb Z_{\ge 0}^2}\ch_{T}\chi_{T}^{\textrm{vir}}\left(\mathcal N_{1,[1^1],n,\gamma(\bold n)},\mathcal L_{(p_0,p_1)}\right)\prod_{i=0}^1x_i^{n_i},
\end{equation}
where $\gamma(\bold n)$ is the ordered sequence determined by $n_i$ determining the relevant quiver variety of numerical type $(1,\hat n_0,\hat n_1)$.

We will restrict our attention to $\bold p=(p_0,0)$, in which case the restriction $\mathcal L_Z$ of $\mathcal L_{(p_0,0)}$ to the fixed point under $T\curvearrowright\mathcal N_{1,[1^1],n,\gamma(\bold n)}$ is
\begin{equation}
\mathcal L_Z=\left(\prod_{i=1}^{M_1}\prod_{j=1}^{\nu_i'}T_1^{-i+1}T_2^{-j+1}\right)^{p_0}.
\end{equation}
The result obtained in subsection \ref{sec:partition_functions_QM} by means of SUSY localization then specializes in this case to the form \eqref{vir_tang_char}:
\begin{equation}\label{vir_tang_char}
\begin{split}
Z^{(p_0,0)}_{\bold n}=\sum_{\substack{Z=(\nu,\mu)\\ (|\nu|,|\mu|)=\gamma(\bold n)}}\frac{\ch_{T}\mathcal L_Z}{\Lambda_{-1}\left[T_Z^{\textrm{vir}}\mathcal N_{1,[1^1],n,\gamma(\bold n)}^\vee\right]}=\sum_{\substack{Z=(\nu,\mu)\\ (|\nu|,|\mu|)=\gamma(\bold n)}}\frac{\mathcal L_{\nu}(q_1,q_2)\tilde{\mathcal W}_{(\nu,\mu)}(q_1,q_2)}{\Lambda_{-1}\left[T_{\tilde Z}\mathcal M_{1,n_0}^\vee\right]},
\end{split}
\end{equation}
with
\begin{equation}
\mathcal L_{\nu}(q_1,q_2)=\left(\prod_{i=1}^{M_1}\prod_{j=1}^{\nu_i'}q_1^{i-1}q_2^{j-1}\right)^{p_0},
\end{equation}
and
\begin{equation}
\tilde{\mathcal W}_{(\nu,\mu)}=\prod_{i=1}^{M_1}\prod_{j=1}^{N_1}\frac{(1-q_1^{\mu_j-i}q_2^{j-\nu'_i-1})(1-q_1^{-i}q_2^{j-\mu'_i-1})}{(1-q_1^{\mu_j-i}q_2^{(j-\mu'_i-1})(1-q_1^{-i}q_2^{j-\nu'_i-1})}\prod_{i=1}^{M_1}\prod_{j=1}^{\nu'_i-\mu'_i}\frac{(1-q_1^{-i}q_2^{-j-\mu'_i})}{(1-q_1^{-1}q_2^{-1})},
\end{equation}
where, as usual, $q_1=\ch_{ T}T_1$ and $q_2=\ch_{ T}T_2$.


In order to support our conjecture that the quiver we studied so far do indeed provide an ADHM-type construction for the nested Hilbert scheme of points on $\mathbb C^2$ we will show some relevant examples in the following. In the two-steps quiver case this is true by a result of \cite{flach_jardim}, which moreover implies that the non-abelian quiver provides an ADHM description for the moduli space of framed torsion-free flags of sheaves on $\mathbb P^2$. A very brief review of the result of \cite{flach_jardim} which are useful for what follows can be found in appendix \ref{appendix:fixed_pts}. Even in the two-steps case we can still compare the results coming from direct localization computations to the formulae in \cite{HLRV,Chuang:2013wpa}. In particular, since the nested Hilbert scheme of points is known to be non smooth except for the case $(n_0,n_1)=(n,1)$ or $(n_0,n_1)=(n,0)$, the polynomials we get multiplied by the Nekrasov partition function order by order are expected to reproduce the modified Macdonald polynomials $\tilde H_\lambda(\bold x;q,t)$ when $n_1=1$. For the sake of ease of comparison, in what follows we will use the notation $\mathcal N(r,n_0,\dots,n_{s-1})$, which is found in \cite{flach_jardim,Chuang:2013wpa}, instead of $\mathcal N_{r,[r^1],n,\mu}$.

\begin{example}
If $\bold n=(n,0)$ we need to compute the partition function for $\mathcal N(1,n,0)$, and obviously the partition function reproduces the result in equation \eqref{Omega_Jordang}, for $g=0$.
\end{example}

\begin{example}
Take $\bold n=(1,1)$, so that $\mathcal F(1,1,1)\simeq\mathcal N(1,2,1)\simeq\Hilb^{(1,2)}(\mathbb C^2)$, \cite{flach_jardim}. We have two different choices for the fixed points:
\begin{equation}
(\nu,\mu)=\vcenter{\hbox{\ShadedTableau[(1,0)]{{\ ,\ }}}}=(2^1,1^1)\qquad\qquad\emph{or}\qquad\qquad (\nu,\mu)=\vcenter{\hbox{\ShadedTableau[(1,-1)]{\ ,\ }}}=(1^2,1^1)
\end{equation}
and we have for the partition function
\begin{equation}
Z_{\bold n}^{(1-g,0)}(\bold x;q,t)=\sum_{\nu}Z_{\bold n,\nu}^{(1-g,0)}(\bold x;q,t)=x_0x_1\left\{\sum_{(\nu,\mu)}\frac{\mathcal L_{\nu}(q^{-1},t^{-1})\tilde{\mathcal W}_{(\nu,\mu)}(q^{-1},t^{-1})}{\Lambda_{-1}\left[T_{\tilde Z}\mathcal M_{1,n_0}^\vee\right]}\right\}
\end{equation}
with
\begin{equation}
\left\{
\begin{aligned}
&Z_{\bold n,2^1}^{(1-g,0)}(\bold x;q,t)=\frac{\mathcal L_{2^1}(q^{-1},t^{-1})}{\Lambda_{-1}\left[T_{2^1}\mathcal M_{1,n_0}^\vee\right]}(1+q)x_0x_1\\
&Z_{\bold n,1^2}^{(1-g,0)}(\bold x;q,t)=\frac{\mathcal L_{1^2}(q^{-1},t^{-1})}{\Lambda_{-1}\left[T_{1^2}\mathcal M_{1,n_0}^\vee\right]}(1+t)x_0x_1
\end{aligned}\right.
\end{equation}
By putting together with the previous example, we have that
\begin{equation}
\begin{split}
Z_{|\bold n|=2}^{1-g,0}&=\sum_{\nu\in\mathcal P(2)}\frac{\mathcal L_{\nu}(q^{-1},t^{-1})}{\Lambda_{-1}\left[T_{\nu}\mathcal M_{1,n_0}^\vee\right]}\tilde H_{\nu}(x_0,x_1;q,t)\\
&=\sum_{\nu\in\mathcal P(2)}\mathcal H_\nu^{g=0}(z,w)\tilde H_\nu(x_0,x_1;z^2,w^2)
\end{split}
\end{equation}
We want to point out that the elliptic counterpart to the polynomials determined by $\tilde{\mathcal W}_{(\nu,\mu)}$ are the following
\begin{equation}\label{duepall}
\left\{
\begin{aligned}
&\left.P^{\rm ell}_{\ShadedTableauS[]{{\ ,\ }}}(\bold x;\epsilon_1,\epsilon_2)\right|_{x_0x_1}=\frac{\theta_1(\tau|2\epsilon_1)}{\theta_1(\tau|\epsilon_1)},\\
&\left.P^{\rm ell}_{\ShadedTableauS[]{\ ,\ }}(\bold x;\epsilon_1,\epsilon_2)\right|_{x_0x_1}=\frac{\theta_1(\tau|2\epsilon_2)}{\theta_1(\tau|\epsilon_2)},
\end{aligned}
\right.
\end{equation}
which obviously reduce to the corresponding modified Macdonald polynomials coefficients when $\tau\to\iu\infty$.
\end{example}

\begin{example}
Let's now consider $\bold n$ to be such that $n_0+n_1=3$. The only quantity we need to compute is related to $\bold n=(2,1)$, which corresponds to $\mathcal N(1,3,1)$. We have the following possibilities for the fixed points:
\begin{equation}
\{(\nu,\mu)\}=\left\{\vcenter{\hbox{\ShadedTableau[(1,0),(2,0)]{{\ ,\ ,\ }}}}\ ,\ \vcenter{\hbox{\ShadedTableau[(1,-1),(2,-1)]{{\ },{\ ,\ }}}}\ ,\ \vcenter{\hbox{\ShadedTableau[(1,-1),(1,0)]{{\ },{\ ,\ }}}}\ ,\ \vcenter{\hbox{\ShadedTableau[(1,-2),(1,-1)]{\ ,\ ,\ }}}\right\}
\end{equation}
and
\begin{equation}
\left\{
\begin{aligned}
&\tilde{\mathcal W}_{\vcenter{\hbox{\ShadedTableauS[(1,0),(2,0)]{{\ ,\ ,\ }}}}}(q^{-1},t^{-1})=(1+q+q^2)\\
&\tilde{\mathcal W}_{\vcenter{\hbox{\ShadedTableauS[(1,-1),(2,-1)]{{\ },{\ ,\ }}}}}(q^{-1},t^{-1})+\mathcal W_{\vcenter{\hbox{\ShadedTableauS[(1,-1),(1,0)]{{\ },{\ ,\ }}}}}(q^{-1},t^{-1})=(1+q+t)\\
&\tilde{\mathcal W}_{\vcenter{\hbox{\ShadedTableauS[(1,-2),(1,-1)]{\ ,\ ,\ }}}}(q^{-1},t^{-1})=(1+t+t^2)
\end{aligned}
\right.
\end{equation}
As in the previous example, we can exhibit explicitly the elliptic counterparts to these modified Macdonald polynomials, which read:
\begin{equation}\label{trepall}
\left\{
\begin{aligned}
&\left.P^{\rm ell}_{\ShadedTableauS[]{{\ ,\ ,\ }}}(\bold x;\epsilon_1,\epsilon_2)\right|_{x_0^2x_1}=\frac{\theta_1(\tau|3\epsilon_1)}{\theta_1(\tau|\epsilon_1)},\\
&\left.P^{\rm ell}_{\ShadedTableauS[]{{\ },{\ ,\ }}}(\bold x;\epsilon_1,\epsilon_2)\right|_{x_0^2x_1}=\left(\frac{\theta_1(\tau|2\epsilon_1-\epsilon_2)}{\theta_1(\tau|\epsilon_1-\epsilon_2)}+\frac{\theta_1(\tau|2\epsilon_2-\epsilon_1)}{\theta_1(\tau|\epsilon_2-\epsilon_1)}\right),\\
&\left.P^{\rm ell}_{\ShadedTableauS[]{\ ,\ ,\ }}(\bold x;\epsilon_1,\epsilon_2)\right|_{x_0^2x_1}=\frac{\theta_1(\tau|3\epsilon_2)}{\theta_1(\tau|\epsilon_2)}.
\end{aligned}
\right.
\end{equation}
\end{example}

\begin{example}
As a final example of a smooth nested Hilbert scheme of points we will take $\mathcal N(1,4,1)$, so that the fixed points will be
\begin{equation}
\{(\nu,\mu)\}=\left\{
\vcenter{\hbox{\ShadedTableau[(1,0),(2,0),(3,0)]{{\ ,\ ,\ ,\ }}}}\ ,\ 
\vcenter{\hbox{\ShadedTableau[(1,-1),(2,-1),(3,-1)]{{\ },{\ ,\ ,\ }}}}\ ,\ 
\vcenter{\hbox{\ShadedTableau[(1,0),(1,-1),(2,-1)]{{\ },{\ ,\ ,\ }}}}\ ,\ 
\vcenter{\hbox{\ShadedTableau[(1,0),(1,-1),(2,-1)]{{\ ,\ },{\ ,\ }}}}\ ,\ 
\vcenter{\hbox{\ShadedTableau[(1,0),(1,-1),(1,-2)]{{\ },{\ },{\ ,\ }}}}\ ,\ 
\vcenter{\hbox{\ShadedTableau[(1,-1),(1,-2),(2,-2)]{{\ },{\ },{\ ,\ }}}}\ ,\ 
\vcenter{\hbox{\ShadedTableau[(1,-1),(1,-2),(1,-3)]{\ ,\ ,\ ,\ }}}
\right\}
\end{equation}
by which we get
\begin{equation}
\left\{
\begin{aligned}
&\tilde{\mathcal W}_{\vcenter{\hbox{\ShadedTableauS[(1,0),(2,0),(3,0)]{{\ ,\ ,\ ,\ }}}}}(q^{-1},t^{-1})=(1+q+q^2+q^3)\\
&\tilde{\mathcal W}_{\vcenter{\hbox{\ShadedTableauS[(1,-1),(2,-1),(3,-1)]{{\ },{\ ,\ ,\ }}}}}(q^{-1},t^{-1})+\tilde{\mathcal W}_{\vcenter{\hbox{\ShadedTableauS[(1,0),(1,-1),(2,-1)]{{\ },{\ ,\ ,\ }}}}}(q^{-1},t^{-1})=(1+q+q^2+t)\\
&\tilde{\mathcal W}_{\vcenter{\hbox{\ShadedTableauS[(1,0),(1,-1),(2,-1)]{{\ ,\ },{\ ,\ }}}}}(q^{-1},t^{-1})=(1+q+t+qt)\\
&\tilde{\mathcal W}_{\vcenter{\hbox{\ShadedTableauS[(1,0),(1,-1),(1,-2)]{{\ },{\ },{\ ,\ }}}}}(q^{-1},t^{-1})+\tilde{\mathcal W}_{\vcenter{\hbox{\ShadedTableauS[(1,-1),(1,-2),(2,-2)]{{\ },{\ },{\ ,\ }}}}}(q^{-1},t^{-1})=(1+t+t^2+q)\\
&\tilde{\mathcal W}_{\vcenter{\hbox{\ShadedTableauS[(1,-1),(1,-2),(1,-3)]{\ ,\ ,\ ,\ }}}}(q^{-1},t^{-1})=(1+t+t^2+t^3)\\
\end{aligned}
\right.
\end{equation}
which again reproduce modified Macdonald polynomials which can be found tabulated in the mathematical literature. Their elliptic counterpart is now given by:
\begin{equation}\label{quattropall}
\left\{
\begin{aligned}
&\left.P^{\rm ell}_{\ShadedTableauS[]{{\ ,\ ,\ ,\ }}}(\bold x;\epsilon_1,\epsilon_2)\right|_{x_0^3x_1}=\frac{\theta_1(\tau|4\epsilon_1)}{\theta_1(\tau|\epsilon_1)},\\
&\left.P^{\rm ell}_{\ShadedTableauS[]{{\ },{\ ,\ ,\ }}}(\bold x;\epsilon_1,\epsilon_2)\right|_{x_0^3x_1}=\left(\frac{\theta_1(\tau|2\epsilon_1)}{\theta_1(\tau|\epsilon_1)}\frac{\theta_1(\tau|3\epsilon_1-\epsilon_2)}{\theta_1(\tau|2\epsilon_1-\epsilon_2)}+\frac{\theta_1(\tau|2\epsilon_2-2\epsilon_1)}{\theta_1(\tau|\epsilon_2-2\epsilon_1)}\right),\\
&\left.P^{\rm ell}_{\ShadedTableauS[]{{\ ,\ },{\ ,\ }}}(\bold x;\epsilon_1,\epsilon_2)\right|_{x_0^3x_1}=\left(\frac{\theta_1(\tau|2\epsilon_1)}{\theta_1(\tau|\epsilon_1)}\frac{\theta_1(\tau|2\epsilon_2)}{\theta_1(\tau|\epsilon_2)}\right),\\
&\left.P^{\rm ell}_{\ShadedTableauS[]{{\ },{\ },{\ ,\ }}}(\bold x;\epsilon_1,\epsilon_2)\right|_{x_0^3x_1}=\left(\frac{\theta_1(\tau|2\epsilon_2)}{\theta_1(\tau|\epsilon_2)}\frac{\theta_1(\tau|3\epsilon_2-\epsilon_1)}{\theta_1(\tau|2\epsilon_2-\epsilon_1)}+\frac{\theta_1(\tau|2\epsilon_1-2\epsilon_2)}{\theta_1(\tau|\epsilon_1-2\epsilon_2)}\right),\\
&\left.P^{\rm ell}_{\ShadedTableauS[]{\ ,\ ,\ ,\ }}(\bold x;\epsilon_1,\epsilon_2)\right|_{x_0^3x_1}=\frac{\theta_1(\tau|4\epsilon_2)}{\theta_1(\tau|\epsilon_2)}.
\end{aligned}
\right.
\end{equation}
\end{example}

The following is the easiest example of a non smooth nested Hilbert scheme, namely $\mathcal N(1,4,2)$, and we can see how in this case our computation doesn't reproduce the $\chi_y$ genus of \cite{Chuang:2013wpa}, hence the formulae of \cite{HLRV}, giving instead their virtual generalization.

\begin{example}
Take $(n_0,n_1)=(4,2)$. The prescription for the fixed points gives us
\begin{equation}
\{(\nu,\mu)\}=\left\{
\vcenter{\hbox{\ShadedTableau[(1,0),(2,0)]{{\ ,\ ,\ ,\ }}}}\ ,\ 
\vcenter{\hbox{\ShadedTableau[(1,-1),(2,-1)]{{\ },{\ ,\ ,\ }}}}\ ,\ 
\vcenter{\hbox{\ShadedTableau[(1,0),(1,-1)]{{\ },{\ ,\ ,\ }}}}\ ,\ 
\vcenter{\hbox{\ShadedTableau[(1,0),(1,-1)]{{\ ,\ },{\ ,\ }}}}\ ,\ 
\vcenter{\hbox{\ShadedTableau[(1,-1),(2,-1)]{{\ ,\ },{\ ,\ }}}}\ ,\
\vcenter{\hbox{\ShadedTableau[(1,-1),(1,-2)]{{\ },{\ },{\ ,\ }}}}\ ,\ 
\vcenter{\hbox{\ShadedTableau[(1,-2),(2,-2)]{{\ },{\ },{\ ,\ }}}}\ ,\
\vcenter{\hbox{\ShadedTableau[(1,-2),(1,-3)]{\ ,\ ,\ ,\ }}}
\right\}
\end{equation}

by which we get
\begin{equation}
\left\{
\begin{aligned}
&\tilde{\mathcal W}_{\vcenter{\hbox{\ShadedTableauS[(1,0),(2,0)]{{\ ,\ ,\ ,\ }}}}}(q^{-1},t^{-1})=1+q+2q^2+q^3+q^4-q^2t-q^3t-2q^4t-q^5t-q^6t\\
&\tilde{\mathcal W}_{\vcenter{\hbox{\ShadedTableauS[(1,-1),(2,-1)]{{\ },{\ ,\ ,\ }}}}}(q^{-1},t^{-1})+\tilde{\mathcal W}_{\vcenter{\hbox{\ShadedTableauS[(1,0),(1,-1)]{{\ },{\ ,\ ,\ }}}}}(q^{-1},t^{-1})=1+q+2q^2+t+qt-q^2t-q^3t\\&\qquad\qquad\qquad\qquad\qquad\qquad\qquad\quad-q^4t-qt^2-q^2t^2-q^3t^2\\
&\tilde{\mathcal W}_{\vcenter{\hbox{\ShadedTableauS[(1,0),(1,-1)]{{\ ,\ },{\ ,\ }}}}}(q^{-1},t^{-1})+\tilde{\mathcal W}_{\vcenter{\hbox{\ShadedTableauS[(1,-1),(2,-1)]{{\ ,\ },{\ ,\ }}}}}(q^{-1},t^{-1})=1+q+q^2+t+qt+t^2-q^2t-qt^2\\&\qquad\qquad\qquad\qquad\qquad\qquad\qquad\quad-2q^2t^2-q^3t^2-q^2t^3\\
&\tilde{\mathcal W}_{\vcenter{\hbox{\ShadedTableauS[(1,-1),(1,-2)]{{\ },{\ },{\ ,\ }}}}}(q^{-1},t^{-1})+\tilde{\mathcal W}_{\vcenter{\hbox{\ShadedTableauS[(1,-2),(2,-2)]{{\ },{\ },{\ ,\ }}}}}(q^{-1},t^{-1})=1+q+t+qt+2t^2-q^2t-qt^2\\&\qquad\qquad\qquad\qquad\qquad\qquad\qquad\quad-q^2t^2-qt^3-q^2t^3-qt^4\\
&\tilde{\mathcal W}_{\vcenter{\hbox{\ShadedTableauS[(1,-2),(1,-3)]{\ ,\ ,\ ,\ }}}}(q^{-1},t^{-1})=1+t+2t^2+t^3+t^4-qt^2-qt^3-2qt^4-qt^5-qt^6\\
\end{aligned}.
\right.
\end{equation}
The polynomials above contain the coefficients for the modified Macdonald polynomials which in this case read
\begin{equation}
\left\{
\begin{aligned}
\tilde H_{\vcenter{\hbox{\ShadedTableauS[]{{\ ,\ ,\ ,\ }}}}}(q,t)|_{x_0^2x_1^2}&=1+q+2q^2+q^3+q^4\\
\tilde H_{\vcenter{\hbox{\ShadedTableauS[]{{\ },{\ ,\ ,\ }}}}}(q,t)|_{x_0^2x_1^2}&=1+q+2q^2+t+qt\\
\tilde H_{\vcenter{\hbox{\ShadedTableauS[]{{\ ,\ },{\ ,\ }}}}}(q,t)|_{x_0^2x_1^2}&=1+q+q^2+t+qt+t^2\\
\tilde H_{\vcenter{\hbox{\ShadedTableauS[]{{\ },{\ },{\ ,\ }}}}}(q,t)|_{x_0^2x_1^2}&=1+q+t+qt+2t^2\\
\tilde H_{\vcenter{\hbox{\ShadedTableauS[]{\ ,\ ,\ ,\ }}}}(q,t)|_{x_0^2x_1^2}&=1+t+2t^2+t^3+t^4
\end{aligned}
\right.
\end{equation}
\end{example}

As a final remark let us point out that, even though the GLSM partition function is naturally computing virtual invariants, as the moduli space $\mathcal N(r,n_0,n_1)$ is in general a singular quasi-projective variety, \cite{cheah}, however one should be able to use equivariant localization to compute usual topological invariants also for singular varieties \cite{2013arXiv1308.0787W,2013arXiv1308.0788W}. 
\begin{appendices}
\section{Low energy theory for D3/D7}
Let us here sketch a derivation of the low energy effective theory of the D3-D7 system 
at an orbifold point
by
studying the equations of motion  
reduced on $T^2\times{\mathcal C}$. This amounts to solve the BPS equations
\begin{eqnarray}\label{forse}
F^{(2,0)}=0,\quad \partial_A\Phi_S=0,\quad \partial_AB_i=0, \quad \partial_AI=0,\quad \partial_AJ=0\\
\omega\cdot F+[B_i,B_i^\dagger]+[\Phi_S,\Phi_S^\dagger]+I^\dagger I-JJ^\dagger=\zeta {\mathbf 1}_N
\end{eqnarray}
while we minimise the super potential
\begin{equation}\label{si}
{\mathcal W}={\rm Tr}\left\{\Phi_S\left(\left[B_1,B_2\right]+IJ\right)\right\}\, .
\end{equation}
Let us now focus in the vicinity of the orbifold point, where the local geometry of 
${\mathcal C}$ is ${\mathbb C}/{\mathbb Z}_s$ and that of $T^*{\mathcal C}$
is the ALE quotient ${\mathbb C}^2/{\mathbb Z}_s$. There the Chan-Paton bundle
of the open string modes decomposes in ${\mathbb Z}_s$-representations as already 
discussed in Section 2.
\eqref{forse} admit vortex solutions centered at the orbifold point, whose vorticity is fixed by the order of the cyclic group. On the vortex background, the gauge field along 
${\mathbb C}/{\mathbb Z}_s$ becomes massive due to the Higgs mechanism
and decouples from the low energy spectrum.

Unpacking the open strings moduli in the $V_j$ twisted sectors one gets the 
degrees of freedom in \eqref{campi37} and the relations \eqref{rels}. 
Let us now discuss how these arise.
The modes $B_1^j$ and $B_2^j$ come from the ${\mathbb Z}_s$ representation of the $B_1$ and $B_2$ fields and analogously $I^j$ and $J^j$ from $I$ and $J$.
The further degrees of freedom arise from $\Phi_S$, that is the one-form in the adjoint. 
 Since these are describing open string modes in twisted directions under the 
 ${\mathbb Z}_s$ group, the 
fields which arise from $\Phi_S$ are homomorphisms between nearby twisted sectors. 
Explicitly from the reduction of $\Phi_S$ one gets the bifundamental modes $F^j\in {\rm Hom}\left(V_j,V_{j+1}\right)$.

The BPS vacua equations of this system therefore are obtained from the reduction to the 
constant modes of \eqref{forse} and the minimization of the super potential
\begin{equation}
[B_1^j,B_2^j]+I^jJ^j=0\, , \quad B_1^jF^j-F^jB_1^{j+1}=0\, \quad B_2^jF^j-F^jB_2^{j+1}=0\, , \quad J^jF^j=0.\nonumber
\end{equation}
\section{Flags of framed torsion-free sheaves on $\mathbb P^2$}
As we already pointed out the QM partition function obtained as the trigonometric limit of our D3/D7 system computes virtual invariants of  a certain $T-$equivariant bundle over the moduli space $\mathcal N(r,n_0,\dots,n_{s-1})$ of stable representations of the quiver in figure \ref{fig:quiver_sphere}. When the quiver is two-steps it is called in the mathematical literature the {\it enhanced ADHM quiver} and the moduli space of its stable representations of type $(r,n_0,n_1)$ has been identified in \cite{flach_jardim} with the moduli space of flags of framed torsion-free sheaves on $\mathbb P^2$, $\mathcal F(r,n_0-n_1,n_1)$. These are defined as follows. Once a line $\ell_\infty\subset\mathbb P^2$ is fixed, a framed flag of sheaves consists of a triple $(E,F,\varphi)$, where $F$ is a rank$-r$ torsion free sheaf on $\mathbb P^2$, framed at $\ell_\infty$ via $\varphi:F_{\ell_\infty}\xrightarrow{\simeq}\mathcal O_{\ell_\infty}^{\oplus r}$, while $E$ is a subsheaf of $F$ such that the quotient $F/E$ is supported away from $\ell_\infty$. This triple is characterized by three numerical invariants: $r=\rk E=\rk F$, $n=c_2(F)$ and $l$ such that $c_2(E)=n+l$. The moduli space of flags of framed torsion-free sheaves on $\mathbb P^2$ is thus parametrized by these three numerical invariants, and it is denoted by $\mathcal F(r,n,l)$. Moreover, if $\mathcal M(r,n)$ denotes the moduli space of framed torsion-free sheaves on $\mathbb P^2$, $\mathcal M(r,n)\simeq\mathcal M_{r,n}$, one has that $\mathcal F(r,n,l)\hookrightarrow\mathcal M(r,n)\times\mathcal M(r,n+l)$ as an incidence variety.

These moduli spaces are of particular interest to us because of the following theorem \ref{thm:jardim}.

\begin{theorem}[von Flach-Jardim, \cite{flach_jardim}]\label{thm:jardim}
The moduli space $\mathcal N(r,n_0,n_1)\simeq\mathcal F(r,n_0-n_1,n_1)$ of stable representations of the enhanced ADHM quiver is a quasi-projective variety equipped with a perfect obstruction theory. The following $T-$equivariant complex $\mathcal C(X)$
\begin{equation}
{\footnotesize 
\begin{tikzcd}[row sep=-1mm]
& & Q\otimes\End(V_0)\\
& &\oplus & \Lambda^2Q\otimes\End(V_0)\\
& &\Hom(W,V_0) & \oplus\\
&\End(V_0) &\oplus & Q\otimes\Hom(V_1,V_0)\\
& \quad\oplus\quad \arrow[r,"d_0"] &\Lambda^2Q\otimes\Hom(V_0,W) \arrow[r,"d_1"] & \quad\qquad\oplus\qquad\quad \arrow[r,"d_2"] & \Lambda^2Q\otimes\Hom(V_1,V_0)\\
&\End(V_1) &\oplus & \Lambda^2Q\otimes\Hom(V_1,W)\\
& & Q\otimes\End(V_1) & \oplus\\
& &\oplus & \Lambda^2Q\otimes\End(V_1)\\
& &\Hom(V_1,V_0)
\end{tikzcd}
}
\end{equation}
with
\begin{displaymath}
\left\{
\begin{aligned}
&d_0(h_0,h_1)=\left([h_0,B_1^0],[h_0,B_2^0],h_0I,-Jh_0,[h_1,B_1^1],[h_1,B_2^1],h_0F-Fh_1\right) \\
&d_1(b_1^0,b_2^0,i,j,b_1^1,b_2^1,f)=\left([b_1^0,B_2^1]+[B_1^0,b_2^0]+iJ+Ij,B_1^0f+b_1^0F-Fb_1^1-fB_1^1,\right. \\
&\qquad\qquad\qquad\qquad\qquad\quad \left.B_2^0f+b_2^0F-Fb_2^1-fB_2^1,jF+Jf,[b_1^1,B_2^1]+[B_1^1,b_2^1]\right)\\
&d_2(c_1,c_2,c_3,c_4,c_5)=c_1F+B_2^0c_2-c_2B_2^1+c_3B_1^0-B_1^1c_3-Ic_4-Fc_5
\end{aligned}\right.
\end{displaymath}
encodes the structure of the perfect obstruction theory for $\mathcal N(r,n_0,n_1)$. The infinitesimal deformation space and the obstruction space at any $X$ will be isomorphic to $H^1[\mathcal C(X)]$ and $H^2[\mathcal C(X)]$, respectively. $\mathcal N(r,n_1,n_2)$ is smooth iff $n_1=1$, \cite{cheah}.
\end{theorem}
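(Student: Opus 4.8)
The plan is to reduce the statement to the standard machinery for moduli of representations of quivers with relations, combined with a Beilinson/monad description of flags of framed sheaves on $\mathbb P^2$, and then to read off the deformation theory from the resulting complex. First I would realise $\mathcal N(r,n_0,n_1)$ as a GIT quotient $\mathbb X_0\git_\chi\mathcal G$: here $\mathbb X$ is the affine space of the morphisms $(B_1^0,B_2^0,I,J,B_1^1,B_2^1,F)$, $\mathbb X_0\subset\mathbb X$ is the closed subscheme cut out by the relations \eqref{rels-hilb}, $\mathcal G=GL(V_0)\times GL(V_1)$ acts by conjugation and change of frame, and $\chi$ is the character of subsection \ref{sec:fixed}. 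Since $\mathbb X_0$ is affine and $\mathcal G$ is reductive, $\mathbb X_0\git_\chi\mathcal G$ is quasi-projective over the affine $\mathbb X_0\git\mathcal G$, which settles quasi-projectivity. The identification $\mathcal N(r,n_0,n_1)\simeq\mathcal F(r,n_0-n_1,n_1)$ I would obtain by constructing mutually inverse morphisms of moduli functors: from a framed flag $(E\subset F,\varphi)$, the Beilinson monad on $\mathbb P^2$ applied separately to $F$ and to $E$ yields linear data, with the inclusion $E\hookrightarrow F$ producing the extra map $F$ of the quiver; conversely, stable quiver data is assembled into a monad whose cohomology is a torsion-free sheaf with a distinguished subsheaf whose quotient is of finite length supported off $\ell_\infty$. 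Stability on one side matches torsion-freeness and the framing on the other exactly as in the classical ADHM correspondence of Nakajima.

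Next I would construct the tangent--obstruction complex by linearising this GIT presentation: degree $0$ is $\operatorname{Lie}\mathcal G=\End(V_0)\oplus\End(V_1)$ with $d_0$ the infinitesimal $\mathcal G$-action; degree $1$ is the tangent space $\mathbb X$ of the representation space, the twists by $Q$ and $\Lambda^2Q$ recording the equivariant weights of $B^i_{1,2}$, $I$, $J$ (with $F$ untwisted); degree $2$ is the target of the relation map with $d_1$ its derivative; and degree $3$ collects the relations among the relations, with $d_2$ its derivative. A direct check that the explicit $d_0,d_1,d_2$ in the statement are $T$-equivariant and satisfy $d_{i+1}d_i=0$ identifies this with $\mathcal C(X)$. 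The two homological inputs are: stability of $X$ forces the stabiliser in $\mathcal G$ to be trivial, so $d_0$ is injective and $H^0[\mathcal C(X)]=0$; and the costability automatic in the chosen chamber (equivalently, the dual monad has trivial cokernel) forces every syzygy among the relations to come from $d_2$, so $H^3[\mathcal C(X)]=0$. One then identifies $H^1[\mathcal C(X)]$ and $H^2[\mathcal C(X)]$ with the Yoneda $\operatorname{Ext}^1$ and $\operatorname{Ext}^2$ of the pair $(E\subset F,\varphi)$ in the natural category of framed flags, via a Beilinson spectral sequence whose $E_1$-page is precisely $\mathcal C(X)$; that $\mathbb P^2$ is a surface and that the framing sits along $\ell_\infty$ guarantee no contributions survive outside degrees $1$ and $2$. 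This identification is the main obstacle: the deformation theory of the \emph{pair} is not $\operatorname{Ext}^\bullet(E,E)\oplus\operatorname{Ext}^\bullet(F,F)$, since deforming the inclusion mixes the two and contributes a term like $\operatorname{Ext}^\bullet(E,F/E)$, and one must match the spectral sequence term by term with the explicit differentials, including checking that the costability needed to kill $H^3$ is automatic in the relevant stability chamber.

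With $\mathcal C(X)$ in hand, the perfect obstruction theory follows formally: the complex globalises to a four-term complex of trivial (hence locally free) sheaves on the smooth ambient stack $[\mathbb X/\mathcal G]$, with $d_1$ the differential of the defining equations. Because $H^0=H^3=0$ pointwise, the two-term complex obtained after suppressing the $d_0$- and $d_2$-contributions is perfect of amplitude $[-1,0]$ and carries the canonical morphism to $\tau_{\ge-1}L_{\mathcal N(r,n_0,n_1)}$ produced by the zero-locus construction, hence defines a perfect obstruction theory in the sense of Behrend--Fantechi; the virtual dimension is the alternating sum of the four ranks, and the deformation and obstruction spaces are by construction $H^1[\mathcal C(X)]$ and $H^2[\mathcal C(X)]$.

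Finally, smoothness of $\mathcal N(r,n_0,n_1)$ is equivalent to the vanishing of $H^2[\mathcal C(X)]$ at every $X$, equivalently to $\dim H^1[\mathcal C(X)]$ being constant and equal to the expected dimension. For $n_1=0$ one recovers $\mathcal M(r,n_0)$, smooth by Nakajima. For $n_1=1$ the extra datum amounts at the level of sheaves to a point of the universal subscheme together with a one-dimensional quotient, which presents $\mathcal N(r,n_0,1)$ as a smooth fibration of projective-bundle type over $\mathcal M(r,n_0)$; equivalently one checks directly that $\mathcal C(X)$ is exact in degree $2$ for all $X$ by a dimension count. For $n_1\ge2$ one produces an explicit $X$ with $H^2[\mathcal C(X)]\neq0$ --- for instance a torus-fixed point corresponding to a nested pair of partitions for which the character of $\mathcal C(X)$ contains a genuine obstruction weight --- or invokes the dimension computation of \cite{cheah} exhibiting components of $\mathcal N(r,n_0,n_1)$ of differing dimensions. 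This establishes the ``iff''.
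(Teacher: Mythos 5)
First, a caveat: the paper does not prove this statement --- it is quoted verbatim from \cite{flach_jardim} and used as a black box in the appendices --- so there is no internal proof to compare against; what follows is an assessment of your sketch on its own terms.

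Your overall architecture (GIT presentation $\mathbb X_0\git_\chi\mathcal G$ for quasi-projectivity; Beilinson monads for the identification with $\mathcal F(r,n_0-n_1,n_1)$; linearisation of action/relations/syzygies to produce the four-term complex; vanishing of $H^0$ and $H^3$ to truncate to a Behrend--Fantechi obstruction theory) is the correct strategy and is essentially that of \cite{flach_jardim}. But several steps are asserted exactly where the content lies. (i) The equivalence $\mathcal N(r,n_0,n_1)\simeq\mathcal F(r,n_0-n_1,n_1)$ is the hardest part of the cited work: one must show that the map $F\in\Hom(V_1,V_0)$ of a \emph{stable} enhanced ADHM datum induces an injection of monad cohomologies whose quotient has finite length supported off $\ell_\infty$, and conversely; ``apply Beilinson separately to $E$ and $F$'' does not by itself produce the relations $B_a^0F-FB_a^1=0$ and $JF=0$, nor match the stability chamber. (ii) Your mechanism for $H^3[\mathcal C(X)]=0$ (``costability automatic in the chamber'') is not an argument; the actual reason is that stability forces $F:V_1\to V_0$ to be injective, so that already $c_1\mapsto c_1F$ surjects onto $\Lambda^2Q\otimes\Hom(V_1,V_0)$ and hence $d_2$ is surjective. (iii) For the smoothness dichotomy, $H^2[\mathcal C(X)]\neq0$ at some $X$ does not by itself imply that the moduli space is singular (the obstruction map could vanish), so your fallback to Cheah's theorem is the load-bearing step; moreover the description of $\mathcal N(r,n_0,1)$ as a ``fibration of projective-bundle type'' over $\mathcal M(r,n_0)$ is not literally correct, since the fibres jump in dimension --- smoothness for $n_1=1$ needs the direct cohomological computation you only mention in passing. (iv) A small but real subtlety you skip: the relation $[B_1^1,B_2^1]=0$ is automatically traceless, so the degree-two term $\Lambda^2Q\otimes\End(V_1)$ carries a one-dimensional redundancy; this is exactly the $+1$ in the virtual dimension $2rn_0-rn_1+1$ of Appendix C, whereas the naive alternating sum of the four ranks of $\mathcal C(X)$ as written gives $2rn_0-rn_1$. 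None of these is a wrong turn, but as it stands the proposal is an annotated outline of \cite{flach_jardim} rather than a proof.
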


Moreover, it is shown in \cite{flach_jardim} that there exists a surjective morphism 
$$\mathfrak q:(W,\{V_i,B_1^i,B_2^i\},I,J,F)\mapsto (W,V,B_1',B_2',I',J')$$
mapping the enhanced ADHM data of type $(r,n_0,n_1)$ to the ADHM data of numerical type $(r,n_0-n_1)$. This morphism is moreover compatible with the natural forgetting morphism $\eta:\mathcal N(r,n_0,n_1)\to\mathcal M(r,n_0)$, so that we have two different maps sending the moduli space of stable representations of the enhanced ADHM quiver to the moduli space of stable representations of ADHM data. The situation is depicted by the following commutative diagram
\begin{figure}[H]
\centering
\begin{tikzcd}
\mathcal N(r,n_0,n_1)\arrow[r,"\eta"]\arrow[dr,"\mathfrak q"] & \mathcal M(r,n_0)\\
 & \mathcal M(r,n_0-n_1)\arrow[u,"\tilde f"]
\end{tikzcd}
\end{figure}
which enables us to characterize $T-$fixed points of $\mathcal N(r,n_0,n_1)$ in terms of fixed points of $\mathcal M(r,n_0)$ and $\mathcal M(r,n_0-n_1)$. Consistently with what we found in the more general case of a quiver with an arbitrary number of nodes, the fixed point locus consists of isolated non-degenerate points which can be described be couples of nested partitions $\mathcal P(n_0-n_1)\ni\mu\subseteq\nu\in\mathcal P(n_0)$.

\section{Fixed points and virtual dimension}\label{appendix:fixed_pts}

The characterization of the fixed points we described in section \ref{sec:fixed} makes it clear that the $T-$fixed locus in $\mathcal N_{r,[r^1],n,\mu}$ consists only of isolated non-degenerate points. Moreover through a simple computation it's now very easy to compute the virtual dimension of $\mathcal N(r,n_0,\dots,n_{s-1})$. Altogether these facts get summarized by the following proposition, which for the sake of simplicity we state in the simple case of the two-steps quiver.

\begin{proposition}
The $T-$fixed locus of the moduli space of nested instantons $\mathcal N(r,n_0,n_1)$ consists only of isolated non-degenerate points, which are into $1-1$ correspondence with $r-$tuples of colored nested partitions. Moreover ${\rm vd}_{\mathcal N(r,n_0,n_1)}=2rn_0-rn_1+1$.
\end{proposition}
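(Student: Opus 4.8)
\emph{Overall strategy.} I would establish the two assertions separately, reducing both to material already assembled above. The classification of the $T$-fixed points rests on Theorem~\ref{thm:jardim} together with the $T$-equivariant morphisms $\eta:\mathcal N(r,n_0,n_1)\to\mathcal M(r,n_0)$ and $\mathfrak q:\mathcal N(r,n_0,n_1)\to\mathcal M(r,n_0-n_1)$ of Appendix~B and the incidence embedding $\mathcal F(r,n_0-n_1,n_1)\hookrightarrow\mathcal M(r,n_0-n_1)\times\mathcal M(r,n_0)$; the value of the virtual dimension is the rank of the representation-ring class $T^{\mathrm{vir}}$, whose fixed-point form is displayed in \eqref{tan_rep_long}--\eqref{tan_rep_long_char}.

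\emph{The fixed locus.} I would run the standard Nakajima argument. If a representation $(W,V_0,V_1,B^0_{1,2},I,J,B^1_{1,2},F)$ as in \eqref{campi37} is $T$-fixed, then for each $t\in T$ the element of $GL(V_0)\times GL(V_1)$ undoing the action of $t$ is unique, since on a stable representation the gauge stabiliser is trivial; hence $t\mapsto g(t)$ is a homomorphism and $V_0,V_1$ acquire $T$-module structures with respect to which all the quiver maps are equivariant. Pushing forward along $\eta$ produces a $T$-fixed point of $\mathcal M(r,n_0)$, which by the classical computation is an $r$-tuple of partitions $\nu^{(1)},\dots,\nu^{(r)}$ with $\sum_a|\nu^{(a)}|=n_0$, and this determines the weights of $V_0$ box by box. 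The remaining datum $F:V_1\to V_0$ is then a $T$-equivariant map which, by the sub-sheaf structure $E\subseteq F$ underlying Theorem~\ref{thm:jardim} (equivalently by stability on the inner node), identifies $V_1$ with a $T$-submodule of $V_0$ whose complement is again a Young diagram; colour by colour this yields a nested pair $\mu^{(a)}\subseteq\nu^{(a)}$ with $\sum_a|\nu^{(a)}\setminus\mu^{(a)}|=n_1$, i.e.\ precisely the content of Proposition~\ref{prop:fixed_locus} specialised to $s=2$. Finiteness of $\mathcal N(r,n_0,n_1)^T$ is immediate from the $T$-equivariant incidence embedding into a product of two ADHM moduli spaces, each of which has isolated $T$-fixed points; non-degeneracy — that $T$ acts with no zero weight on $T^{\mathrm{vir}}_Z$ — is verified directly from \eqref{tan_rep_long_char}, since for generic $\epsilon_1,\epsilon_2$ and generic framing weights every monomial occurring there, including the leftover $T_1T_2$, is a nontrivial character of $T$.

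\emph{The virtual dimension.} For the dimension I would simply take the rank of the class $T^{\mathrm{vir}}_Z$ of \eqref{tan_rep_long} at $s=2$. The key point is that $Q-1-\Lambda^2Q$ has rank $2-1-1=0$, so the blocks $\End(V_0)\otimes(Q-1-\Lambda^2Q)$ and $[\End(V_1)-\Hom(V_1,V_0)]\otimes(Q-1-\Lambda^2Q)$ contribute nothing to the rank, and only the framing blocks and the single $T_1T_2$ survive:
\[
\mathrm{vd}=\dim\Hom(W,V_0)+\dim\Hom(V_0,W)-\dim\Hom(V_1,W)+1=rn_0+rn_0-rn_1+1=2rn_0-rn_1+1 .
\]
As a cross-check one can instead compute $-\chi$ of the four-term complex $\mathcal C(X)$ of Theorem~\ref{thm:jardim}, using that $H^0[\mathcal C(X)]=0$ by stability.

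\emph{The main obstacle.} The one genuinely delicate point is the extra $+1$. A naive termwise count of $\mathcal C(X)$ returns $2rn_0-rn_1$; the missing unit is the standard correction coming from the fact that the commutator relation $[B_1^1,B_2^1]=0$ on the inner node is automatically traceless, so that the obstruction block is one dimension too large — exactly the $(s-1)(T_1T_2)$ term inserted by hand in \eqref{tan_rep_long_char}. Pinning this down rigorously (that the trace deficiency is exactly one-dimensional, that it enters $T^{\mathrm{vir}}_Z$ with the nonzero weight $\epsilon_1+\epsilon_2$, and hence affects neither the dimension count nor the non-degeneracy statement) is the step that requires care; everything else is bookkeeping on Young diagrams. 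I would carry this out under the standing assumption $n_1\ge1$, the degenerate case $n_1=0$ reducing to $\mathcal N(r,n_0,0)\cong\mathcal M(r,n_0)$ with its usual dimension $2rn_0$.
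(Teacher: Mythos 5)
Your proposal is correct and follows essentially the same route as the paper: the fixed-point classification is the one sketched in section \ref{sec:fixed} (equivariant structure on $V_0,V_1$, reduction to ADHM data, nested coloured partitions), and the dimension is read off as the rank of $T^{\mathrm{vir}}_Z$, with the $+1$ traced to the traceless-commutator correction $(s-1)T_1T_2$. The only cosmetic difference is that you compute the rank of the character \eqref{tan_rep_long_char} directly for general $r$, whereas the paper first does a naive count (variables minus dependent constraints minus gauge symmetries) for general $r$ and then performs the character-limit computation only at $r=1$; your version is, if anything, slightly cleaner.
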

\begin{proof}
A very brief sketch of how to prove the statement about the fixed points was previously given in section \ref{sec:fixed}, so now we will only focus on computing the virtual dimension of $\mathcal N(r,n_0,n_1)$. Using the description provided by quiver \ref{fig:quiver_sphere_math} we see that the number of variables involved in the computation is $\#\textrm{var}=2n_0^2+2n_1^2+2n_0r+n_0n_1$, with $r=\dim W$. Moreover, the number of constraints we need to implement is $\#\textrm{constr}=n_0^2+n_1^2-1+n_0n_1+n_1r$, where we also took into account that the constraints are not independent. Finally we account for the fact that we take the GIT quotient by the action of $GL(n_0)\times GL(n_1)$, which contributes by $\#\textrm{symm}=n_0^2+n_1^2$. Then
\begin{equation}
`` \dim\mathcal N(r,n_1,n_2) \text{''}=\#\textrm{var}-\#\textrm{constr}-\#\textrm{symm}=2n_0r-n_1r+1.
\end{equation}
In order to directly compute the virtual dimension of the nested Hilbert scheme of points on $\mathbb C^2$, we use the character decomposition of $T_Z^{\rm vir}\mathcal N(1,n_0,n_1)$ at a generic fixed point under the torus action. Then
\begin{equation}
\begin{split}
{\rm vd}_{\mathcal N(1,n_0,n_1)}&=\lim_{T_i\to 1}\left[T_{\tilde Z}\mathcal M(1,n_0)+\sum_{i=1}^{M_1}\sum_{j=1}^{N_1}(T_1^{i-\mu_j}-T_1^{i})(T_2^{-j+\mu_i'+1}-T_2^{-j+\nu_i'+1})\right.\\
&\qquad\qquad\left.-\sum_{i=1}^{M_1}\sum_{j=1}^{\nu_i'-\mu_i'}T_1^iT_2^{j+\mu_i'}+T_1T_2\right]\\
&=2n_0-n_1+1,
\end{split}
\end{equation}
which, in the case of a smooth nested Hilbert scheme of points, coincides with the computation of \cite{flach_jardim}. A completely analogous computation can be carried out in the generic (non necessarily smooth) case, by using the character decomposition we computed for $T_Z^{\rm vir}\mathcal N(r,n_0,n_1)$, which in turn coincides with the representation of the virtual tangent space to the nested Hilbert scheme of points (when $r=1$) given in \cite{2017arXiv170108899G}.
\end{proof}
\end{appendices}
\bibliographystyle{biblio}
\bibliography{refs.bib}
\end{document}